\documentclass{article}
\input{shortcuts.sty}
\usepackage{cite}
\usepackage{amsthm}

\newtheorem{theorem}{Theorem}

\newtheorem{lemma}{Lemma}
\newtheorem{remark}[theorem]{Remark}
\newtheorem{proposition}[theorem]{Proposition}

\begin{document}

\author{Lorenzo Masoero\thanks{Department of Electrical Engineering and Computer
Science, Massachusetts Institute of Technology, \texttt{lom@mit.edu}} \and Federico Camerlenghi\thanks{Department of Economics, Management and Statistics, University of Milano-Bicocca} \and Stefano Favaro\thanks{Department of Economic and Social Sciences, Mathematics and Statistics, University of Torino} \and Tamara Broderick\thanks{Department of Electrical Engineering and Computer
Science, Massachusetts Institute of Technology}}
\title{More for less: Predicting and maximizing genetic variant discovery via Bayesian nonparametrics}
 \date{}

\maketitle

\begin{abstract}
While the cost of sequencing genomes has decreased dramatically in recent years, this expense often remains non-trivial. Under a fixed budget, then, scientists face a natural trade-off between quantity and quality: spending resources to sequence a greater number of genomes (quantity) or spending resources to sequence genomes with increased accuracy (quality). Our goal is to find the optimal allocation of resources between quantity and quality. Optimizing resource allocation promises to reveal as many new variations in the genome as possible. In this paper, we introduce a Bayesian nonparametric methodology to predict the number of new variants in a follow-up study based on a pilot study. We validate our method on cancer and human genomics data. When experimental conditions are kept constant between the pilot and follow-up, we find that our prediction is competitive with the best existing methods. Unlike current methods, though, our new method allows practitioners to change experimental conditions between the pilot and the follow-up. We demonstrate how this distinction allows our method to be used for more realistic predictions and for optimal allocation of a fixed budget between quality and quantity.
\end{abstract}


\section{Introduction} \label{sec:intro}
New genomics data promise to reveal more of the diversity, or variation, among organisms, and thereby new scientific insights. However, the process of collecting genetic data requires resources, and optimal allocation of these resources is typically a challenging task. Under a fixed budget constraint, there is often a natural trade-off between quality and quantity in genetic experiments. Sequencing genomes at a higher quality reveals more details about individual organisms' genomes but incurs a higher cost. Similarly, sequencing a greater number of genomes reveals more about variation across the population but also costs more to accomplish. It is then critical to understand how to optimally allocate a fixed budget between quality and quantity in genetic experiments, in the service of learning as much as possible from the experiment.

To maximize the amount learned from a genetic experiment, we first need to quantify a notion of ``amount learned''. Scientists use a reference genome for a species of interest in an experiment; a (genetic) \emph{variant} is any difference in an observed genome relative to the reference genome. Variants facilitate understanding of evolution \citep{10002015global, mathieson2017differences}, diversity of organisms \citep{10002015global, sirugo2019missing}, oncology \citep{chakraborty2019somatic}, and disease \citep{cirulli2010uncovering, zuk2014searching, bomba2017impact}. Thus, the number of observed variants is a concrete metric of ``amount learned'' from a genetic experiment. For optimal budget allocation, then, we first predict the number of new variants in the follow-up study under different allocations of budget with respect to quality and quantity; next we choose the experimental setting that maximizes the number of new variants.

Optimal budget allocation supports scientists who face resource constraints. In research on non-human and non-model organisms, small sequencing studies are often conducted under limited budgets \citep{da2016next}. The development of reliable and inexpensive sequencing pipelines is thus an active research area \citep{peterson2012double, souza2017efficiency, aguirre2019optimizing}. Accurate prediction of the number of new variants can also be important for understanding the site of origin of cancers as well as the clonal origin of metastasis \citep{chakraborty2019somatic}. And in precision medicine, accurate estimation of the number of new rare variants can aid effective study design and evaluation of the potential and limitations of genomic datasets \citep{momozawa2020unique, zou2016quantifying}. We detail further potential applications in microbiome research, single-cell sequencing, and wildlife monitoring in \Cref{sec:discussion}.

There exists a rich statistical literature on prediction in a follow-up study, relative to a pilot study, when conditions do not change between the pilot and follow-up. We may think of each organism as belonging to multiple groups, where each group is defined by a variant, and the goal is to discover the number of new groups in a follow-up study. A simpler special case of this formulation occurs when each organism belongs to a single group, which is referred to as a species \citep{good1956number, efron1976estimating, lijoi2007bayesian, orlitsky2016optimal}. In general, as in the context of genetic variation, organisms belong to multiple groups that we refer to as \emph{features}. Researchers have developed a wide range of approaches for predicting the number of new features, often interpreted as amount of new genetic variation, in a follow-up study. These approaches include Bayesian methods \citep{ionita2009estimating}, jackknife-based estimators \citep{gravel2014predicting}, linear programming methods \citep{gravel2014predicting, zou2016quantifying}, and variations on the classical Good-Toulmin estimator \citep{orlitsky2016optimal,chakraborty2019somatic}. To the best of our knowledge, though, no existing work provides predictions when the experimental conditions may change between the pilot and follow-up study. And thus no existing work can be used directly for optimal allocation of a fixed budget in experimental design.

Moreover, while there is existing work in other forms of optimal experimental design, it does not fit our goals here. In pioneering work, \citet{ionita2010optimal} propose how to allocate a fixed budget in a pilot study, before any data is observed. While their method treats every dataset the same, our method allows the different variation patterns in different datasets to inform the best follow-up design. 
Separately, researchers have considered how to best choose samples among a number of subpopulations \citep[e.g.][]{dumitrascu2018gt,camerlenghi2020nonparametric}. In this case, the trade-off is between uncertainty and reward, as in classic multi-armed bandit settings, rather than between quality and quantity.

In the present work, we propose a Bayesian nonparametric methodology to predict the number of new variants to be discovered in a follow-up study given observed data from a pilot study. Critically, our approach works when the experimental conditions change between the pilot and follow-up. We then demonstrate how to apply the proposed methodology for optimal budget allocation in the design of a follow-up study given data available from a pilot study. Here, for prediction, we build on a classic Bayesian nonparametric framework for feature allocations known as the beta-Bernoulli process \citep{hjort1990nonparametric, kim1999nonparametric, thibaux2007hierarchical, teh2009indian,broderick2012beta}. The posterior distributions of all our predicted quantities, such as the number of new variants to be discovered, are available in closed-form expressions. Our corresponding Bayesian estimators are simple, computationally efficient, and scalable to massive datasets. In addition, our Bayesian nonparametric framework captures realistic power-law behaviors in genetic data. We will see that, when the pilot and follow-up studies are constrained to have the same experimental setup as in previous work, our predictions are competitive with the state-of-the-art and superior to a number of recent proposals. Most importantly, though, we demonstrate that our predictions maintain their accuracy when experimental conditions change between the pilot and follow-up. Finally, we give an empirical demonstration of how our predictions can be used for designing the follow-up study with an optimal allocation of a fixed budget between quality and quantity. We validate the proposed methodology on synthetic and real data, with a focus on human genomics. Specifically, we consider the TCGA and MSK-impact datasets \citep{cheng2015memorial}, as well as the recent gnomAD dataset of \citet{karczewski2019variation}.


 \section{Data and modeling assumptions} \label{sec:model}

 Modern high-throughput sequencing technologies allow accurate determination of an organism's genome \citep{reuter2015high}. A reference genome serves as a fixed representative, and variants relative to the reference genome can take many forms, including deletions, inversions, translocations, and insertions; see \citet{taylor2004current} and references therein. In the present work, we do not distinguish between different forms of variants, though in \Cref{sec:discussion} we briefly discuss how our framework could be extended to make this distinction. To establish notation and start building up to our Bayesian nonparametric model, we first assume that the process of observing variants is flawless; we develop a more realistic model for observations in \Cref{sec:seq_err}.

 Suppose there are $J$ variants observed among the $N$ pilot genomes, $0 \le J < +\infty$, with $\aloc_j$ the label of the $j$-th variant in order of appearance. Let $\acount_{n,j}$ equal $1$ if the variant with label $\aloc_j$ is observed for the $n$-th organism; otherwise, let $\acount_{n,j}$ equal $0$. We collect data for the $n$-th organism in $\mcount_{n} := \sum_{j=1}^{J} \acount_{n,j} \delta_{\aloc_j}$, which pairs each variant observation with the corresponding variant label by putting a mass of size $\acount_{n,j}$ at location $\aloc_j$. We use the notation $\mcount_{N_1:N_2}$, where $N_1 \le N_2$, to denote $(\mcount_{N_1}, \mcount_{N_1+1}, \mcount_{N_1+2}, \ldots, \mcount_{N_2})$. Given the observable $\mcount_{1:N}$, we consider a Bayesian approach to predict the number of variants in the follow-up study. Specifically, letting $\mfreq$ be an appropriate latent parameter, we specify a generative model via a likelihood function $\PP(\mcount_{1:N}\, |\, \mfreq)$ and a prior distribution $\PP(\mfreq)$. Technically there is a fixed, and finite, upper bound on the number of possible variants established by the size of any individual genome. But this bound is usually much larger, often by orders of magnitude, than the number of observed variants. Moreover, in practice, we expect that no study of any practical finite size $N$ will reveal all possible variants, simply because some variants are so exceedingly rare. Bayesian nonparametric methods allow us to avoid hard-coding an unwieldy, large finite bound that may cause computational and modeling headaches. In particular, they allow the observed number of variants to be finite for any finite dataset and grow without bound, in such a way that computation typically scales closely with the actual number of variants observed. Formally, we imagine a countable infinity of latent variants, labelled as $\{\aloc_j\}_{j\geq1}$, and we write $\mcount_{n} := \sum_{j\geq1} \acount_{n,j} \delta_{\aloc_j}$; since $\acount_{n,j} = 0$ for all unobserved variants, this equation reduces to the previous definition of $\mcount_n$ above.

 Following existing methods for estimating new-variant cardinality \citep{ionita2009estimating, gravel2014predicting, zou2016quantifying,orlitsky2016optimal,chakraborty2019somatic}, we assume that every variant appears independently of every other variant; that is, $\acount_{n,j}$ is independent of $\acount_{n,k}$ across all $n$ for $j \ne k$. In reality, nearby positions on a genome can be highly correlated; this phenomenon is called linkage disequilibrium. However, our assumption has two principal advantages: (i) it makes our computations much easier; (ii) it is supported by our state-of-the-art empirical results in \Cref{sec:exp}. We also make the milder assumption that organisms are (infinitely) exchangeable; roughly, we assume that the order in which we observe the sample organisms is immaterial for any sample size $N$. Since, for the moment, we assume variant observation is flawless, this assumption presently translates into an exchangeability assumption on the observed data. More precisely, let $[N] := \{1,\ldots,N\}$, and let $\discount_N$ represent a permutation of $[N]$. Then, for the variant with label $\aloc_j$, for any $N$ and any $\discount_N$, we assume $\PP(\acount_{1,j}, \ldots, \acount_{N,j}) = \PP(\acount_{\discount_N(1), j}, \ldots, \acount_{\discount_N(N), j})$. Indeed, if we expected systematic variation among organisms in our population between earlier and later samples, we would find it difficult to predict future data from past data without knowing more about the nature of the variation.

 Exchangeability of $\{\acount_{n,j}\}_{n\geq1}$ implies the existence of a random variable $\afreq_j$, i.e. the variant's proportion, such that the $\acount_{n,j}$ are Bernoulli draws with parameter $\afreq_j$, independently and identically distributed across $n$ \citep{finetti31}. We pair each $\afreq_j$  with its variant's label $\aloc_j$ in a random measure $\mfreq := \sum_{j\geq1} \afreq_j \delta_{\aloc_j}$, and we assume the $\mcount_n$'s are conditionally independent and identically distributed given $\mfreq$. In addition, we make the following modeling assumptions: (i) the conditional distribution of $\mcount_n$ given $\mfreq$ is the distribution of a Bernoulli process ($\BeP$) with parameter $\mfreq$, and we write $\mcount_n\, |\, \mfreq \stackrel{iid}{\sim} \BeP(\mfreq)$; (ii) the prior distribution on $\Theta$ is the law of the three-parameter beta process ($\tBP$) \citep{teh2009indian, broderick2012beta}. In agreement with the assumption of independence for $\{\acount_{n,j}\}_{n\geq1}$ across $j$, we can interpret the three-parameter beta process as a collection of independent priors on the $\afreq_j$ such that it satisfies our goals: (G1) a finite number of observed variants in any finite sample; (G2) a number of observed variants that is unbounded as the number of samples grows. Furthermore, the three-parameter beta process is able to capture power-law behaviours \citep{teh2009indian, broderick2012beta}, which are common in physical processes. The three-parameter beta process is characterized by: (i) a mass parameter $\mass$ that scales the total number of variants observed; (ii) a discount parameter $\discount$ that controls the power-law growth in observed variant cardinality; (iii) a concentration parameter $\conc$ that modulates the frequency of more widespread variants.

 We say that the random measure $\mfreq$ is distributed as a three-parameter beta process, $\mfreq \sim \tBP(\mass, \discount, \conc)$, if $\mfreq ~=~ \sum_{j\geq 1} \theta_j \delta_{\aloc_{j}}$, with $\{\afreq_j\}$ drawn from a Poisson process with rate measure 
  \begin{displaymath} 
  	\ratemeas(\d \afreq)
  		 = \mass \frac{\Gamma(1+\conc)}{\Gamma(1-\discount)\Gamma(\conc+\discount)} \afreq^{-1-\discount} (1-\afreq)^{\conc+\discount-1} \bm{1}(\afreq \in [0,1])\d \afreq,
  \end{displaymath}
 where $\bm{1}(A)$ stands for the indicator function of the event $A$. The $\aloc_{j}$'s serve merely to distinguish the variants, so it is enough to ensure that they are all almost surely distinct. Thus we take $\aloc_{j} \stackrel{iid}{\sim} \Unif[0,1]$. The Poisson point process representation is convenient in our proofs. To meet goals G1 and G2, the three-parameter beta process hyperparameters must satisfy: $\mass > 0$, $\conc > -\discount $, and $\discount \in [0,1)$ \citep{teh2009indian, james2017bayesian, broderick2018posteriors}.


 \section{Predicting the number of new variants} \label{sec:prediction}
 %
 \subsection{Initial proposals for prediction} \label{sec:prediction_naive}

 In \Cref{sec:model} we introduced and motivated a Bayesian nonparametric model consisting of: (i) a Bernoulli process likelihood function, $\mcount_n \mid \mfreq \stackrel{iid}{\sim} \BeP(\mfreq)$, for observed variants conditioned on variants' proportions; (ii) a three-parameter beta process prior, $\mfreq \sim \tBP(\mass, \discount, \conc)$ over variants' proportions. Now we use this model to predict the number, $\countnew{N}{M}$, of new variants in a follow-up study of size $M\geq1$ after an initial pilot study of size $N\geq1$:
 \begin{displaymath}
 \countnew{N}{M} := \sum_{j\geq1} \bm{1}\left( \sum_{n=1}^{N} \acount_{n,j} = 0\right) \bm{1}\left( \sum_{m=1}^{M} \acount_{N+m,j} > 0 \right).
 \end{displaymath}
 We derive the posterior distribution of $\countnew{N}{M}$ given $\mcount_{1:N}$. So the expected value of the posterior distribution is a Bayesian nonparametric estimator of $\countnew{N}{M}$ with respect to a squared loss function. With a slight abuse of notation, for any two random variables $X$ and $Y$ defined on the same probability space we let $X\,|\,Y$ denote the random variable whose distribution coincides with the conditional distribution of $X$ given $Y$. We write $\mathcal{N}(\mu,\rho^{2})$ for a Gaussian random variable with mean $\mu$ and variance $\rho^{2}$, and we let $(a)_{b \uparrow} := \prod_{i=1}^{b} (a + i-1)$ denote the rising factorial.

 \begin{proposition}
 \label{prop:naive_new}
 Let $\mfreq \sim \tBP(\mass, \discount, \conc)$ and $\mcount_n \mid \mfreq \stackrel{iid}{\sim} \BeP(\mfreq)$ for $n =1, \ldots, N$ and $N\geq1$. Then,
 \begin{align} \label{eq:naive_new}
 	\countnew{N}{M} \mid \mcount_{1:N} &\sim \Pois \left\{ \mass \sum_{m=1}^M \frac{(\conc+\discount)_{(N+m-1) \uparrow}}{(\conc+1)_{(N+m-1) \uparrow}} \right\}.
 \end{align}
 \end{proposition}

 From \Cref{prop:naive_new}, the Bayesian nonparametric estimator of $\countnew{N}{M}$ under squared loss is
 \begin{displaymath}
 \pred{N}{M} := E \left( \countnew{N}{M} \mid \mcount_{1:N} \right)=\mass \sum_{m=1}^M \frac{(\conc+\discount)_{(N+m-1) \uparrow}}{(\conc+1)_{(N+m-1) \uparrow}}.
 \end{displaymath}
 $\pred{N}{M}$ predicts the number of new variants in a follow-up study.  In the next result we show that the distribution of $\countnew{N}{M}\mid \mcount_{1:N} $ exhibits almost-sure power-law growth in the sample size $N$ with power determined by the three-parameter beta process hyperparameters. We also characterize asymptotic noise around the posterior predictive mean. See \Cref{sec:app-proofs} for proofs of \Cref{prop:naive_new} and  \Cref{prop:asymptotic}.

 \begin{proposition}
 \label{prop:asymptotic}
 Under the setting of \Cref{prop:naive_new},
 \begin{align}
 	\frac{ \countnew{N}{M} }{M^{\discount}}\, \bigg\vert\, \mcount_{1:N}\, \stackrel{\text{a.s.}}{\longrightarrow}\, \xi \quad \textrm{ as } M \rightarrow +\infty, \label{eq:slln}
 \end{align}
 where $ \xi := \frac{\mass}{\discount}\frac{\Gamma(\conc+1)}{\Gamma(\conc+\discount)}$. The \Cref{eq:slln} limit holds almost surely, conditionally given $\mcount_{1:N}$. Also,
 \begin{align}
 	\sqrt{M^{\discount}} \left( \frac{\countnew{N}{M}}{M^{\discount}} - \xi \right)\, \bigg\vert\, \mcount_{1:N}\, \stackrel{\text{d}}{\longrightarrow}\, \mathcal{N} (0, \xi) \quad \textrm{ as } M \rightarrow +\infty, \label{eq:clt}
 \end{align}
 where the limit in \Cref{eq:clt} holds true in distribution.
 \end{proposition}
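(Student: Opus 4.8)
The plan is to reduce everything to the explicit conditional law of \Cref{prop:naive_new} and then invoke classical large-sample results for Poisson variables. Write $\Lambda_M := \mass\sum_{m=1}^M \frac{(\conc+\discount)_{(N+m-1)\uparrow}}{(\conc+1)_{(N+m-1)\uparrow}}$, so that $\countnew{N}{M}\mid\mcount_{1:N}\sim\Pois(\Lambda_M)$. The first step is to pin down the asymptotics of $\Lambda_M$. Using $(a)_{b\uparrow}=\Gamma(a+b)/\Gamma(a)$ and the standard ratio estimate $\Gamma(t+a)/\Gamma(t+b)=t^{a-b}\left(1+O(1/t)\right)$, each summand equals $\mass\frac{\Gamma(\conc+1)}{\Gamma(\conc+\discount)}\cdot\frac{\Gamma(N+m+\conc+\discount-1)}{\Gamma(N+m+\conc)}=\mass\frac{\Gamma(\conc+1)}{\Gamma(\conc+\discount)}\,m^{\discount-1}\left(1+O(1/m)\right)$. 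Comparing the resulting sum with $\int_1^M x^{\discount-1}\,\d x$ and using $\discount\in(0,1)$ to control the remainders (the $O(1/m)$ correction summed against $m^{\discount-1}$ is $O(1)$ since $\discount-2<-1$), I obtain the refined estimate $\Lambda_M=\xi M^{\discount}+O(1)$ with $\xi=\frac{\mass}{\discount}\frac{\Gamma(\conc+1)}{\Gamma(\conc+\discount)}$; in particular $\Lambda_M/M^{\discount}\to\xi$ and $(\Lambda_M-\xi M^{\discount})/M^{\discount/2}\to 0$.

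For the almost sure limit \eqref{eq:slln} I would exploit that $M\mapsto\countnew{N}{M}$ is nondecreasing (directly from \eqref{eq:count_new}) and sandwich along a geometric subsequence. Fix $\rho>1$ and set $M_k=\lceil\rho^k\rceil$. Since $\Lambda_{M_k}\sim\xi\rho^{k\discount}$ grows geometrically and $\Pois(\lambda)$ obeys a Chernoff bound $\PP(|\Pois(\lambda)-\lambda|>\varepsilon\lambda)\le 2e^{-\lambda h(\varepsilon)}$ with $h(\varepsilon)>0$ for $\varepsilon\in(0,1)$, the sum $\sum_k 2e^{-\Lambda_{M_k}h(\varepsilon)}$ converges, so Borel--Cantelli (applied under the conditional law, with the deterministic $\Lambda_{M_k}$) gives $\countnew{N}{M_k}/\Lambda_{M_k}\to 1$, hence $\countnew{N}{M_k}/M_k^{\discount}\to\xi$, almost surely; intersecting over $\varepsilon\in\{1/\ell\}_{\ell\ge1}$ makes this exact. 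Monotonicity then traps $\countnew{N}{M}/M^{\discount}$ between $\countnew{N}{M_k}/M_{k+1}^{\discount}$ and $\countnew{N}{M_{k+1}}/M_k^{\discount}$ for $M_k\le M\le M_{k+1}$, yielding $\xi\rho^{-\discount}\le\liminf\le\limsup\le\xi\rho^{\discount}$; letting $\rho$ run through a sequence decreasing to $1$ and intersecting the corresponding almost-sure events finishes \eqref{eq:slln}. (Equivalently, conditionally on $\mcount_{1:N}$ the increments $\countnew{N}{M}-\countnew{N}{M-1}$ are \emph{independent} $\Pois(\mu_m)$ variables, $\mu_m=\mass\frac{(\conc+\discount)_{(N+m-1)\uparrow}}{(\conc+1)_{(N+m-1)\uparrow}}$, by the marking theorem applied to the posterior Poisson process of frequencies of not-yet-observed variants; then Kolmogorov's strong law applies because $\sum_m \Var(\text{increment}_m)/m^{2\discount}\asymp\sum_m m^{-1-\discount}<\infty$.)

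For the fluctuation result \eqref{eq:clt}, decompose
\begin{displaymath}
\sqrt{M^{\discount}}\Big(\frac{\countnew{N}{M}}{M^{\discount}}-\xi\Big)
=\frac{\countnew{N}{M}-\Lambda_M}{\sqrt{\Lambda_M}}\cdot\frac{\sqrt{\Lambda_M}}{M^{\discount/2}}
+\frac{\Lambda_M-\xi M^{\discount}}{M^{\discount/2}}.
\end{displaymath}
The last term tends to $0$ by the step-one estimate. Since $\Lambda_M\to\infty$, the classical central limit theorem for Poisson laws gives $(\countnew{N}{M}-\Lambda_M)/\sqrt{\Lambda_M}\stackrel{\mathrm{w}}{\longrightarrow}\mathcal N(0,1)$ under the conditional law, while $\sqrt{\Lambda_M}/M^{\discount/2}=\sqrt{\Lambda_M/M^{\discount}}\to\sqrt{\xi}$; Slutsky's lemma then delivers $\mathcal N(0,\xi)$.

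The routine parts are the Stirling/sum--integral estimate of $\Lambda_M$ and the Poisson CLT. The one place to be careful is the almost-sure statement: since \eqref{eq:slln} concerns the \emph{process} $\{\countnew{N}{M}\}_{M\ge1}$ under the conditional law, I must use a genuine coupling across $M$ — the monotone one built into \eqref{eq:count_new} — rather than just the one-dimensional marginals of \Cref{prop:naive_new}; the geometric-subsequence sandwich (or, alternatively, the conditional independence of increments coming from the posterior Poisson structure of the beta process) is exactly what makes this rigorous.
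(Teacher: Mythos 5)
Your proposal is correct, and it reaches both limits by a route that differs in its technical machinery from the paper's, even though the first step coincides: your estimate $\Lambda_M=\xi M^{\discount}+O(1)$ is exactly the content of the paper's preliminary Lemma on $\frac{1}{M^{\discount}}\sum_{m=1}^M\Gamma(\conc+N+m-1+\discount)/\Gamma(\conc+N+m)$, proved there by the same Gamma-ratio expansion and sum--integral comparison. For the almost sure limit, the paper decomposes $\countnew{N}{M}=H_N^{(1)}+\cdots+H_N^{(M)}$ into conditionally independent Poisson increments (this independence comes from the thinning construction in the proof of \Cref{prop:naive_new}) and applies the Kolmogorov variance criterion $\sum_m \Var_N(H_N^{(m)})/m^{2\discount}<\infty$ together with Kronecker's lemma --- which is precisely your parenthetical alternative; your primary route (monotonicity of $M\mapsto\countnew{N}{M}$, Poisson Chernoff bounds along a geometric subsequence, Borel--Cantelli, then the $\rho\downarrow 1$ sandwich) is a valid substitute and is in fact closer in spirit to how the paper proves the analogous a.s.\ statement in \Cref{prop:asymptotic_numtimes}, where a subsequence-plus-monotonicity argument is used. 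Your route has the advantage of needing only the one-dimensional marginals from \Cref{prop:naive_new} plus the built-in monotone coupling, whereas the paper's route needs the extra structural fact that the increments are independent Poisson; the paper's route in exchange avoids exponential tail bounds and gives the result along the full sequence directly. For the CLT, the paper computes the conditional characteristic function of the centered, scaled variable explicitly and expands it using the Lemma, while you instead write the statistic as $\frac{\countnew{N}{M}-\Lambda_M}{\sqrt{\Lambda_M}}\cdot\sqrt{\Lambda_M/M^{\discount}}+(\Lambda_M-\xi M^{\discount})/M^{\discount/2}$ and invoke the elementary Poisson CLT plus Slutsky; the two are equivalent in substance (both reduce to $\Lambda_M=\xi M^{\discount}+o(M^{\discount/2})$), but your modular version sidesteps the explicit exponent expansions, which in the paper's displayed computation are carried with somewhat loosely written error terms. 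You were also right to flag that the a.s.\ claim is about the process law rather than the marginals; your handling of that point (monotone coupling, or independent increments) is exactly what is needed and matches the paper's implicit use of the same structure.
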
 

  Besides $\countnew{N}{M}$, researchers may be interested in relatively rare new variants since rare variants are known to play a role in disease predisposition \citep{cirulli2010uncovering, saint2014important, bomba2017impact}. In particular, let $\countnewtimes{N}{M}{r}$ denote the number of new variants that occur exactly $r$ times in the follow-up study, and let $\countnewtimes{N}{M}{\le R}$ denote the number of new variants that occur at most $R$ times in the follow-up study. A suitably chosen small value of $r$ or $R$ encodes a notion of rareness for variants. See \Cref{prop:naive_new_num_times} and \Cref{prop:asymptotic_numtimes} for a characterization of the posterior distributions of $\countnewtimes{N}{M}{r}$ and $\countnewtimes{N}{M}{\le R}$ given $\mcount_{1:N}$.

 Our propositions reveal key attributes of our Bayesian nonparametric estimators. First and foremost, the posterior distribution of $\countnew{N}{M}$ depends on $\mcount_{1:N}$ only via the initial sample size $N$. See \Cref{eq:naive_new_num_times} for similar behavior in the posterior distributions of $\countnewtimes{N}{M}{r}$ and $\countnewtimes{N}{M}{\le R}$. Moreover, from \Cref{prop:asymptotic} we see that the large-$M$ asymptotic behavior of the posterior distribution of $\countnew{N}{M}$ is completely determined by hyperparameters of the three-parameter beta process; see \Cref{prop:asymptotic_numtimes} for similar behavior in the posterior distributions of $\countnewtimes{N}{M}{r}$ and $\countnewtimes{N}{M}{\le R}$. Therefore, learning hyperparameters of the three-parameter beta process from the observed data is critical. In \Cref{sec:empirics}, we  propose an empirical Bayes procedure to this end.

 Our Bayesian nonparametric approach above, like existing approaches for estimating the number of new variants in a follow-up study \citep{ionita2009estimating, gravel2014predicting, zou2016quantifying,orlitsky2016optimal,chakraborty2019somatic}, relies on the assumption that variants are always observed under the same conditions. Moreover, none of these methods account for how improved variant observation quality may incur a larger cost. But conditions may change between pilot and follow-up experiments, and these changes may be informed by an experimental budget. We address these issues below. In \Cref{sec:seq_err}, we show how our general Bayesian nonparametric framework can be adapted to the case where variants are not observed perfectly; in fact, we show how we can adapt to different experimental conditions between the pilot and follow-up. Then, in \Cref{sec:opt_design}, we build on the work of \citet{ionita2010optimal} to show that
 we can optimize for the best conditions, to yield the most variants, in the follow-up. That is, we next consider the challenging problem of optimal allocation of a fixed budget between quality and quantity in genomic experiments: spending resources for sequencing a greater number of genomes (quantity) or spending resources for sequencing with increased accuracy (quality).

 \subsection{Accounting for sequencing errors} \label{sec:seq_err}

 We extend the Bayesian nonparametric estimator introduced in \Cref{sec:prediction_naive} to account for non-trivial sequencing error. In \Cref{sec:prediction_naive} we have assumed that if any organism exhibits a variant, that variant is detected, i.e., $\acount_{n,j} = 1$ for organism $n$. However, in practice, sequencing a genome is a complex and noisy process. Millions of reads of fragments of the same genomic sequence need to be aligned and compared to the reference genome. Every position $j$ of the genome of individual $n$ is read a random number $\depth_{n,j}$ of times. $\depth_{n,j}$ is the (random) sequencing depth of the process. Out of these $\depth_{n,j}$ times, $\depth_{n,j,\text{err}}$ reads give rise to an error, due to technological imperfections, and are discarded. Here, $0 \leq \depth_{n,j,\text{err}} \leq \depth_{n,j}$. The remaining $\depth_{n,j,\text{noerr}} = \depth_{n,j} - \depth_{n,j,\text{err}}$ reads are correctly processed, aligned to the reference genome, and recorded \citep{ionita2010optimal}.  Every error-free read can either agree with the reference genome, or disagree. We let $C_{n,j} \in \{0,1,\dots, \depth_{n,j,\text{noerr}}\}$ denote the number of times that reads are correctly processed and we observe disagreement with the reference genome. Finally, a variant is said to be called whenever some discrepancy criterion, i.e. the variant calling rule, is satisfied.

 Following \citet{ionita2010optimal}, we focus on simple threshold variant calling rules. That is, a variant is called whenever a sufficient number of reads disagree with the reference genome. Given the threshold value $\threshold>0$, variation is declared if the count $C_{n,j}$ exceeds $\threshold$, i.e.\ $\acount_{n,j} = \bm{1}(C_{n,j} \geq \threshold)$. This threshold variant calling rule is a simplification of actual variant callers used in modern genomic pipelines; see e.g.\ \citep{xu2018review} for a review of variant calling algorithms. While simplistic, the threshold rule has the benefit of being easy to interpret; by contrast, state-of-the-art alternatives are much more complex, to the point of being somewhat inscrutable by their users. In fact, understanding how to tailor the variant calling rule to the data-gathering process is itself an active area of research \citep{hwang2015systematic,cornish2015comparison,kumaran2019performance}.

 In setting up our model to account for sequencing error, we make the following additional assumptions. (i) Following standard practice in the genetics literature  \citep[e.g.,][]{lander1988genomic, ionita2010optimal,sampson2011efficient}, we assume that the sequencing depth $\depth_{n,j}$ is a Poisson random variable with parameter $\lambda$, which we refer to as the sequencing quality. (ii) The reads are independent and identically distributed across individuals and positions. (iii) $p_{\text{err}}$ is a fixed probability  of reading error that depends on the sequencing technology. (iv) Conditionally on  $\depth_{n,j}$ total reads, the number of error-free reads $\depth_{n,j,\text{noerr}}$ is a binomial random variable, with $\depth_{n,j}$ as the number of trials and $1-p_{\text{err}}$ as the probability of success in a trial. Under these assumptions (i) -- (iv), as showed in \Cref{proof:poisson_thinning} in \Cref{sec:poisson_thinning}, the probability of obtaining at least $\threshold$ successful reads at any position $j$ for any individual $n$ is
 \begin{align}
 	\probcall(\lambda, \threshold, p_{\text{err}})
 		&:=  \sum_{t \geq \threshold} \frac{e^{-\lambda} \lambda^t}{t!}
 			\sum_{i=\threshold}^t \binom{t}{i} (1-p_{\text{err}})^i p_{\text{err}}^{t-i} =
 		\sum_{t\geq \threshold} \frac{e^{-\lambda(1-p_{\text{err}})}\{\lambda(1-p_{\text{err}})\}^t}{t!}. \label{eq:poisson_thinning}
 \end{align} 
 We still assume $\mfreq \sim \tBP(\mass, \discount, \conc)$ for the prior distribution over variant proportions. As in \Cref{sec:prediction_naive}, we draw whether organism $n$ has variant with proportion $\afreq_j$ according to $\Bern(\afreq_j)$. If the organism does have the variant, we now draw whether we observe the variant according to $\Bern(\probcall)$, with $\probcall=\probcall(\lambda, \threshold, p_{\text{err}})$.  Hence, the probability of declaring the presence of variant $j$ is now given by  $\PP(C_{n,j} ~\geq ~\threshold ~\mid \Theta) ~=~ \theta_j \probcall$.

 Observe that $\probcall$ is modulated by the parameter $\lambda$, which controls the sequencing depth and can be set by the practitioner. \citet{ionita2010optimal} considered a setting with a single study, where that study is yet to be run. In this section, unlike the work of \citet{ionita2010optimal}, we assume that we have access to data from a pilot study when designing a follow-up study. We use subscripts to denote potentially different values of $\probcall$ across experiments. For instance, the practitioner may choose a sequencing depth in the follow-up study that is different from the sequencing depth in the pilot study. Hence we write $\probcallinit = \probcall(\lambda_{\text{pilot}},\threshold,p_{\text{err}})$ for the pilot experiment and $\probcallfollowup = \probcall(\lambda_{\text{follow}},\threshold,p_{\text{err}})$ for the follow-up. Our methods can be immediately extended to the case where there are multiple initial experiments with different $\probcall$ values.
 \begin{proposition}
  \label{prop:noise_new}
 Let $\mfreq \sim \tBP(\mass, \discount, \conc)$,  that is $\mfreq := \sum_{j\geq1} \afreq_j \delta_{\aloc_j}$. Furthermore, let $\mcount_n \mid \mfreq \stackrel{iid}{\sim} \BeP(\mfreq_{\text{pilot}})$, where $\mfreq_{\text{pilot}}:=\sum_{j\geq1} \probcallinit\theta_j\delta_{\psi_j}$, for $n=1,\ldots,N$ and $N\geq1$, and let $\mcount_{N+m} \mid \mfreq \stackrel{iid}{\sim} \BeP(\mfreq_{\text{follow}})$, where $\mfreq_{\text{follow}}:=\sum_{j\geq1} \probcallfollowup\theta_j\delta_{\psi_j}$, for $m=1,\ldots,M$ and $M\geq1$. Then,
 \begin{align} \label{eq:noise_new}
 	\countnew{N}{M} \mid \mcount_{1:N}
 		&\sim \Pois (\gamma),
 \end{align}
 with $\gamma :=  \mass \probcallfollowup \sum_{m=1}^M E\{ (1-\probcallfollowup B)^{m-1} (1-\probcallinit B)^N \}$ and $B \sim \Betadist(1 - \discount, \conc+\discount)$.
 \end{proposition}
 The expected value of the posterior distribution in \Cref{prop:noise_new} provides a Bayesian nonparametric estimator, with respect to a squared loss function, of $\countnew{N}{M}$. Namely, this estimator is
 \begin{displaymath}
 \mass \probcallfollowup \sum_{m=1}^M E\{ (1-\probcallfollowup B)^{m-1} (1-\probcallinit B)^N \},
 \end{displaymath}
 where $B \sim \Betadist(1 - \discount, \conc+\discount)$. This new estimator extends \Cref{sec:prediction_naive} to the case where sequencing error is taken into account. We defer the proof of \Cref{prop:noise_new} to \Cref{sec:app-proofs}.


 \section{Empirics for the prediction} \label{sec:empirics} 

 Our more realistic model of variant observation sets up a prediction framework for the number of new variants in a follow-up experiment. But without further development, we still face the difficulty that our predictor from \Cref{eq:naive_new} does not use any information about the pilot experimental data except its cardinality. Recall that the hyperparameters $\mass, \discount, \conc$ control the behavior of the estimator (\Cref{prop:asymptotic}). So we will induce a dependency on the observed pilot data by fitting these hyperparameter values to the pilot data. One common approach in empirical Bayes is to maximize the probability of the data given the hyperparameters: $\argmax_{\mass, \discount, \conc} \PP(\mcount_{1:N} | \mass, \discount, \conc)$ with $\PP(\mcount_{1:N} | \mass, \discount, \conc) = \int_{\mfreq} \PP(\mcount_{1:N} | \mfreq) \PP(d\mfreq | \mass, \discount, \conc)$. In the case without sequencing errors, this probability can be expressed in closed form as the exchangeable feature probability function (EFPF) \citep{broderick2013feature}. However, with sequencing errors, the integral can be very high-dimensional and expensive to compute with Markov chain Monte Carlo. Moreover, even without sequencing errors, the exchangeable feature probability function for the beta process is a complex function of sums, products, quotients, and exponentiation of gamma functions  \citep[Eq.\ 8]{broderick2013feature}, which we find can lead to numerical instability in the optimization.

 An easier choice is to treat the prediction from our model as a regression function with its own parameters $\mass, \discount, \conc$. We can fit these parameters to the pilot project data by imagining subsets of the true pilot data as mini-pilot projects themselves and directly minimizing error in prediction on the remaining pilot data. In particular, consider index $n \in [N]$ as the size of the imagined mini-pilot. Then, by our earlier definition, $\pred{n}{m}$ is the prediction for the number of new variants in the next $m$ data points given the first $n$ data points. Here we write $\pred{n}{m}(\mass, \discount, \conc)$ to emphasize the hyperparameter dependence. Let $\countnew{n}{m} \mid \mcount_{1:N}$ be the true number of new variants in the next $m$ data points (for $m$ such that $n+m \le N$) given the first $n$ data points. Then we solve
 \begin{equation}
 	\estmass, \estdiscount, \estconc := \argmin_{\substack{\mass, \discount, \conc: \\ \mass >0, \; \discount \in [0,1), \conc > - \discount}}
 		\sum_{m=1}^{N-n} \left\{ \pred{n}{m}(\mass, \discount, \conc) - \left( \countnew{n}{m} \mid \mcount_{1:N} \right) \right\}^2. \label{eq:cost}
 \end{equation}
 We set $n = \lfloor 2/3\times N \rfloor$, a choice that works well across all applications we consider here. To find $\estmass, \estdiscount, \estconc$ we use the differential evolution algorithm \citep{storn1997differential}. We also considered using multiple folds of the pilot study, in the style of cross validation, instead of a single train-test split. In our experiments, we did not observe a noticeable difference between our proposal in \Cref{eq:cost} and this more-involved procedure. We choose to minimize the 2-norm, but \Cref{eq:cost} can be straightforwardly adapted for other standard choices of error (e.g., 1-norm). Finally, we use $\estpred{N}{M} := \pred{N}{M}(\estmass, \estdiscount, \estconc)$ as our estimator for the number of new variants in the follow-up study of size $M$ after observing data from the pilot study of size $N$.


 \section{Sequencing errors and optimal experimental design} \label{sec:opt_design}

 Our goal is to maximize the number of variants we expect to observe under a fixed budget. To see how the budget comes into play, note two cost sources in the follow-up study. (i) It costs more to increase the number of samples $M$ since sequencing each additional sample adds an additional cost. (ii) Likewise, it costs more to increase the quality of each sample, where increasing quality is accomplished by increasing the sequencing quality in the followup, $\seqfollowup$. We might encode the total cost as a function of these settings: $\costfcn(M, \seqfollowup)$. Here, $\costfcn$ is increasing in both of its arguments. Conversely, we expect to discover more variants as either of $M$ or $\seqfollowup$ increases and fewer variants as either quantity decreases. Therefore, we face a trade-off in where to best allocate experimental budget between $M$ and $\seqfollowup$.

 Our framework allows us to precisely quantify and optimize this trade-off. In particular, we now emphasize the dependence of $\estpred{N}{M}$ on $\seqfollowup$, via $\probcallfollowup$, by writing $\estpreddepth{N}{M}{\seqfollowup}$ for $\estpred{N}{M}$ computed with $\seqfollowup$. Since we can compute $\estpreddepth{N}{M}{\seqfollowup}$ across values of $M$ and $\seqfollowup$ using \Cref{eq:noise_new}, we can optimize to find the maximum possible predicted variants under some budget $\budget$. We are interested in the experimental settings under which this maximum is achieved:
 \begin{equation} \label{eq:opt}
 	\argmax_{M, \seqfollowup} \estpreddepth{N}{M}{\seqfollowup} \quad \textrm{ subject to } \quad \costfcn(M, \seqfollowup) \le \budget.
 \end{equation}
 To the best of our knowledge, no previous methods \citep{ionita2009estimating, ionita2010optimal,gravel2014predicting, zou2016quantifying,orlitsky2016optimal,chakraborty2019somatic} have been designed or modified to predict variants under different experimental conditions in a follow-up study given results from a pilot. We believe the Bayesian nonparametric framework we adopt here allows particularly straightforward handling of different sequencing depths, and more generally different experimental setups. Notably, \citet{ionita2010optimal} consider experimental design, but only for a single future study, without observing any pilot data. Given its Bayesian grounding, their associated estimator might be adapted to our pilot and follow-up framework using similar techniques to those we introduce above. But we will see in \Cref{sec:exp} that the quality of their estimator is much worse than that of our method; any corresponding experimental design would therefore suffer. We suspect our gains are due to the flexibility of the Bayesian nonparametric framework and ability to capture power laws in the data. 

 Note that practitioners might instead be interested in maximizing the number of new rare variants in the follow-up study, i.e. variants that appear at most $R$ times in the follow-up sample. In this case, we can still apply empirical Bayes estimates of hyperparameters $\estmass, \estdiscount, \estconc$ obtained via \Cref{eq:cost}. In particular, let $\estpredtimesdepth{N}{M}{\le R}{\seqfollowup}$ be the Bayesian nonparametric estimator of the number of new rare variants, with hyperparameter values set to $\estmass, \estdiscount, \estconc$ and follow-up sequencing quality set to $\seqfollowup$. Then, to maximize the number of new rare variants, we solve the optimization problem in \Cref{eq:opt} with $\estpredtimesdepth{N}{M}{\le R}{\seqfollowup}$ in place of $\estpreddepth{N}{M}{\seqfollowup}$. We highlight that we still suggest learning the hyperparameters $\estmass, \estdiscount, \estconc$ via the original optimization problem, with the predictor $\estpreddepth{N}{M}{\seqfollowup}$ of all new variants. We make this recommendation since rare variants may be sparser in the pilot study and thereby provide less information about these hyperparameters.


 \section{Experiments} \label{sec:exp}

 \subsection{Experimental setup}
 \label{sec:exp_setup}

 We evaluate our methods on both synthetic and real data. Code is available at \url{https://bitbucket.org/masoero/moreforless_bayesiandiscovery/src/master}. For real data, we use human cancer genomics datasets. In cancer genomics, rare variants may be useful in developing effective clinical procedures and understanding cancer biology, and researchers have recognized the importance of appropriate sequencing depth in the data-gathering process \citep{griffith2015optimizing,rashkin2017optimal}. Following the setup of \citet{chakraborty2019somatic}, we consider the Cancer Genome Atlas (TCGA), a large and publicly available cancer genomics dataset. It contains somatic mutations from $N = 10{,}295$ patients and spans $33$ different cancer types. See \Cref{sec:app_TCGA_msk} for more details on the data. In what follows, we show that our method produces accurate predictions when the sequencing depth is kept constant (\Cref{sec:exp_noerror}); we show it is the only method that can produce accurate predictions under changing conditions (\Cref{sec:exp_different}) and the only method that can inform optimal design of experiments (\Cref{sec:exp_opt_d}).  In \Cref{sec:app_TCGA_msk} we  report additional cancer genomics results, including with the MSK-impact database, a targeted sequencing study also used by  \citet{chakraborty2019somatic}.

 In \Cref{sec:app-exp_gnomAD}, we report results for the Genome Aggregation Database \citep{karczewski2019variation}, a recent extension of the Exome Aggregation Consortium data set \citep{lek2016analysis} and the largest publicly available human genomic dataset.  We include additional experiments on synthetic data in \Cref{sec:app-exp_additional}, to illustrate when and why different methods may fail. 

 \subsection{Prediction with no sequencing errors} \label{sec:exp_noerror}

 \begin{figure}[t!]
       \centering \includegraphics[width=\textwidth,height=\textheight,keepaspectratio]{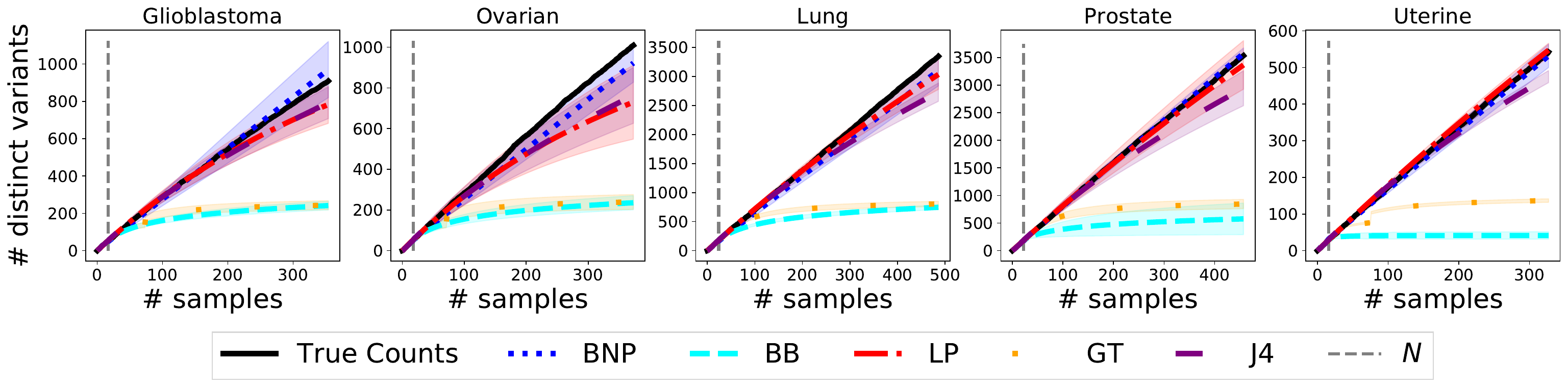}
 \caption{TCGA dataset: predicting the number of new variants. The true number of distinct variants (black) is compared to: our method (blue, BNP); Bayesian parametric (cyan, BB); linear program (red, LP); Good-Toulmin (orange, GT); 4th order jackknife (purple, J4). Shaded regions represent one standard deviation.}
 \label{fig:prediction_all}
 \end{figure}
 \begin{figure}[t!]
       \centering \includegraphics[width=\textwidth,height=\textheight,keepaspectratio]{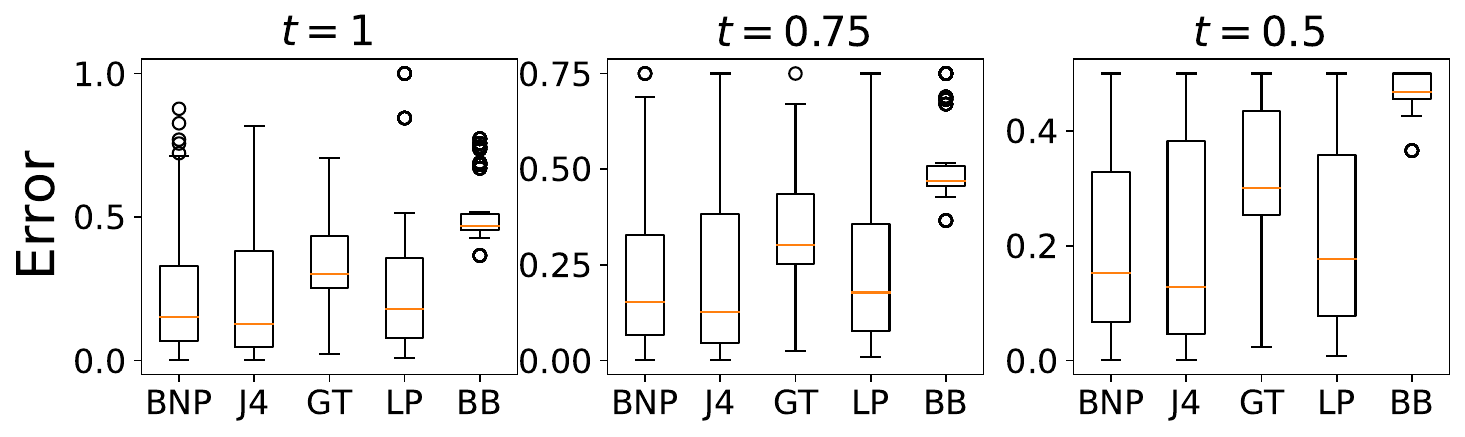}
 \caption{Trimmed percent prediction error on the TCGA data across all $33$ cancer types and $20$ folds, for different trimming thresholds $t$ (\Cref{eq:loss_ape}). We compare our method (BNP) to Bayesian parametric (BB), linear program (LP), Good-Toulmin (GT), 4-th order jackknife (J4).}
 \label{fig:prediction_same_conditions_boxplot}
 \end{figure}

 Researchers have developed several approaches for predicting the number of new variants in a follow-up study under the assumption of perfect recovery of variants: e.g., parametric Bayesian methods \citep{ionita2009estimating}, linear programming methods \citep{gravel2014predicting, zou2016quantifying}, a harmonic jackknife \citep{gravel2014predicting}, and a smoothed version of the classic Good-Toulmin estimator \citep{chakraborty2019somatic}. To assess the prediction error under constant sequencing conditions, we focus on the TCGA dataset. We partition the samples in the dataset into $33$ datasets according to cancer-type annotation of each patient.  For each cancer type, we predict the number of new variants that will be observed in a follow-up sample given a pilot sample. To do so, we use an approach akin to cross validation; namely we treat each cancer type as a dataset. We divide the dataset into 20 folds of equal size; we consider each fold in turn as a pilot study and treat the remaining folds as the follow-up. A smaller number of folds corresponds to a larger pilot study. All methods improve when the pilot study is increased substantially in size, i.e.\ when there is more information in the pilot. We find that the choice of 20 folds creates a challenging scenario with a small amount of pilot information. Nonetheless, our method, the harmonic jackknife, and the linear program all still perform well in these conditions.

 We follow \citet{zou2016quantifying} to visualize our results for five cancer types in \Cref{fig:prediction_all}; namely, we plot the number of total predicted variants, averaged across folds, as a function of total data points (pilot plus follow-up). A vertical dashed line marks the pilot size; non-trivial predictions are to the right of this line. Shaded regions indicate one empirical standard deviation, measured across the folds.
 \Cref{fig:prediction_all} demonstrates that our predictor matches the true number of variants much more closely than the parametric Bayesian method and smoothed Good-Toulmin estimator. In this case without sequencing error, our method has roughly the same performance as the harmonic jackknife and linear programming.

 To more directly compare performance of the methods across all 33 cancer types, we calculate the error of each method across all types and all folds within each type; see \Cref{fig:prediction_same_conditions_boxplot}. More precisely, for each of the $33$ cancer types and for each of the $20$ folds we compute the trimmed absolute percentage prediction error incurred by the five methods at the largest possible extrapolation value. See \Cref{eq:loss_ape} in \Cref{sec:app_TCGA_msk}. In \Cref{fig:prediction_same_conditions_boxplot}, we summarize these $33*20 = 660$ error values for each method in a boxplot. Lower errors are better. We find that our Bayesian nonparametric methods performs similarly to the linear programming method and to the harmonic jackknife. Our method outperforms the smoothed Good-Toulmin estimator and the parametric Bayesian approach. In \Cref{sec:app_TCGA_msk}, we also follow \citet{chakraborty2019somatic} and run an experiment with an entirely separate pilot and follow-up study. In terms of comparison among estimators, these additional experiments lead to similar conclusions. 

 We performed additional experiments to better understand how our method compares to existing methods. In \Cref{sec:app-exp_additional_bnp} we run both the Bayesian parametric approach and our method on data simulated (a) under the parametric Bayesian model used by \citet{ionita2009estimating} and (b) under our own 3-parameter beta process model. We find that the approach of \citet{ionita2009estimating} works well with data simulated from their model but poorly with the three-parameter beta process data. Our results suggest that the parametric Bayesian method \citep{ionita2009estimating} struggles with data exhibiting power laws, which we expect in real life.

 While the method of \citet{zou2016quantifying} performs well in our experiments above, we found serious numerical issues in other cases. In particular, \citet{zou2016quantifying} exploits a linear programming approach to estimate rare variant proportions; the authors approximate proportions of common variants with the corresponding empirical frequencies. The authors define a variant as ``rare'' if it has frequency less than $\kappa/100$, for a user-defined threshold $\kappa \in (0,100)$, interpreted as a percent. In practice, we found that the output of the algorithm is very sensitive to the choice of $\kappa$; see \Cref{sec:app-exp_additional_zou}. The authors suggest $\kappa = 1$ as a default setting, but we observed numerical instability and poor predictive performance for this value. This observation holds especially when the pilot size $N$ is small, which we believe to be a particular case of interest in designing experiments for further data collection (i.e., for the follow-up study). For instance, we expect the small-$N$ case to arise in the study of non-model organisms \citep{russell2017non}. In \Cref{fig:prediction_all}, we chose $\kappa = 20$, which led to convergence of the optimization algorithm in all cases. We explore other values of $\kappa$ in \Cref{sec:app-exp_additional_zou}. Beyond these issues, we sometimes found that the method of \citet{zou2016quantifying} failed to converge. While the convergence issue did not arise for our experiments in this section, it did arise for another analysis of the TCGA data; see \Cref{sec:app_tcga_genes}.

 The Good-Toulmin method used in \citet{chakraborty2019somatic} performs poorly in our experiments above (\Cref{fig:prediction_all} and \Cref{fig:prediction_same_conditions_boxplot}), as well as in our further real-data experiments in \Cref{sec:app-exp_gnomAD}. However, we find that this method seems competitive with the best alternative on other cancer genomics data; see \Cref{sec:app_tcga_genes}. Further understanding of the variable performance of this estimator would be an important first step before any potential future use. By contrast, we find that jackknife \citet{gravel2014predicting}, with an appropriate hyperparameter calibration, performs well across all of our experiments when conditions are kept constant between the pilot and follow-up.

 \subsection{Prediction under different experimental conditions} \label{sec:exp_different}

 \begin{figure}[t!]
       \centering \includegraphics[width=\textwidth,height=\textheight,keepaspectratio]{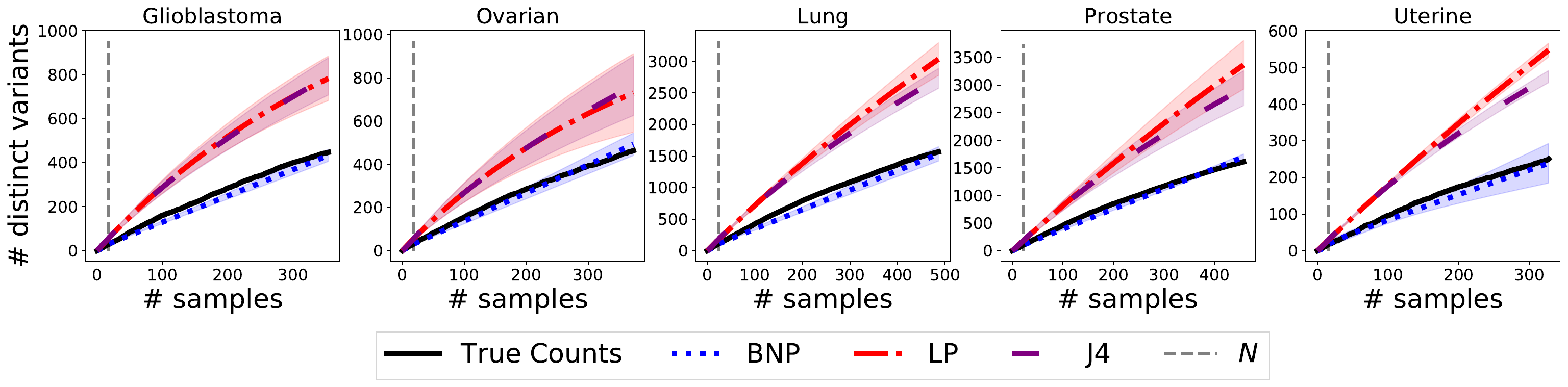}
 \caption{TCGA dataset: predicting the number of new variants. The true number of distinct variants (black) is compared with:
 our method (blue, BNP); linear program (red, LP); 4th order jackknife (purple, J4). Shaded bands represent one standard deviation.}
 \label{fig:pred_exp_d}
 \end{figure}

 We now turn to the case where there may be sequencing errors in the pilot study, in the follow-up study, or both. Moreover, the sequencing quality may differ between the pilot study and the follow-up study. To the best of our knowledge, no existing methods work in this case. We believe that the parametric Bayesian method of \citet{ionita2009estimating}, the smoothed Good-Toulmin estimator of \citet{chakraborty2019somatic}, and the linear programming method of \citet{zou2016quantifying} could all be adapted to take sequencing errors into account. However, we have seen that the parametric Bayesian and Good-Toulmin methods already struggle when there are no sequencing errors. And the linear programming method suffers from numerical instability when the training sample size is small (the case of most interest). While the harmonic jackknife of \citep{gravel2014predicting} performs well when there are no sequencing errors, we do not think it will be straightforward to adapt it to the case where sequencing quality may change between the pilot and follow-up. 

 In \Cref{fig:pred_exp_d} we see that there is indeed a noticeable difference in the number of observed variants when the experimental conditions change between the pilot and follow-up. In particular, we consider a pilot sequencing quality $\seqinit=100$ and a follow-up sequencing quality $\seqfollowup~=~50$. We use a fixed threshold $T=45$, a realistic coverage value in human genomic experiments \citep{karczewski2019variation}, and the same five cancer types as in \Cref{fig:prediction_all}. To represent this change between studies, we use the TCGA data as in \Cref{sec:exp_noerror} but apply additional thinning to simulate imperfect observation due to sequencing depth; see \Cref{sec:app_TCGA_details} for additional details. Since the harmonic jackknife is not able to use information about the changing sequencing depth, we expect our Bayesian nonparametric method to deliver superior predictive performance when sequencing quality changes. This behavior is exactly what we see in \Cref{fig:pred_exp_d}.

 \subsection{Designing experiments to maximize the number of observed variants} \label{sec:exp_opt_d}

 \begin{figure}[t!]
       \centering \includegraphics[width=\textwidth,height=\textheight,keepaspectratio]{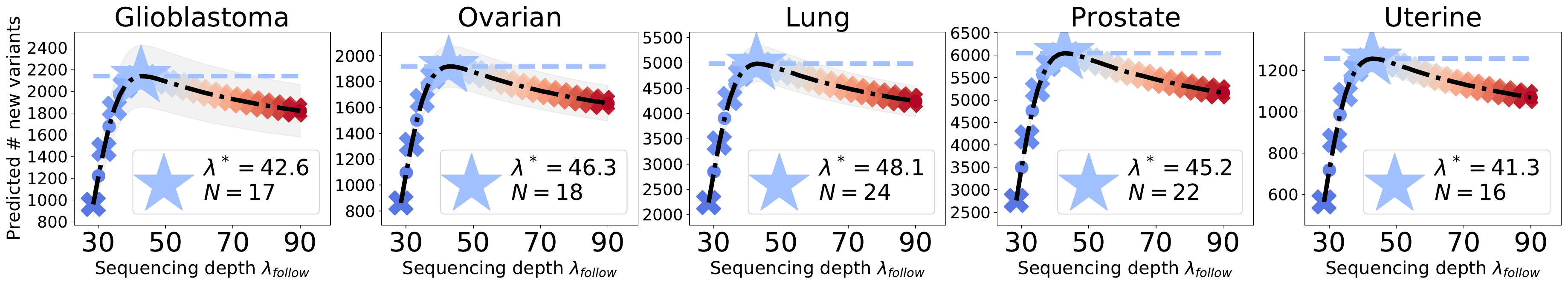}
     \caption{TCGA dataset: designing an experiment to maximize the number of new variants in a follow-up study.}
 \label{fig:opt_design}
 \end{figure}

 We show that our method can be used for experimental design in practice. Our procedure consists of three steps. (i) Given the pilot data and sequencing quality $\seqinit$, we minimize \Cref{eq:cost} to estimate the parameters $\conc, \discount, \mass$. (ii) We consider a range of values of the follow-up sequencing quality $\seqfollowup$; for each $\seqfollowup$, we choose the maximum follow-up size $M$ that stays within our budget $\budget$, and we use the learned values of $\conc, \discount, \mass$ to predict the number of new variants in each case. (iii) We choose the settings of $\seqfollowup$ and $M$ that maximize the number of new variants. We illustrate this procedure in \Cref{fig:opt_design}. In our experiments, we set the cost function $\costfcn(M,\seqfollowup) = M \log (\seqfollowup)$ as in \citet{ionita2010optimal}. For every cancer type, we retain $5\%$ of the observations as a pilot study. We set a budget $\budget$ such that we can sample at full depth, i.e. coverage of 100x, only half of the total remaining $95\%$ of the observations. We set variant calling rule threshold $\threshold = 45$, error $p_{\text{err}} = 0.01$, and $\seqinit = 100$. We run the procedure over all folds. We plot the predicted number of observed variants in the follow-up by maximizing $M$ under the budget $\budget$ and quality $\seqfollowup$, $\hat{P}_N^{(M, \seqfollowup)}$; the shaded region in \Cref{fig:opt_design} illustrates one standard deviation. We see a trade-off in quality and quantity. Namely, maximizing quantity $M$ leads to very small values of $\seqfollowup$ to maintain the budget $\budget$. With sufficiently low quality, though, fewer variants are discovered. Conversely, when $\seqfollowup$ is set very high, we require a very small $M$ to maintain the budget $\budget$, and not many variants are discovered. Intermediate values of $\seqfollowup$ and $M$ serve to maximize the number of variants discovered under a fixed budget.
 \section{Discussion} \label{sec:discussion}

 We have presented a Bayesian nonparametric method for predicting the number of variants in a follow-up study using information from a pilot study. Our method works even when the follow-up study has different experimental conditions from the pilot study, and can be used for optimal design of the follow-up study.

 Though our experiments here focus on rare variants from bulk studies in human genetics, we briefly describe further potential applications to emphasize the generality of our framework.  First, in microbiome research, there is an increasing interest in (i) devising low-cost pipelines for efficient sequencing \citep{rajan2019phylogenetic, sanders2019optimizing}, as well as (ii) defining best-practice protocols for data collection processes \citep{hillmann2018evaluating, bharti2021current}. Indeed, scientists have already expressed an interest in optimal allocation of a budget given information from a pilot experiment \citep{zaheer2018impact, pereira2019impact}. Second, in single-cell sequencing, scientists are interested in reliably estimating important gene properties. In this case, there exists a vast and growing literature that highlights the importance of establishing the optimal trade-off between the quality (sequencing depth) of the experiment, and the number of cells to be sequenced. See, for example,  \citet{bacher2016design, li2018accurate, zhang2020determining}. 
 Third, it is becoming common practice to use modern, non-invasive approaches for surveying wildlife populations, such as camera-traps \citep{tarugara2019cost, welbourne2020camera}. Accurate estimation of the living population and timely adoption of preventive measures are crucial for the survival of endangered species \citep{johansson2020identification}. But conservation groups often face a limited budget. These groups might benefit from trading off equipment density and quality.

 While the present paper has focused on data that can be represented as collections of binary features (e.g.\ variants and non-variants), our method may be extended to the case in which the observations are vectors of counts, as well as the case in which there exist multiple categories for each feature (e.g.\ different types of variants). In particular, by means of the Bayesian nonparametric conjugacy framework of \citet{james2017bayesian, broderick2018posteriors}, we may extend our method to use a categorical (or multinomial) likelihood process with a conjugate Bayesian nonparametric prior for the now-multiple frequencies per variant location. Our Bayesian nonparametric method may also be easily extended to accommodate multiple different pilot studies. For the latter extension, we would still generate variant proportions according to the three-parameter beta process; we would then generate variants in each pilot study according to different damped Bernoulli processes. The ultimate effect would be to introduce more distinct, but workable, Bernoulli terms in \Cref{eq:noise_new}. Moreover, in this work we have focused on threshold variant calling rules, which are a simplification of state-of-the art variant callers \citep{xu2018review}. Extending our framework to encompass more realistic variant calling rules is an interesting future research direction. An important practical challenge in this case will be even specifying a formula or series of formulas to describe how popular variant callers work.


 \section*{Acknowledgments} \label{sec:acknowledgments}

 The authors are grateful to the Editor, the Associate Editor, and two anonymous Referees for their comments, corrections, and suggestions, which have greatly improved the paper. The results shown in the present paper are in whole or part based upon data generated by the TCGA Research Network. The authors thank Boyu Ren, Joshua Schraiber, Michael Hoffman, and Brian Trippe for useful discussions and comments. The authors are also grateful to Boyu Ren for help working with the gnomAD dataset. Federico Camerlenghi and Stefano Favaro received funding from the European Research Council under the European Union's Horizon 2020 research and innovation programme under grant agreement No 817257, and the Italian Ministry of Education, University and Research, ``Dipartimenti di Eccellenza" grant 2018--2022. Lorenzo Masoero and Tamara Broderick were supported in part by the DARPA I2O LwLL program, the CSAIL-MSR Trustworthy AI Initiative, an NSF CAREER Award, a Sloan Research Fellowship, and ONR.

\bibliographystyle{abbrvnat}
\bibliography{ref.bib}

\appendix

\section*{Appendix}
This document contains the supplementary material for ``\emph{More for less: predicting and maximizing genomic variant discovery via Bayesian nonparametrics}''. In \Cref{sec:app-proofs} we present the proofs of the results presented in \Cref{sec:model}. We next provide detail about the competing methods we considered. In \Cref{sec:app-ionita} the Bayesian parametric estimator of  \citet{ionita2009estimating}, in \Cref{sec:app-zou} the linear program proposed by \citet{zou2016quantifying}, in \Cref{sec:app_jack} the Jackknife estimator used in \citet{gravel2014predicting}, and in \Cref{sec:app-gt} the Good-Toulmin estimator used in \citet{chakraborty2019somatic}. We conclude providing additional experimental results. In \Cref{sec:app_TCGA_msk} we present additional detail about the data used in \citet{chakraborty2019somatic}, and considered in the analysis in the main text. In \Cref{sec:app-exp_gnomAD} we report results for the gnomAD project \citep{karczewski2019variation}, an extension of the datasets previously considered in \citet{gravel2014predicting, zou2016quantifying}. We conclude with extensive experiments on simulated data in \Cref{sec:app-exp_additional}.


\renewcommand{\theequation}{A.\arabic{equation}}
\setcounter{equation}{0} 

\section{Additional results and proofs} \label{sec:app-proofs}

\subsection*{Proof of \Cref{prop:naive_new}}

\begin{proof} 

By construction, the variant frequencies $\{\afreq_j\}$ are formed from a Poisson point process with rate measure 
 \begin{align} 
 	\ratemeas(\d \afreq)
 		 &= \mass \frac{\Gamma(1+\conc)}{\Gamma(1-\discount)\Gamma(\conc+\discount)} \afreq^{-1-\discount} (1-\afreq)^{\conc+\discount-1} \bm{1}_{[0,1]}(\afreq)\d \afreq .
			\label{eq:ratemeas}
 \end{align}

Recall that a variant with frequency $\afreq_j$ appears in organism $n$ with Bernoulli probability $\theta_j$, independently across $n$. Therefore, the collection of variant frequencies whose corresponding variants have not yet appeared after $N$ organisms comes from a thinned Poisson point process relative to the original Poisson point process generating the $\{\afreq_j\}$; the thinned process has rate measure $\ratemeas(\d \afreq) \cdot \Bern(0 | \afreq)^N$ and is independent of the collection of frequencies that did appear in the first $N$ organisms.
Similarly, the collection of variant frequencies corresponding to variants that did not appear in the first $N$ organisms but then did appear in the first follow-up organism comes from a thinned Poisson point process with rate measure $\ratemeas(\d \afreq) \cdot \Bern(0 | \afreq)^N \cdot \Bern(1 | \afreq)$ and is independent of the collection of frequencies that did not appear in the first $N+1$ organisms. Recursively, for $m \ge 1$, the collection of variant frequencies corresponding to variants that did not appear in the first $N+m-1$ organisms but then did appear in the $m$th follow-up organism comes from a thinned Poisson point process with rate measure
\begin{align*}
	\lefteqn{\ratemeas(\d \afreq) \Bern(0 | \afreq)^{N+m-1} \Bern(1 | \afreq)} \\
		&= \mass \frac{\Gamma(1+\conc)}{\Gamma(1-\discount)\Gamma(\conc+\discount)} \afreq^{-1-\discount + 1} (1-\afreq)^{ \conc+\discount-1 + N+m-1} \bm{1}_{[0,1]}(\afreq)\d \afreq \\
		&= \mass \frac{\Gamma(1+\conc)}{\Gamma(1-\discount)\Gamma(\conc+\discount)}
			\cdot \frac{\Gamma(1 - \discount) \Gamma(\conc+\discount-1 + N+m)}{ \Gamma(\conc + N+m)} \\
		& \quad {} \cdot \Betadist(\afreq \mid 1 - \discount, \conc+\discount-1 + N+m) \d \afreq \\
		&= \mass \frac{ (\conc+\discount)_{(N+m-1) \uparrow} }{ (1+\conc)_{(N+m-1)\uparrow } } \Betadist(\afreq \mid 1 - \discount, \conc+\discount-1 + N+m) \d \afreq. 
\end{align*}
Finally, we observe that the number of points in a Poisson point process is Poisson distributed with mean equal to the integral of its rate measure. Each of these Poisson point processes is independent, and the sum of independent Poissons is Poisson with mean equal to the sum of the means. So, since $\countnew{N}{M}$ is the sum of points in these $M$ Poisson point processes with $m \in [M]$, 
we have $\countnew{N}{M}$ is Poisson with mean
\begin{align*}
	\sum_{m=1}^{M}& \int_0^1 \mass \frac{ (\conc+\discount)_{(N+m-1) \uparrow} }{ (1+\conc)_{(N+m-1) \uparrow} } \Betadist( \afreq | 1 - \discount, \conc+\discount-1 + N+m) \d \afreq \\
	 	&= \sum_{m=1}^{M} \mass \frac{ (\conc+\discount)_{(N+m-1) \uparrow} }{ (1+\conc)_{(N+m-1) \uparrow} },
\end{align*}
as was to be shown.

\end{proof}

\subsection*{Proof of \Cref{prop:asymptotic}}

In the following we make use of the $O$ notation, indeed we will write $f(x)=O(g(x))$ to mean that the ratio $|f(x)/g(x)|$
is a bounded function of the variable $x$; we also write $f(x)=o(g(x))$ as $x\to x_0$ (little $o$ notation) to mean that $\lim_{x \to x_0}f(x)/g(x)=0$.
A preliminary result is needed.
\begin{lemma} \label{lem:asymptotic_Gamma_sum}
For any $c >0$, $N \geq 1$ and $\discount \in (0,1)$ we have that
\begin{equation} \label{eq:asymptotic_Gamma_sum}
\frac{1}{M^\discount}\sum_{m=1}^M \frac{\Gamma (\conc+N+m-1+\discount)}{\Gamma (\conc+N+m)} = \frac{1}{\discount}+ O (M^{-\discount})
\end{equation}
is  satisfied as $M$ grows to $+\infty$.
\end{lemma}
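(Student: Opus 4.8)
The plan is to estimate the summand $\frac{\Gamma(\conc+N+m-1+\discount)}{\Gamma(\conc+N+m)}$ for large $m$ and then sum. The key tool is the classical asymptotic expansion for ratios of gamma functions: for fixed $a$ and large $z$, $\Gamma(z+a)/\Gamma(z) = z^{a}(1 + O(1/z))$. Applying this with $z = \conc+N+m$ and $a = \discount-1$ gives $\frac{\Gamma(\conc+N+m-1+\discount)}{\Gamma(\conc+N+m)} = (\conc+N+m)^{\discount-1}(1 + O(1/m)) = (\conc+N+m)^{\discount-1} + O(m^{\discount-2})$, uniformly in $m$ for $m \ge 1$ (the constants depending on the fixed quantities $\conc,N,\discount$).

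Next I would sum over $m \in [M]$. The error terms contribute $\sum_{m=1}^M O(m^{\discount-2})$, which is $O(1)$ since $\discount - 2 < -1$, hence after dividing by $M^{\discount}$ it is $O(M^{-\discount})$, within the claimed remainder. For the main term, I would compare $\sum_{m=1}^M (\conc+N+m)^{\discount-1}$ with the integral $\int_{0}^{M} (\conc+N+x)^{\discount-1}\,\d x$. Since $x \mapsto (\conc+N+x)^{\discount-1}$ is monotone decreasing (as $\discount - 1 < 0$), the sum and the integral differ by $O(1)$ (bounded by the value of the first term, which is $O(1)$). The integral evaluates to $\frac{1}{\discount}\big[(\conc+N+M)^{\discount} - (\conc+N)^{\discount}\big] = \frac{1}{\discount}(\conc+N+M)^{\discount} + O(1)$. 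Finally, $(\conc+N+M)^{\discount} = M^{\discount}(1 + (\conc+N)/M)^{\discount} = M^{\discount} + O(M^{\discount-1})$, so $(\conc+N+M)^{\discount} = M^{\discount} + O(1)$ since $\discount - 1 < 0$.

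Collecting: $\sum_{m=1}^M \frac{\Gamma(\conc+N+m-1+\discount)}{\Gamma(\conc+N+m)} = \frac{1}{\discount}M^{\discount} + O(1)$. Dividing by $M^{\discount}$ yields $\frac{1}{\discount} + O(M^{-\discount})$, which is exactly \eqref{eq:asymptotic_Gamma_sum}.

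The only mildly delicate point is making sure all the $O$-constants are genuinely uniform in $m$ (not just pointwise for each fixed $m$): this is fine because $\conc, N, \discount$ are held fixed and $\conc+N+m \ge \conc+N+1$ is bounded below, so the gamma-ratio expansion applies with a single constant across all $m \ge 1$. A clean alternative to invoking the gamma-ratio expansion directly is to note that $m \mapsto \frac{\Gamma(\conc+N+m-1+\discount)}{\Gamma(\conc+N+m)} - (\conc+N+m)^{\discount-1}$ is $O(m^{\discount-2})$ by the same expansion; either route works, and I expect the sum-versus-integral comparison to be the step requiring the most care in bookkeeping, though it is entirely routine.
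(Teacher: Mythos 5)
Your proposal is correct and follows essentially the same route as the paper's own proof: expand the gamma ratio as $(\conc+N+m)^{\discount-1}\bigl(1+O(1/m)\bigr)$ (the paper cites the same expansion via a bounded function $g$ with $|g(x)x|$ bounded), handle the main term by comparison with the integral of the decreasing function $x\mapsto(\conc+N+x)^{\discount-1}$, and check that the summed error terms contribute only $O(1)$, hence $O(M^{-\discount})$ after dividing by $M^{\discount}$. The bookkeeping you flag (uniformity in $m$ and the sum-versus-integral gap being $O(1)$) is exactly what the paper verifies, so no gap remains.
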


\begin{proof}
As in the proof of \citet[Lemma 2]{berti2015central}, we know that for any $x >0$, 
\[
    \frac{\Gamma (x+\discount)}{\Gamma (x+1)} = x^{\discount-1} (1+g (x)),
\]
where $g : (0, +\infty) \to R$ is such that $\sup_{x \geq 0} |g (x)x|< +\infty $. Putting $x= \conc+N+m-1$, where $m \geq 1$ while $\conc$ and $N$ are fixed constants, this very last condition on $g$ is equivalent to
$\sup_{y \geq \conc+N-1} |g (y)y|< +\infty$. Hence there exists $K >0$ such that $ |g(y)| \leq K/y$ for any $y \geq \conc+N-1$. With this in mind we focus on the left hand side of \Cref{eq:asymptotic_Gamma_sum} in the statement of \Cref{lem:asymptotic_Gamma_sum}:
\begin{align} 
    \frac{1}{M^\discount}&\sum_{m=1}^M \frac{\Gamma (\conc+N+m-1+\discount)}{\Gamma (\conc+N+m)}  \\
    &= \frac{1}{M^\discount}\sum_{m=1}^M \nonumber
 (\conc+N+m-1)^{\discount-1} (1+g (\conc+N+m-1)) \label{eq:sum_2_terms_1}\\
 &= \frac{1}{M^\discount}\sum_{m=1}^M (\conc+N+m-1)^{\discount-1} \\ &+ \frac{1}{M^\discount}\sum_{m=1}^M (\conc+N+m-1)^{\discount-1} g (\conc+N+m-1). \label{eq:sum_2_terms_2}
\end{align}
As for the sum of \Cref{eq:sum_2_terms_1} note that the following inequalities hold true
\begin{align} \label{eq:lim1}
    \frac{(\conc+N+M)^\discount -(\conc+N)^\discount}{\discount M^\discount} &=	\int_1^{M+1} \frac{(\conc+N+m-1)^{\discount-1}}{M^\discount} \d m   \nonumber \\
	&\le  \sum_{m=1}^M \frac{(\conc+N+m-1)^{\discount-1}}{M^\discount}\nonumber \\
	& \le 	\int_0^{M} \frac{(\conc+N+m-1)^{\discount-1}}{M^\discount} \d m \nonumber\\
	&=  \frac{(\conc+N+M-1)^\discount-(\conc+N-1)^\discount}{\discount M^\discount},
\end{align}
where we have used the fact that $(\conc+N+m-1)^{\discount-1}$ is decreasing in $m$, and used the corresponding integrals to bound the sum. We can use an asymptotic expansion of the upper and the lower bound in \Cref{eq:lim1} to get
\begin{align*}
    \frac{1}{\discount} &\left( \frac{\discount (\conc+N)}{M} + o \Big( \frac{1}{M} \Big)  -\frac{(\conc+N)^\discount}{M^\discount} \right) \leq  \sum_{m=1}^M \frac{(\conc+N+m-1)^{\discount-1}}{M^\discount} -\frac{1}{\discount}\\ 
    & \leq \frac{1}{\discount} \left( \frac{\discount (\conc+N-1)}{M} + o \Big( \frac{1}{M} \Big)  -\frac{(\conc+N-1)^\discount}{M^\discount} \right),
\end{align*}
which entails  that
\begin{align}
\label{eq_tesi1}
\sum_{m=1}^M \frac{(\conc+N+m-1)^{\discount-1}}{M^\discount} =\frac{1}{\discount}+ O \Big(\frac{1}{M^\discount}\Big).
\end{align}
As for \Cref{eq:sum_2_terms_2}, we exploit the properties of $g$ to get
\begin{align*}
	\Big| \frac{1}{M^\discount} & \sum_{m=1}^M (\conc+N+m-1)^{\discount-1} g (\conc+N+m-1) \Big|  \\
	&\le \frac{K}{M^\discount}\sum_{m=1}^M  (\conc+N+m-1)^{\discount-2} \\
	&\leq \frac{K}{M^\discount} \int_{\conc+N-1}^{\conc+N+M-1} \frac{1}{x^{2-\discount}} \d x \\
	&=\frac{K}{M^\discount (1-\discount)}  \left\{ \frac{1}{(\conc+N-1)^{1-\discount}}  - \frac{1}{(\conc+N+M-1)^{1-\discount}} \right\}
\end{align*}
The last inequality implies that
\begin{align}
\label{eq:lim2}
\Big|\frac{1}{M^\discount}\sum_{m=1}^M (\conc+N+m-1)^{\discount-1} g (\conc+N+m-1)\Big|=  O \Big( \frac{1}{M^\discount}\Big). 
\end{align}
Putting \Cref{eq:lim1} and \Cref{eq:lim2} in \Cref{eq:sum_2_terms_1} and \Cref{eq:sum_2_terms_2} the thesis follows. 
\end{proof}

If $X$ is a real valued random element, we denote by $\Phi_X (t) = E (e^{itX})$ its characteristic function, where $i$ is the imaginary unit. We also assume that all the random variables are defined on a probability space $(\Omega, \mathcal{A}, \PP)$, and we denote by ${\rm pr}_N$ the probability ${\rm pr}$ given $X_{1:N}$;  $E_N$ and $\text{var}_N$ will stand for the expected valued and the variance given $X_{1:N}$, respectively.
\begin{proof}[of \Cref{prop:asymptotic}]
We start by showing the strong law of large numbers of \Cref{eq:slln} in the main text.
From \Cref{lem:asymptotic_Gamma_sum} we deduce that

\begin{align}
\label{eq:EK_m_inf}
    \frac{E_N ( \countnew{N}{M} )}{M^\discount} = \frac{\mass}{M^\discount} \frac{ \Gamma (\conc+1)}{\Gamma (\conc+\discount)} 
\sum_{m=1}^{M} \frac{\Gamma (\conc+\discount +N+ m-1)}{\Gamma (\conc+N+m)} \to \frac{\mass \Gamma (\conc+1)}{\discount \Gamma (\conc+\discount)}
\end{align}
as $M \to +\infty$. We observe that $\countnew{N}{M} = H_N^{(1)}+ \cdots +H_N^{(M)}$, where 
$H_N^{(m)}$ are independent Poisson random variables with mean 
\[
    \frac{\mass (\conc+\discount)_{(N+m-1) \uparrow}}{(\conc+1)_{(N+m-1) \uparrow}},
\]
for $m=1, \ldots , M$ and $M$ is arbitrary large. $H_N^{(m)}$ is the number of new variants that have been observed in the $(N+m)$-th individual, conditionally on the first $N$ individuals.
As a consequence we may write
\begin{align*}
    \frac{\countnew{N}{M} - E_N \left(\countnew{N}{M}\right)}{M^\discount} = \frac{H_N^{(1)}-E_N (H_N^{(1)}) +\cdots  + H_N^{(M)} - E_N (H_N^{(M)})}{M^\discount} .
\end{align*}
The Kronecker's lemma \citep[Lemma IV.3.2]{Shiryaev:1995} implies  that 
\begin{align*}
\lim_{M\to +\infty }\frac{\countnew{N}{M}- E_N \left(\countnew{N}{M}\right)}{M^\discount} = 0 \quad {\rm pr}_N-\text{almost surely},
\end{align*}
provided that the following condition is satisfied 
\begin{align}
\label{eq:cond_slln}
\sum_{m=1}^{+\infty } \frac{\text{var}_N (H_N^{(m)} )}{m^{2 \discount}} < +\infty.
\end{align}
This may be easily verified as follows:
\begin{align*}
    \sum_{m=1}^{+\infty } \frac{\text{var}_N (H_N^{(m)} )}{m^{2 \discount}} & = \sum_{m=1}^{+\infty } \frac{\mass}{m^{2\discount}} \frac{( \conc +\discount)_{N+m-1 \uparrow}}{(\conc +1)_{N+m-1 \uparrow}} \\
    & = \mass  \frac{\Gamma (\conc +1)}{\Gamma ( \conc +\discount)} \sum_{m=1}^{+\infty } \left\{ \frac{1}{m^{2\discount}}
\frac{\Gamma (\conc+\discount +N+m-1)}{\Gamma (\conc+N+m)} \right\} < +\infty.
\end{align*}
The series turns out to be convergent because the following asymptotic relation holds true:
\[
    \frac{\Gamma (\conc+\discount +N+m-1)}{\Gamma (\conc+N+m)} \sim \frac{\mass}{m^{1+\discount}} \frac{ \Gamma (\conc +1)}{\Gamma (\conc+\discount)}.
\]
Hence \Cref{eq:cond_slln} is satisfied, so we conclude that
\begin{align*}
    \lim_{M\to +\infty }\frac{\countnew{N}{M}- E_N \left(\countnew{N}{M}\right)}{M^\discount} = 0 \quad \text{almost surely,}
\end{align*}
which is equivalent to the thesis thanks to \Cref{eq:EK_m_inf}.

\par We now prove the central limit theorem stated in \Cref{eq:clt} in the main text. We prove the result using the convergence of characteristic functions.
We use the fact that the posterior distribution of $\countnew{N}{M}$ is Poisson to evaluate the characteristic function \textit{a posteriori}: for convenience, let $\centeredcountnew{N}{M} := \sqrt{M^{\sigma}} \left( \frac{\countnew{N}{M}}{M^{\sigma}} -\xi \right)$, where we recall that $\xi$ is defined as
\begin{align*}
	\xi:=\frac{\alpha}{\sigma} \frac{\Gamma(c+1)}{\Gamma(c+\sigma)}.
\end{align*}
Then, 
\begin{align*}
    &\Phi_{\centeredcountnew{N}{M} \mid X_{1:N} } (t) 
      =E_N \left[ \exp \left\{ it  \centeredcountnew{N}{M} \right\}\right]\\
    & = \exp \left\{  -it\xi \sqrt{M^{\sigma}} +  \mass  (e^{it/\sqrt{M^{\sigma}}} -1) \sum_{m=1}^M \frac{(c+\discount)_{N+m-1 \uparrow}}{(\conc+1)_{N+m-1 \uparrow}}  \right\}\\
    & = \exp \left\{  -it\xi \sqrt{M^{\sigma}} +   (e^{it/\sqrt{M^{\sigma}}} -1) \frac{\mass \Gamma (\conc+1)}{\Gamma (c+\discount)}
    \sum_{m=1}^M \frac{\Gamma (c+N+m-1+\discount)}{\Gamma(c+N+m)} \right\}.
    \end{align*}
    
We now use  \Cref{lem:asymptotic_Gamma_sum} and the asymptotic expansion of the exponential function to get

    \begin{align*}
    &\Phi_{\centeredcountnew{N}{M} \big\vert X_{1:N}} (t)  = \\
    &=\exp \left\{  -it\xi \sqrt{M^{\sigma}} +  \frac{\mass\Gamma (\conc+1)}{\Gamma (c+\discount)}\left(\frac{it}{\sqrt{M^{\sigma}}} -\frac{t^2}{2M^{\sigma}} +O(M^{-\frac{3}{2}\sigma})\right)  \left( \frac{M^{\sigma}}{\discount}+ O (1)\right)\right\}\\
    &  = \exp \left\{  -it\xi \sqrt{M^{\sigma}} +  \frac{\mass}{\discount} \frac{\Gamma (\conc+1)}{\Gamma (c+\discount)} \left(it\sqrt{M^{\sigma}} -\frac{t^2}{2} +O\left(\sqrt{M^{\sigma}}\right)\right) \right\}\\
    &= \exp \left\{  -it\xi \sqrt{M^{\sigma}} +   \xi(it\sqrt{M^{\sigma}} -\frac{t^2}{2} +O\left(\sqrt{M^{\sigma}}\right)) \right\} \\
    &= \exp \left\{     -\frac{\xi t^2}{2} +O\left(\sqrt{M^{\sigma}}\right) \right\},
\end{align*}
where in the penultimate line we substituted 
\[
    \xi = \frac{\mass}{\discount} \frac{\Gamma (\conc+1)}{\Gamma (c+\discount)}.
\]
Therefore, as $M$ grows to infinity, we get
\begin{align*}
    \Phi_{\sqrt{M^{\sigma}} \left( \frac{\countnew{N}{M}}{M^{\sigma}} -\xi \right) \big \vert X_{1:N}} (t) 
    \longrightarrow  \exp \left\{  - \frac{\xi t^2}{2} \right\},
\end{align*}
and the thesis follows.
\end{proof}

\subsection*{Proof of \Cref{prop:noise_new}}

\begin{proof}
To see the almost sure finiteness of the Poisson parameter and hence of the random variables $\countnew{N}{M}$ and of $ \countnewtimes{N}{M}{r}$, and $\countnewtimes{N}{M}{\le R}$, note that the parameter constraints for the three-parameter beta process are specifically constructed so that $\afreq \nu(\d \afreq)$ is a proper beta distribution; see the end of \cref{sec:model} and \citet{james2017bayesian, broderick2018posteriors}.
The $\afreq$ factor will arise from $\Bern(1 \mid \probcallfollowup \afreq)$.

The exact form of the Poisson parameter in \Cref{eq:noise_new} 
arises by following the same thinning argument as in the proof of \Cref{prop:naive_new}. To see the beta representation,
\begin{align*}
	\lefteqn{ \Bern(1 \mid \probcallfollowup \afreq) \Bern(0 \mid \probcallfollowup)^{m-1} \Bern(0 \mid \probcallinit)^{N} \nu(\d \afreq) } \\
		&= \mass \frac{\Gamma(1+\conc)}{\Gamma(1-\discount)\Gamma(\conc+\discount)} \afreq^{-1-\discount} ( (1-\afreq)^{ \conc+\discount-1} \times \\
		&\times (\probcallfollowup \afreq) (1- \probcallfollowup \afreq)^{m-1} (1- \probcallinit \afreq)^{N} \bm{1}_{[0,1]}(\afreq) \d \afreq \\
		&= \mass \frac{\Gamma(1+\conc)}{\Gamma(1-\discount)\Gamma(\conc+\discount)} 
			\probcallfollowup
			(1- \probcallfollowup \afreq)^{m-1} (1- \probcallinit \afreq)^{N}
			\cdot \frac{\Gamma(1 - \discount) \Gamma(\conc+\discount)}{ \Gamma(\conc + 1)} \\
		& \quad {} \cdot \Betadist(\afreq \mid 1 - \discount, \conc+\discount) \d \afreq \\
		&= \mass \probcallfollowup \Betadist(\afreq \mid 1 - \discount, \conc+\discount) \d \afreq.  
\end{align*}
The exact form of the Poisson parameter $\gamma_r$ in \Cref{eq:noisenewtimes}   
arises by following the same thinning argument as in the proof of \Cref{prop:naive_new_num_times}. To see the beta representation, 
\begin{align*}
	\lefteqn{ \Bern(1 \mid \probcallfollowup \afreq)^{r} \Bern(0 \mid \probcallfollowup \afreq)^{M-r} \Bern(0 \mid \probcallinit \afreq)^{N} \nu(\d \afreq) } \\
		&= \mass \frac{\Gamma(1+\conc)}{\Gamma(1-\discount)\Gamma(\conc+\discount)} \afreq^{-1-\discount} ( (1-\afreq)^{ \conc+\discount-1} \times  \\
		&\times (\probcallfollowup \afreq)^{r} (1- \probcallfollowup \afreq)^{M-r} (1- \probcallinit \afreq)^{N} \bm{1}_{[0,1]}(\afreq) \d \afreq \\
		&= \mass \frac{\Gamma(1+\conc)}{\Gamma(1-\discount)\Gamma(\conc+\discount)}
			\probcallfollowup^{r}
			(1- \probcallfollowup \afreq)^{M-r} (1- \probcallinit \afreq)^{N}
			\cdot \frac{\Gamma(r - \discount) \Gamma(\conc+\discount)}{ \Gamma(\conc + r)} \\
		& \quad {} \cdot \Betadist(\afreq \mid r - \discount, \conc+\discount) \d \afreq \\
		&= \mass \probcallfollowup^{r} \frac{ (1+c)_{(r-1)\uparrow} }{ (1-\discount)_{(r-1)\uparrow} } \cdot \Betadist(\afreq \mid r - \discount, \conc+\discount) \d \afreq.
\end{align*}
\end{proof}

\subsection{Number of new rare variants}

\begin{proposition}[Number of new rare variants] \label{prop:naive_new_num_times}
	Assume the model in \Cref{prop:naive_new}.
	Let $\countnewtimes{N}{M}{r}$ represent the number of new variants that occur $r$ times in a follow-up sample of size $M$ after a preliminary study of size $N$. I.e., we count the variants that do not occur in the preliminary $N$ samples but then occur $r$ times in the follow-up $M$ samples. Let $\countnewtimes{N}{M}{\le R}$ similarly represent the number of new variants that occur at most $R$ times. Here $r, R \in [M]$. Define
	\begin{displaymath} \label{eq:countnewtimes}
		\countnewtimes{N}{M}{r} := \sum_{j=1}^{\infty} \bm{1}\left( \sum_{n=1}^{N} \acount_{n,j} = 0\right) \bm{1}\left( \sum_{m=1}^{M} \acount_{N+m,j} = r \right).
	\end{displaymath}
	Then
	\begin{align} \label{eq:naive_new_num_times}
		\countnewtimes{N}{M}{r} \mid \mcount_{1:N} \sim \Pois ( \lambda_r ),
		\end{align}
	for $\lambda_r := \mass \binom{M}{r} \frac{(1-\discount)_{(r-1) \uparrow} (\conc+\discount)_{(N+M-r) \uparrow} }{(\conc+1)_{(N+M-1)\uparrow} }$. Moreover, for
	\begin{align*}
			\countnewtimes{N}{M}{\le R} := \sum_{r=1}^{R} \countnewtimes{N}{M}{r},
	\end{align*}
	it holds 
	\begin{align*}
		\countnewtimes{N}{M}{\le R} \mid \mcount_{1:N} &\sim \Pois \left(\sum_{r=1}^{R} \lambda_r \right).
	\end{align*}
	
	\end{proposition}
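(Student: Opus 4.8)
\emph{Proof strategy.} The plan is to work directly with the Poisson point process representation \eqref{eq:ratemeas} of $\mfreq \sim \tBP(\mass,\discount,\conc)$ and to apply the marking and restriction theorems for Poisson processes. By \eqref{eq:ratemeas} the atoms $\{\afreq_j\}$ of $\mfreq$ form a Poisson point process on $(0,1)$ with intensity $\ratemeas$, and conditionally on $\mfreq$ the indicators $\acount_{n,j}$ are independent $\Bern(\afreq_j)$ across $n$ and $j$. First I would attach to each atom $\afreq_j$ the mark $\big(\ind(\sum_{n=1}^N \acount_{n,j}=0),\ \sum_{m=1}^M \acount_{N+m,j}\big)$, recording whether the variant is absent from the pilot and how many times it appears in the follow-up. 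Given $\afreq_j=\afreq$, the first coordinate is $\Bern((1-\afreq)^N)$ and the second is $\Binom(M,\afreq)$, and they are independent because they depend on disjoint blocks of the $\acount_{\cdot,j}$. The marking theorem then turns $\{\afreq_j\}$ into a marked Poisson process.

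The second step is to condition on $\mcount_{1:N}$. Since $\mcount_{1:N}$ is a deterministic function of the atoms that appear at least once among the first $N$ samples together with their associated data, the restriction theorem implies that, conditionally on $\mcount_{1:N}$, the atoms with $\sum_{n=1}^N\acount_{n,j}=0$ still form a Poisson point process, with intensity $(1-\afreq)^N\,\ratemeas(\d\afreq)$, and independent of $\mcount_{1:N}$; this is precisely the ordinary (non-fixed-atom) component of the beta--Bernoulli posterior \citep{james2017bayesian, broderick2018posteriors}. A new variant occurring exactly $r$ times in the follow-up is exactly such an atom whose follow-up mark equals $r$, so thinning this process by the event $\{\Binom(M,\afreq)=r\}$ gives that $\countnewtimes{N}{M}{r}\mid \mcount_{1:N}$ is Poisson with mean
\begin{displaymath}
\lambda_r=\binom{M}{r}\int_0^1\afreq^r(1-\afreq)^{N+M-r}\,\ratemeas(\d\afreq).
\end{displaymath}
Plugging in \eqref{eq:ratemeas} reduces the integral to a Beta integral, $\int_0^1 \afreq^{r-1-\discount}(1-\afreq)^{N+M-r+\conc+\discount-1}\d\afreq = B(r-\discount,\,N+M-r+\conc+\discount)$, which is finite because $r\ge 1>\discount$ and $\conc+\discount>0$; rewriting the resulting ratio of Gamma functions in terms of rising factorials via $(a)_{b\uparrow}=\Gamma(a+b)/\Gamma(a)$ yields the claimed formula for $\lambda_r$.

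For the final claim, the same marking argument shows that $\countnewtimes{N}{M}{1},\countnewtimes{N}{M}{2},\dots$ are, conditionally on $\mcount_{1:N}$, independent Poisson random variables, since they count atoms carrying disjoint marks; hence $\countnewtimes{N}{M}{\le R}=\sum_{r=1}^R\countnewtimes{N}{M}{r}$ is Poisson with mean $\sum_{r=1}^R\lambda_r$. I expect the only delicate point to be the second step: carefully justifying that conditioning on $\mcount_{1:N}$ leaves the never-observed atoms as a Poisson process with the thinned intensity $(1-\afreq)^N\ratemeas(\d\afreq)$ --- equivalently, invoking beta--Bernoulli conjugacy at the level of completely random measures rather than just quoting it. Once this is in place, the remainder is the marking theorem plus the Beta-integral computation, both routine. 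An alternative, self-contained route would mirror the likelihood/generating-function manipulations behind \Cref{prop:naive_new}, but the Poisson-process argument is cleaner and makes the independence across $r$ transparent.
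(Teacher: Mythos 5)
Your proposal is correct and follows essentially the same route as the paper: the paper likewise thins the Poisson point process of frequencies by the probability $(1-\afreq)^N\binom{M}{r}\afreq^r(1-\afreq)^{M-r}$ of being unseen in the pilot and seen exactly $r$ times in the follow-up, evaluates the resulting Beta integral to obtain $\lambda_r$, and deduces the result for $\countnewtimes{N}{M}{\le R}$ from the independence of the thinned processes across $r$. Your marking/restriction-theorem phrasing is just a more formal packaging of the same thinning-plus-conjugacy argument used for \Cref{prop:naive_new}.
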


Just like for $\countnew{N}{M}$, the Bayesian nonparametric predictors of $\countnewtimes{N}{M}{r}$ and $\countnewtimes{N}{M}{\le R}$ correspond to the expected values of $\countnewtimes{N}{M}{r} \mid \mcount_{1:N} $ and $\countnewtimes{N}{M}{\le R} \mid \mcount_{1:N} $, respectively, i.e. the parameters of the posterior predictive Poisson distributions displayed in \eqref{eq:naive_new_num_times}. Similarly to Proposition \eqref{prop:asymptotic}, the large $M$ asymptotic behaviour of $\countnewtimes{N}{M}{r} \mid \mcount_{1:N} $ and $\countnewtimes{N}{M}{\le R} \mid \mcount_{1:N} $ display very specific power law behavior almost surely.

\begin{proof} 

Analogous to the proof of \Cref{prop:naive_new}, we consider the Poisson point process $\{\afreq_j\}$ and thin it to those frequencies corresponding to variants chosen no times in the preliminary $N$ samples and chosen exactly $r$ times out of the follow-up $M$ samples. The probability of being chosen to be thinned, then, is $\Bern(0 \mid \afreq)^N \cdot \binom{M}{r} \cdot \Bern(0 \mid \afreq)^{M-r} \cdot \Bern(1 \mid \afreq)^{r}$. The thinned process therefore has rate measure
\begin{align*}
	\lefteqn{ \ratemeas(\d \afreq) \cdot \Bern(0 \mid \afreq)^N \cdot \binom{M}{r} \cdot \Bern(0 \mid \afreq)^{M-r} \cdot \Bern(1 \mid \afreq)^{r} } \\
		&= \mass \frac{\Gamma(1+\conc)}{\Gamma(1-\discount)\Gamma(\conc+\discount)}
			\afreq^{-1-\discount +r} (1-\afreq)^{\conc+\discount-1 + N + M - r} \d \afreq \\
		&= \mass \frac{\Gamma(1+\conc)}{\Gamma(1-\discount)\Gamma(\conc+\discount)}
			\frac{ \Gamma(r - \discount) \Gamma(\conc + \discount + N + M - r) }{ \Gamma(\conc + N + M) } \times \\
			&\quad\quad \times\Betadist(r - \discount, \conc + \discount + N + M - r) \d \afreq \\
		&= \mass \frac{ (1-\discount)_{(r-1)\uparrow} (\conc + \discount)_{(N+M-r)\uparrow} }{ (1+c)_{(N+M-1)\uparrow} } \Betadist(r - \discount, \conc + \discount + N + M - r) \d \afreq.
\end{align*}
Since $\countnewtimes{N}{M}{r}$ counts the thinned atoms, it has Poisson distribution with mean equal to the integral of the rate measure, i.e.\ mean equal to 
\begin{align} \label{eq:naive_mean_num_times}
    \mass \frac{ (1-\discount)_{(r-1)\uparrow} (\conc + \discount)_{(N+M-r)\uparrow} }{ (1+\conc)_{(N+M-1)\uparrow} },
\end{align}
as was to be shown.

The distribution of $\countnewtimes{N}{M}{\le R}$ follows immediately from the observation that the $\countnewtimes{N}{M}{r}$ are independent Poisson random variables, where the independence is inherited from the independent thinned Poisson point processes.

\end{proof}
\subsection{Asymptotics for number of new rare variants}

\begin{proposition}[Asymptotics for number of new rare variants] \label{prop:asymptotic_numtimes}
Under the setting of \Cref{prop:naive_new_num_times},
\begin{align}
	\frac{\countnewtimes{N}{M}{r}}{M^{\discount}} \bigg\vert \mcount_{1:N}
		\,\stackrel{\text{a.s.}}{\longrightarrow}\,  \xi_r
			\; \textrm{ as } M \rightarrow \infty,
			\intertext{and}
			\frac{\countnewtimes{N}{M}{\le R}}{M^{\discount}} \bigg\vert \mcount_{1:N}
		\,\stackrel{\text{a.s.}}{\longrightarrow}\, \sum_{r=1}^{R} \xi_r
			\; \textrm{ as } M \rightarrow \infty,
			\label{eq:slln_r} 
\end{align}
where $\xi_r :=  \frac{\mass}{r!} (1-\discount)_{(r-1) \uparrow} \frac{\Gamma(\conc+1)}{\Gamma(\conc+\discount)}$, and the limits hold true almost surely, conditionally  given $\mcount_{1:N}$. In addition,
\begin{align}
	\sqrt{M^{\discount}} \left( \frac{\countnewtimes{N}{M}{r}}{M^{\discount}} - \xi_r \right) \bigg\vert \mcount_{1:N}
		&\,\stackrel{\text{d}}{\longrightarrow}\,  \mathcal{N}\left(0, \xi_r \right) \quad \textrm{ as } M \rightarrow \infty
			\label{eq:clt_r} 
			\intertext{and}
	\sqrt{M^{\discount}} \left( \frac{\countnewtimes{N}{M}{\le R}}{M^{\discount}} - \sum_{r=1}^{R} \xi_r \right) \bigg\vert \mcount_{1:N}
		&\,\stackrel{\text{d}}{\longrightarrow}\,  \mathcal{N}\left(0, \sum_{r=1}^{R} \xi_r \right) \quad \textrm{ as } M \rightarrow \infty.\label{eq:clt_r_sum}
\end{align}
The limits displayed in \Cref{eq:clt_r} and \Cref{eq:clt_r_sum} hold in distribution  conditionally given $\mcount_{1:N}$.
\end{proposition} 
\begin{proof}

We start by proving \Cref{eq:slln_r} in the main text,  but in order to do this we have to define some other statistics:
\begin{equation}
\label{eq:cumulative}
\countnew{N}{M,\le R} := \sum_{r=1}^R  \countnew{N}{M,r} \quad \text{and} \quad
\countnew{N}{M,\ge R} := \sum_{r=R}^M  \countnew{N}{M,r}
\end{equation}
which have to be respectively interpreted as the number of new genomic variants observed at most $R$ times and the number of new genomic variants observed at least $R$ times. 
Our strategy is the following: we prove that $\countnew{N}{M, \ge R}/M^\discount$ converges almost surely to a constant and then we use the relation 
\begin{align} \label{eq:M_difference}
 \countnew{N}{M,R} =\countnew{N}{M, \ge R}-\countnew{N}{M, \ge R+1}
\end{align}
to prove the convergence of $\countnew{N}{M,R}$.\\
We evaluate the fist moment of $\countnew{N}{M,\ge R}/M^\discount$ a posteriori: for notation purpose, let $\centeredcountnew{N}{M}$
\begin{align} 
	\frac{1}{M^{\discount} }E_N \left( \countnew{N}{M, \ge R} \right) &= \frac{E_N [\countnew{N}{M}]}{M^{\discount}}  -\sum_{r=1}^{R-1} \frac{E_N \left( \countnew{N}{M,r} \right)}{M^{\discount}} \nonumber\\
	&\longrightarrow \frac{\mass \Gamma (\conc+1)}{\discount \Gamma (c+\discount)}  -
\mass \frac{\Gamma (\conc+1)}{\Gamma (c+\discount)} \sum_{r=1}^{R-1} \frac{(1-\discount)_{R-1 \uparrow}}{R!},\label{eq:M_bar}
\end{align}
as $M \to \infty$, where we have used \Cref{eq:E_M} and \Cref{eq:EK_m_inf}. It then follows that 
\[
    E_N [\countnew{N}{M,r}] \asymp  c_1 M^{\discount}
\]
for some positive constant $c_1 >0$. 
Besides for the variance of $\countnew{N}{M, \ge R}$ we get
\begin{align*}
	&\text{var}_N \left(\countnew{N}{M, \ge R}\right)  = \text{var}_N\left(  \countnew{N}{M} - \countnew{N}{M,\le R-1} \right) \\
	&= E_N \left\{ \countnew{N}{M}  - \sum_{r=1}^{R-1} \left(\countnew{N}{M,r}\right)  - E_N \left(\countnew{N}{M}\right) + \sum_{r=1}^{R-1} E_N(\countnew{N}{M,r})  \right\}^2\\
	& \leq E_N \left\{ \left|\countnew{N}{M} - E_N \left(\countnew{N}{M}\right) \right|    +   \sum_{r=1}^{R-1} \left| \countnew{N}{M,r}   - E_N \left( \countnew{N}{M, r}\right) \right| \right\}^2\\
	&\leq R \cdot E_N \left\{ \left|\countnew{N}{M} - E_N \left(\countnew{N}{M}\right) \right|^2  + \sum_{r=1}^{R-1} \left| \countnew{N}{M}  - E_N\left( \countnew{N}{M,r}\right)\right|^2    \right\}
\end{align*}
where the last inequality follows by a simple application of the discrete version of the H\"older's inequality. Then, using also the fact that  we get that $\countnew{N}{M}$ and  $\countnew{N}{M}$ are Poisson random variable a posteriori, we obtain:
\begin{align*}
	\text{var}_N (\countnew{N}{M, \ge R}) & \leq r  \left( \text{var}_N (\countnew{N}{M})  + \sum_{r=1}^{R-1} \text{var}_N (\countnew{N}{M,r}) \right) \\ 
	& = R   \left\{ E_N [ \countnew{N}{M} ]  + \sum_{r=1}^{R-1} E_N \left( \countnew{N}{M,r} \right) \right\} \asymp c_2 M^{\discount}
\end{align*}
where $c_2>0 $ is a positive constant. From all the previous considerations and by an application of the Markov inequality we obtain that for any $\varepsilon > 0$
\begin{align}
\PP_N \left\{ \Big| \frac{\countnew{N}{M, \ge R}}{E_n \left(\countnew{N}{M, \ge R}\right)} -1 \Big| \geq \varepsilon \right\}&
	\leq \frac{\text{var}_N\left(\countnew{N}{M, \ge R}\right)}{\varepsilon^2  \left\{ E_N \left(\countnew{N}{M, \ge R}\right) \right\}^2}\nonumber\\
	&\lesssim \frac{c_2 M^{\discount}}{\varepsilon^2 (c_1 M^{\discount})^2} \asymp \frac{1}{M^{\discount}}
\end{align}
hence we can conclude that the ratio  
\[
    \frac{\countnew{N}{M, \ge R}}{E_N \left(\countnew{N}{M, \ge R}\right)}
\]
converges
in probability to $1$. Besides if we choose the subsequence $M_k:=k^{2/\discount}$, as $k=1, 2, \ldots $, an application of the first Borel-Cantelli lemma leads us to state that the ratio converges to $1$ almost surely. Since 
$\countnew{N}{M, \ge R}$ is an increasing process as $M$ increases, for any $M$ in the interval 
$\{\lfloor m_k \rfloor, \ldots,  \lfloor m_{k+1} \rfloor \}$ we have that
\[
\countnew{N}{\lfloor m_k \rfloor, \ge R} \le \countnew{N}{M, \ge R} \le \countnew{N}{\lfloor m_{k+1} \rfloor, \ge R}
\]
where $\lfloor x \rfloor$ denotes the integer part of $x$. Hence we also have that
\[
\frac{\countnew{N}{\lfloor m_k \rfloor, \ge R}}{E_N\left(\countnew{N}{\lfloor m_{k+1} \rfloor, \ge R}\right)} 
 \leq \frac{\countnew{N}{M, \ge R}}{E_N\left(\countnew{N}{M, \ge R}\right)}
 \leq  \frac{\countnew{N}{\lfloor m_{k+1} \rfloor, \ge R}}{E_N\left(\countnew{N}{\lfloor m_{k} \rfloor, \ge R}\right)}.
\]
Leveraging the fact that the lower and upper bound of the central term converge to $1$ as $k\to\infty$, 
\[
    \frac{\countnew{N}{M, \ge R}}{E_N[\countnew{N}{M, \ge R} ]} \to 1,
\]
in an almost sure sense as $M \to +\infty$. In other words, using \Cref{eq:M_bar}, 
we have just proved that
\begin{equation}
\label{eq:M_bar_slln}
\frac{\countnew{N}{M, \ge R}}{M^{\discount}} \longrightarrow
\frac{\mass \Gamma (\conc+1)}{\discount \Gamma (c+\discount)} -
\mass \frac{\Gamma (\conc+1)}{\Gamma (c+\discount)}\sum_{r=1}^{R-1} \left(\frac{(1-\discount)_{r-1 \uparrow}}{r!} \right),
\end{equation}
$\PP_N$-almost-surely as $M \to \infty$.

The thesis now follows by \Cref{eq:M_bar_slln} and the \Cref{eq:M_difference}, indeed:
\begin{equation*}
 \frac{\countnew{N}{M,R}}{M^{\discount}} =\frac{\countnew{N}{M, \ge R}}{M^{\discount}}-
 \frac{\countnew{N}{M, \le R+1}}{M^{\discount}}  \to \mass \frac{\Gamma (\conc+1)}{\Gamma (c+\discount)} \frac{ (1-\discount)_{r-1 \uparrow} }{r! },
\end{equation*}
$P_N$-almost surely as $M$ grows.

\par To prove \Cref{eq:clt_r} in the main text, one has to prove that the characteristic functions converge, more precisely
\[
    \Phi_{\sqrt{M^{\sigma}}\left( \frac{\countnew{N}{M,R}}{M^{\sigma}} - \xi_R \right) \bigg\vert X_{1:N} } (t)
    \longrightarrow \exp\left(-\frac{t^2\xi_R}{2} \right) \quad \text{for any } t \in R,
\]
as $M$ goes to infinity.

First of all observe that the (posterior) expectation of 
$\countnew{N}{M,R}$ is such that
\begin{align}
	E_N \left(\countnew{N}{M,R}\right) &= \mass \binom{M}{R} \frac{(1-\discount)_{R-1 \uparrow} (c+\discount)_{N+M-R}}{(\conc+1)_{N+M-1 \uparrow}} \nonumber\\
	&= \left(\frac{\Gamma (M +1)}{\Gamma (M-R+1)}  \frac{\Gamma (\conc+\discount+N+M-R)}{\Gamma (\conc+N+M)} \right) \times \\
	&\times  \mass \frac{(1-\discount)_{R-1 \uparrow}}{\Gamma(R+1)} \frac{\Gamma (\conc+1)}{\Gamma (\conc+\discount)}\nonumber \\
	& = M^{\discount} \left( 1+o (M^{-1}) \right) \mass \frac{(1-\discount)_{R-1 \uparrow}}{\Gamma(R+1)} \frac{\Gamma (\conc+1)}{ \Gamma (c+\discount)}, \label{eq:E_M}
\end{align}
where we have used the asymptotic expansion of ratios of gamma functions given by \citet{tricomi1951asymptotic}.
 Let $\centeredcountnew{N}{M,R}:= \sqrt{M^{\sigma}}\left( \frac{\countnew{N}{M,R}}{M^{\sigma}} - \xi_R \right)$. Using the expansion given in \Cref{eq:E_M} it is easy to see that
\begin{align*}
    &\Phi_{\centeredcountnew{N}{M,R} \bigg\vert X_{1:N} } (t)= \\
    &=\exp \left\{  -it \sqrt{M^{\sigma}} \xi_R + M^{\sigma} \xi_R \left( 1+o (M^{-1}) \right) \left( \frac{it}{\sqrt{M^{\sigma}}} +
    \frac{t^2}{2M^{\sigma}} + o (M^{-1+\sigma}) \right) \right\}\\
    & =  \exp  \left\{ -\frac{t^2 \xi_R}{2} + o(1)\right\},
\end{align*}
therefore the thesis follows.
\end{proof}

\subsection{Number of new rare variants in presence of noise}

Similarly, for $\countnewtimes{N}{M}{r}$ and $\countnewtimes{N}{M}{\le R}$ defined in \Cref{eq:countnewtimes} with $r, R \in [M]$, we have that these quantities are almost surely finite with respective distributions 
\begin{align} \label{eq:noisenewtimes}
	\countnewtimes{N}{M}{r} \mid \mcount_{1:N} \sim \Pois \left( \gamma_r \right), \quad
	\countnewtimes{N}{M}{\le R} \mid \mcount_{1:N} 	\sim \Pois \left( \sum_{r=1}^{R} \gamma_r \right),
\end{align}
where
\begin{align*}
\gamma_r
		:= \binom{M}{r} \int_{\afreq=0}^{1} &\Bern(1 \mid \probcallfollowup \afreq)^{r} \Bern(0 \mid \probcallfollowup \afreq)^{M-r}\\ 
		&\Bern(0 \mid \probcallinit \afreq)^{N} \nu(\d \afreq) \\
		= \mass \binom{M}{r} \probcallfollowup^{r} &\frac{ (1+c)_{(r-1)\uparrow} }{ (1-\discount)_{(r-1)\uparrow} } E_B\{ (1-\probcallfollowup B)^{M-r} (1-\probcallinit B)^N \},
\end{align*}
for $B \sim \Betadist(\afreq \mid r - \discount, \conc+\discount)$.

\subsection{Proof of equality in \Cref{eq:poisson_thinning}} \label{sec:poisson_thinning}
To show that $\phi(\lambda,T,p_{err})$ is the right tail of a Poisson distribution, we recur to the Binomial thinning of Poisson random variables.

\begin{proposition}[Binomial thinning of Poisson random variables] \label{proof:poisson_thinning}
    Let $N \sim \Pois(\lambda)$. Let $X_1,\ldots,X_n \sim \Bern(q)$ independently and identically distributed. Then, $S_N:=X_1 + \ldots X_N \sim \Pois(\lambda q)$, and 
    \begin{align}
        \Pr(X_N \ge T) = \sum_{t\ge T}\frac{e^{-\lambda q}\lambda^q}{t!}.
    \end{align}
\end{proposition}
\begin{proof}
    Let $S_n \sim \Binom(n, q)$ be a binomial random variable with success probability $q$ and $n$ draws. The moment generating function of the binomial distribution is
    \[
        E\left[t^{S_n}\right] = (1-q+qt)^n,
    \]
    while the moment generating function of the Poisson distribution with parameter $\lambda>0$ is
    \[
        E\left[t^N\right] = \sum_{k\ge 0} \frac{(\lambda t)^k}{k!}e^{-\lambda} = \exp\{\lambda t -\lambda\}.
    \]
    Hence,
    \begin{align}
        E\left[t^{S_N}\right] &= \sum_{n\ge 0} \frac{E\left[t^{S_n}\right]\lambda^n e^{-\lambda}}{n!} = \sum_{n\ge 0} \frac{(1-q+qt)^n\lambda^n e^{-\lambda}}{n!} \nonumber\\
        &= \exp\left\{\lambda(1-q+qt)-\lambda \right\}= \exp\left\{  \lambda q t -\lambda q \right\},
    \end{align}
    which implies $S_N \sim \Pois(\lambda q)$.
\end{proof}

In light of this proposition, the equality in \Cref{eq:poisson_thinning} follows. 


\renewcommand{\theequation}{B.\arabic{equation}}
\setcounter{equation}{0} 

\section{Bayesian prediction with the Beta-Bernoulli product model} \label{sec:app-ionita}

We here review the approach proposed by \citet{ionita2009estimating}. The authors consider the same problem of genomic variation described in \Cref{sec:model}. 
\par \citet{ionita2009estimating} assume that there exists a finite, albeit unknown, number of loci at which genomic variation can be observed. We denote such quantity with the letter $K$. Given a pilot study $X = X_{1:N}$ with $J$ distinct variants, we can obtain the site-frequency-spectrum (or fingerprint) of the sample, 
\begin{align} \label{eq:sfs}
    \bm{f}_{N} = [f_{N,1} \ldots, f_{N,J}] \quad\text{with} \quad f_{N,j} = \sum_{\ell=1}^J\bm{1}\left(\sum_{n= 1}^N x_{n,\ell} = j\right),
\end{align}
so that $f_{N,1}$ counts the number of variants observed only once among the $N$ samples, $f_{N,2}$ the number of variants observed in exactly two samples etc.
The input data  $X_{1:N}$ is here viewed as a binary matrix, $X_{1:N} \in \{0,1\}^{N \times J}$, in which all positions at which variation is not observed are discarded, and the order of the columns is immaterial. This binary matrix is modeled via a parametric beta-Bernoulli model: the authors assume that there exists a fixed, unknown number $K < \infty$ of loci at which variation can be observed. For each $j\in [K]$, they assume that there exists an associated variant, labelled by index $j$, displayed by any observation (row) with probability $\theta_j \in [0,1]$. The frequencies $\theta_j$, $j = 1, \ldots, K$ are distributed according to a beta distribution with parameters $a$, $b$, i.e.
\begin{align*}
	\bm{\theta} = \begin{bmatrix}\theta_1 & \dots &\theta_{K} \end{bmatrix},  \quad \text{ with }\quad \theta_j \sim \text{Beta}(a,b) \; \forall j,
\end{align*}
independently and identically distributed. Conditionally on $\bm{\theta}$, 
\begin{align*}
	X_n = \begin{bmatrix} x_{n,1} & \dots & x_{n,K} \end{bmatrix}, \quad \text{ with }\quad x_{n,j} {\sim} \text{Bernoulli}(\theta_j).
\end{align*}
Therefore, the columns of the matrix $X_{1:N}$ are independently and identically distributed, while the rows are made of independent, but not identically distributed entries. Under this  model, the number of counts of each variant is binomially distributed, conditionally on the latent frequency of such variant, i.e.
\begin{align*}
    z_{N,j} \mid \theta_j:=\sum_{i=1}^n x_{n,j} \mid \theta_j \sim \text{Binomial}(N,\theta_j).
\end{align*} 
Recalling that $f_{N,j} = \sum_{\ell=1}^{J} \bm{1}(z_{N,\ell} = j)$ is the number of variants which appear exactly $j$ times among the first $N$ samples, and letting $g(x;a,b)$ be the  density function of a beta random variable with parameters $a,b$ evaluated at $x$, 
\[
    g(x;a,b) = \frac{x^{a-1}(1-x)^{b-1}}{\bm{B}(a,b)} \bm{1}_{[0,1]}(x),
\]
with $B(a,b) = \int_0^1 x^{a-1}(1-x)^{b-1} \d x = \Gamma(a)\Gamma(b)/\Gamma(a+b)$, then 
the probability that exactly $j$ of the $N$ individuals show variation at a given site is given by
\begin{align}
   p_{N,j} &=  \int_0^1 \binom{N}{j} \theta^k (1-\theta)^{N-j}g(\theta;a,b)\d \theta \nonumber \\
   &=\binom{N}{j} \int_0^1 \frac{\theta^{N+a-1}(1-\theta)^{N-j+b-1}}{\bm{B}(a,b)} \d\theta = \binom{N}{j} \frac{(a)_{j\uparrow} (b)_{N-j \uparrow}}{(a+b)_{N\uparrow}}. \label{eq:beta_probs}
\end{align}
Because we can't observe more than $N$ variants in $N$ trials, and since we don't know anything about variants which are yet to be observed, probabilities of \Cref{eq:beta_probs} are then normalized as follows:
\begin{align*}
    \lambda_{N,j} = \frac{p_{N,j}}{\sum_{\ell=1}^N p_{N,\ell}} = \frac{\binom{N}{j}(a)_{j\uparrow} (b)_{N-j \uparrow}}{\sum_{\ell=1}^N\binom{N}{\ell}(a)_{\ell\uparrow}(b)_{N-\ell\uparrow}},
\end{align*}
for all $\ell = 1,\ldots, N$.
It follows that the log likelihood for the observed data $X_{1:N}$ is given by
\begin{align*}
	\ell_{a,b}^{\text{BBPM}}(X_{1:N}) =  \log \left(\prod_{j=1}^N \lambda_{N,j}^{f_{N,j}} \right) = \sum_{j=1}^N f_{N,j} \log(\lambda_{N, j}).
\end{align*}

Notice that the expected number of variants appearing exactly once in a sample of $N$ observations can be computed in closed form,

\begin{align*}
    \eta_{N,1} &:= E[f_{N,1}] = E\left\{\sum_{j=1}^K \binom{N}{1} \ind\left(\sum_{n=1}^N x_{n,j} = 1 \right) \right\} \\
    &= KN \int_{[0,1]} \frac{\theta^{a-1}(1-\theta)^{b-1}}{\bm{B}(a,b)} \theta (1-\theta)^{N-1} \d \theta \\
    &= KN \frac{\bm{B}(a+1, N+b-1)}{\bm{B}(a,b)} = \frac{aKN}{N+b-1}\frac{\bm{B}(a, N+b)}{\bm{B}(a,b)},
\end{align*}
where we used independence of the variants, linearity of the expectation operator and the properties of the beta function.
Letting $M = tN$ be the number of additional samples to be observed, we can compute the expected number of hitherto unseen variants, to be observed in additional $M$ samples after $N$ samples have been collected as

\begin{align}
	\Delta_N(M) &= E\left\{\sum_{j=1}^{K} \bm{1} \left(\sum_{m = 1}^{M} x_{m,j}>0\right)\ind\left(\sum_{n=1}^N x_{n,j} = 0\right) \right\} \nonumber\\
			&= \frac{K}{\bm{B}(a,b)} \int_{[0,1]} (1 - (1-\theta)^{(t+1)N}) - (1-(1-\theta)^N) \theta^{a-1}(1-\theta)^{b-1} \d\theta \nonumber\\
			&= \frac{K}{\bm{B}(a,b)} \int_{[0,1]} \left\{ (1-\theta)^N - (1-\theta)^{(t+1)N}\right\} \theta^{a-1}(1-\theta)^{b-1} \d\theta \nonumber \\
			&= K \frac{\bm{B}(a,N+b)}{\bm{B}(a,b)}  -  K\frac{\bm{B}(a,N(t+1)+b)}{\bm{B}(a,b)}   \nonumber
\end{align}
Now, noting that 
\begin{align*}
    K\frac{\bm{B}(a,N+b)}{\bm{B}(a,b)} =  \frac{\eta_{N,1}}{a} \frac{N+b-1}{N}
\end{align*}
and
\begin{align*}
    K\frac{\bm{B}(a,N(t+1) + b)}{\bm{B}(a,b)} = \frac{\eta_{N,1}}{a} \frac{N+b-1}{N} \frac{\bm{B}(a,N(t+1)+b)}{\bm{B}(a,N+b)},
\end{align*}

it follows that

\begin{align} \label{est:ionita}
	\Delta_N(M) &= \frac{\eta_{N,1}}{a}\frac{N+b-1}{N}\left\{1-\frac{\bm{B}(a, N(t+1)+b)}{\bm{B}(a,N+b)} \right\}.
\end{align}
Importantly, $\Delta_N(M)$ depends on $K$ only via $\eta_{N,1}$. To use the estimator $\Delta_N(M)$, \citet{ionita2009estimating} substitute $\eta_{N,1}$ with its empirical counterpart $f_{N,1}$, the number of variants which have been observed exactly once in the sample $X_{1:N}$. The parameters $a,b$ are found via maximization of the log-likelihood of the model, 
\begin{align*}
	\{a^*, b^*\} = \argmax_{a>0, b>0} \left\{\ell_{a,b}^{\text{BBPM}}(X_{1:N})\right\}
\end{align*}
\begin{remark}
    The estimator obtained in \Cref{est:ionita} crucially relies on the empirical frequency of variants observed once among the first $N$ draws, $f_{N,1}$. For example, if a dataset had $f_{N,1} = 0$, $\Delta_N(M) = 0$ for every $M>0$.
\end{remark}

\renewcommand{\theequation}{C.\arabic{equation}}
\setcounter{equation}{0} 

\section{Linear program to estimate the frequencies of frequencies} \label{sec:app-zou}

\citet{zou2016quantifying} assume, in the same way as \citet{ionita2009estimating}, that there exists a finite albeit unknown number of sites at which variants can be observed. They formalize the problem of hitherto unseen variants prediction as that of recovering the distribution of frequencies of all the genetic variants in the population, including those variants which have not yet been observed.
\par They assume that each possible variant in a sample is independent of the other variants, and that the $j$-th variant appears with a given probability $\theta_j$ conditionally independently and identically distributed across all the individuals observed - i.e.\ the $\theta_j$ are parameters of independent Bernoulli random variables $x_{n,j}$ for all $n \geq 1$ and $j$. Therefore the pilot study $X_{1:N}$ is modeled by a collection of independent Bernoulli random variables, which are also identically distributed  along each column, and the sum $z_{N,j} \mid \theta_j := \sum_{n=1}^N x_{n,j} \mid \theta_j \sim \Binom(N, \theta_j)$. From the frequencies $z_{N,1},\dots,z_{N,J}$ of the $J$ variants observed among the first $N$ samples, it is possible to compute the fingerprint of the sample, $\bm{f}_N$. Given the fingerprint, the goal is to recover the population's histogram, which is a map quantifying, for every $\theta \in[0,1]$, the number of variants such that $\theta_j = \theta$. Formally, learn a map $h$ from the distribution of frequencies to integers
\begin{align}\label{eq:hist}
	h:(0,1] \to \mathbb{N} \cup \{0\}
\end{align}
Because for $N$ large enough the empirical frequencies associated to common variants should be well approximated by their empirical counterpart, \citet{zou2016quantifying} only consider the problem of estimating the histogram from the truncated fingerprint $\bm{f}_N^{(\kappa)} = \{ f_{N,j} \;:\; j/N \le 100 \times \kappa\}$. In their analysis, the authors only consider $\kappa = 1$, i.e.\ they consider ``common'' variants all those variants that appear in more than $1\%$ of the sample elements. Moreover, rather than learning a continuous function as described by \Cref{eq:hist}, they solve a discretized version of the problem. They fix a discretization factor $\delta \ge 1$, and then set up a linear program in which the goal is to correctly estimate  the population histogram associated to the frequencies in the set $\SS ~=~ \left\{\frac{1}{1000 N}, \delta \frac{1}{1000 N}, \dots, \delta^i \frac{1}{1000 N}, \dots, \kappa \right\}$. The value $\delta$, given $\kappa$, determines how many frequencies are going to be estimated in $(0,\kappa]$: the lower $\delta$, the finer the discretization. The authors suggest using $\delta = 1.05$. In our experiments, we set $\delta = 1.01$, for which we find the method to produce better results, at the cost of a small additional computational effort. Finally, the problem of recovering the histogram is solved through the following optimization:
\begin{align*}
	\min_{h(\theta), \theta \in \SS}\sum_{j: j \leq  N\kappa} \frac{1}{1+f_{N,j}} \left|f_{N,j} - \sum_{\theta \in \SS} h(\theta) \Binom(N,\theta,j) \right| \\
	\intertext{subject to} h(\theta)\geq 0, \sum_{\theta \in \SS} h(\theta) \leq K, \sum_{\theta \in \SS} \theta \cdot h(\theta ) + \sum_{j: j > N\kappa}^{J} \frac{j}{N} f_{N,j} = \frac{J}{N},
\end{align*}
where $K$ is an upper bound on the total number of variants, and  $\Binom(N,\theta,j)$ is the probability that a Binomial draw with bias $\theta$ and $N$ rounds is equal to $j$.
\par Given the histogram $\hat{h}$ which solves the linear program above, one can obtain an estimate of the number of unique variants at any sample size $M$ using
\begin{align*} 
	V(\hat{h}, M) = \sum_{\theta :\hat{h}(\theta )>0} \hat{h}(\theta) (1-(1-\theta)^M).
\end{align*}
Following \citet{zou2016quantifying}, we refer to this estimator as the ``unseenEST'' estimator.

\renewcommand{\theequation}{D.\arabic{equation}}
\setcounter{equation}{0} 

\section{Jackknife estimators} \label{sec:app_jack}

Jackknife estimators for predicting the number of hitherto unseen species were first introduced by in the capture-recapture literature by \citet{burnham1978estimation}. 

Given $ X_{1:N} \stackrel{iid}{\sim} F(\psi)$ for some distribution $F$ and some parameter $\psi$, let $ \hpsn = \hpsn(X_{1:N})$ be an estimator of $\psi$ with the property that
\begin{align} 
    E [\hpsn] = \psi + \frac{a_1}{N} + \frac{a_2}{N^2} + \dots,  \label{eq:jack_bias}
\end{align}
for fixed constants $a_1, a_2, \dots$. Without loss of generality assume $\hpsn$ to be symmetric in its inputs $X_{1:N}$, and denote with $\II \subset [N]$ a subset of given size $p$, let $\hps_{N - p, \II}$ be the estimate obtained by dropping the observations whose indices are in $\II$. Similarly, let
\begin{align} \label{eq:jack_corr}
    \hps_{N}^{(p)} = \binom{N}{p}^{-1}\sum_{\II:|\II|=p} \hps_{N-p,\II}
\end{align}
The idea of the Jackknife estimator is that, if the assumption of  \Cref{eq:jack_bias} holds, we can improve over $\hpsn$ by using a correction originating from \Cref{eq:jack_corr}. The $p$-th order Jackknife estimator is defined as
\begin{align}
    \hps_N^{J_p} = \frac{1}{p}\sum_{\ell=0}^{p} \left\{(-1)^{\ell} \binom{p}{\ell} (N-\ell)^{p} \hps_N^{(\ell)} \right\}. \label{eq:jack}
\end{align}
Under the assumption of \Cref{eq:jack_bias}, the estimator of \Cref{eq:jack} has bias approaching zero polynomially fast in the correction order, $\text{Bias}(\hps_N^{J_p}) \sim N^{-p-1}$.

\subsection{An estimator for the population size} 

\citet{burnham1978estimation} introduced a nonparametric procedure to estimate the total number of animals present in a closed population when capture-recapture data is available. Assume that there is a fixed, but unknown number $K$ of total species.  Over the course of $N$ repeated observational experiments, $J \le K$ distinct species are observed. 

Let $X_{1:N}$ be the collection of available data, in which $X_n = [x_{n,1},\ldots,x_{n,J}]$, with $x_{n,j} = 1$ if species $j$ has been observed on the $n$-th experiment, and $0$ otherwise. Moreover, assume that each species $j \in [K]$ has a fixed, but unknown probability $\theta_j \in (0,1]$ of being observed.

Notice that while \citet{burnham1978estimation} developed the estimator having in mind a fixed and finite population of animals, we can also think of each sample $X_n$ as a genomic sequence characterized by the presence or absence of genetic variants at different sites.

 The nonparametric MLE for the total support size $K$ is given by $\hat{K}^{\MLE}(X_{1:N}) = \hat{K}^{\MLE}_N = J$. Clearly $ J\leq K$, therefore $J$ is a biased estimate for $K$. If one assumes, in a similar spirit to \Cref{eq:jack_bias}, that
\begin{align}
    E[\hat{K}^{\MLE}_N] = K +\frac{a_1}{N} + \frac{a_2}{N^2} + \dots,
\end{align}
then one could use the jackknife estimator of \Cref{eq:jack} to estimate $K$. This requires computing $\hps_N^{(\ell)}$ for $\ell = 1,\dots, p$, which are linear functions of the observed fingerprint $\bm{f}_N$. 

\textbf{The case $p = 1$}: We outline the approach for $p=1$. Let $q_{N,n}$ be the number of animals which have been observed only once out of the $N$ trials, exactly on the $n$-th,
\[
    q_{N,n} = \sum_{j\geq 1} \bm{1}(x_{n,j}=1) \bm{1}\left(\sum_{n'\neq n} x_{n',j} = 0 \right)
\]
Then, because $q_{N,1} + \ldots + q_{N,N} = f_{N,1}$ by construction,
\begin{align}
    \hat{K}_{N}^{(1, \setminus n)} = J - q_{N,n} \quad \text{and} \quad \hat{K}_N^{(1)} = \frac{1}{N}\sum_{n=1}^N \hat{K}_{N}^{(1, \setminus n)} = J - \frac{f_{N,1}}{N}.
\end{align}
Therefore, the order 1 jackknife estimator for the total population size is obtained by plugging in $\hat{\psi}_N^{(0)} = J$ and $\hat{\psi}_N^{(1)}=J - \frac{f_{N,1}}{N}$ in \Cref{eq:jack}: 
\begin{align} \label{jack_1}
    \hat{K}_N^{J_1} = J + \frac{N-1}{N} f_{N,1}
\end{align}

\textbf{The case for general $p$}: For any $p\leq N$, it always holds that 
\begin{align}
    \hat{K}_N^{(p)} = J - \binom{N}{p}^{-1} \sum_{\ell=1}^p \binom{N-\ell}{p-\ell} f_{N,\ell}
\end{align}
This formula allows to obtain the general Jackknife estimator of order $p$,
which is a linear function of the observed number of species $J$ and correction terms which depend on the fingerprint $\bm{f}_N$,
\[
    \hat{K}_{N}^{J_p} = \sum_{\ell=1}^p a_{N,\ell}^{(p)} f_{N,\ell}.
\]

\subsection{Estimators for the number of hitherto unseen genomic variants}

Taking inspiration from the approach of \citet{burnham1978estimation}, \citet{gravel2011demographic} and \citet{gravel2014predicting} developed Jackknife estimators for the number of hitherto genomic variants which are going to be observed in $M$ additional samples given $N$ initial ones. Let $V(N)$ denote the total number of variants observed in $N$ samples, and let $\Delta(N+M, N):= H_{N+M-1} - H_{N-1} = \sum_{\ell = N}^{M+N-1} 1/\ell$, where 
\begin{align*}
    H_N ~=~ 1 + 1/2 + \ldots + 1/N
\end{align*}
is the $N$-th harmonic number. To derive their estimators, the authors use the assumption that for a given order $p\geq 1$ the total number of variants present in $N+M$ samples can be estimated as follows:
\begin{align} \label{eq:jack_gravel}
    \hat{V}_N^{(M)} =  V(N) + \sum_{\ell  =1}^p a_{N,\ell}^{(p)} \Delta(N+M, N)^\ell, 
\end{align}
where $\bm{a}^{(p)}_N = [a_{N,1}^{(p)},\ldots,a_{N,p}^{(p)}]$ are constants which depend on the initial sample size $N$, on the order $p$ and on the fingerprint of the sample $\bm{f}_N$. This assumption is exact in the case of a constant size and neutrally evolving population (\citet{gravel2011demographic}). For a given order $p$ the unknown coefficients are obtained by solving the following system of equations:
\begin{align}
    \hat{V}_N^{(M)} = \hat{V}_{N-1}^{(M)} = \ldots = \hat{V}_{N-p}^{(M)}.
\end{align}
Equating $\hat{V}_N^{(M)}$ to $\hat{V}_{N-j}^{(M)}$ using \Cref{eq:jack_gravel} for $j = 1,\ldots, p$, we obtain a system of $p-1$ equations of the form
\begin{align}
    V(N) - V(N-j) = \sum_{\ell=1}^p a_{N,\ell}^{(p)} (\Delta(N+M, N-j)^\ell - \Delta(N+M, N)^\ell).
\end{align}
Using the additional equality
\begin{align}
    V(N) - V(N-\ell) = \sum_{j = 1}^{\ell} \frac{\binom{\ell}{j}}{\binom{N}{j}} f_{N,j}.
\end{align}
we can solve for $a_{N,\ell}^{(p)}$ and express these in terms of $N, \Delta(N+M, N)$ and the fingerprint $\bm{f}_N$, and the final estimator is a linear function of the fingerprint $\bm{f}_N$.

\subsection{Choice of the jackknife order} \label{sec:choice_jack}

As pointed out in \citet{burnham1978estimation}, the optimal order $p$ of the jackknife estimator heavily depends on the data under consideration. It is therefore desirable to obtain a procedure which uses the data to guide the choice of such order. \citet{burnham1978estimation} phrase this decision problem as a sequential hypothesis test, in which one keeps increasing the order of the jackknife until the data suggests that the drop in bias obtained by increase the jackknife order is exceeded by the gain in variance. Precisely, for $p=1,2,\ldots$ one sequentially performs the following test:
\begin{align}
    H_{0,p} :\; E(\hat{K}_N^{J_{p+1}} - \hat{K}_N^{J_p}) = 0 \quad\text{versus}\quad H_{a,p}:\;E(\hat{K}_N^{J_{p+1}} - \hat{K}_N^{J_p}) \neq 0.
\end{align}
If $H_{0,p}$ is rejected, this has to be interpreted as evidence that the bias reduction provided by the $p+1$-th order (with respect to the $p$-th) is larger than the associated increase in variance, and $p+1$-th order should be preferred to the $p$-th order \citep{burnham1978estimation}. The first order $p$ for which the test fails to reject the null hypothesis is chosen as the jackknife order.

The test relies on the following observation: the difference between two jackknife estimators of different orders $p+1$ and $p$ is given by
\begin{align}
    \hat{K}_N^{J_{p+1}} - \hat{K}_N^{J_p} = \sum_{\ell=1}^{p+1} \tilde{a}_{N,\ell}^{(p+1,p)} f_{N,p},
\end{align}
again a linear combination of the fingerprint. Because the conditional distribution of the fingerprint is independent of $K$ given $J$, the minimum variance estimator of the conditional variance is given by
\begin{align}
    \text{est var}(\hat{K}_N^{J_{p+1}} - \hat{K}_N^{J_p} \mid J) = \frac{J}{J-1}\left\{ \sum_{\ell=1}^p (\tilde{a}_{N,\ell}^{(p+1,p)})^2 f_{N,\ell} \frac{(\hat{K}_N^{J_{p+1}} - \hat{K}_N^{J_p})^2}{J} \right\}.
\end{align}
Under $H_{0,p}$, the test statistic
\begin{align}
    T_p = \frac{\hat{K}_N^{J_{p+1}} - \hat{K}_N^{J_p}}{\sqrt{\text{est var}(\hat{K}_N^{J_{p+1}} - \hat{K}_N^{J_p} \mid J)}}
\end{align}
is approximately normally distributed.

For a given extrapolation size $M$, we can apply the same procedure to the estimators derived in \citet{gravel2011demographic} and \citet{gravel2014predicting}, which are again linear combinations of the fingerprint.

\renewcommand{\theequation}{E.\arabic{equation}}
\setcounter{equation}{0} 

\section{Good-Toulmin estimators} \label{sec:app-gt}

In recent work, \citet{chakraborty2019somatic} used the classic smoothed Good-Toulmin estimator \citep{good1956number, efron1976estimating, orlitsky2016optimal} in the context of rare variants prediction. Under the same sampling model assumed by \citet{zou2016quantifying}, this method allows to predict the number of additional variants that will be observed in $M$ additional samples by using the formula

\begin{align}
    \Delta_N(M) \mid X_{1:N} = \begin{cases} \sum_{r=1}^{\infty} (-1)^{r+1}\left(\frac{M}{N}\right)^r f_r &\mbox{ if } M/N \le 1 \\ \sum_{r=1}^{\infty} (-1)^{r+1}\left(\frac{M}{N}\right)^r f_r P(M, N, r) &\mbox{ if } M/N>1 \end{cases},
\end{align}
where
\begin{align}
    P(M, N, r) = \Pr(\Binom(\kappa(M,N)), \theta(M,N) \ge r)
\end{align}
where the smoothing parameters $\kappa$ and $\theta$ can take two different forms: either
\begin{align} \label{eq:GT_1}
    \kappa(M,N) = \lfloor 0.5 \log_2((M^2/N)/(M/N-1))\rfloor \quad \text{and} \quad \theta(M,N) = 1/(M/N+1) 
\end{align}
or
\begin{align} \label{eq:GT_2}
    \kappa(M,N) = \lfloor 0.5 \log_3((M^2/N)/(M/N-1))\rfloor \quad \text{and} \quad \theta(M,N) = 2/(M/N+2).
\end{align}


\renewcommand{\theequation}{F.\arabic{equation}}
\setcounter{equation}{0} 

\section{Additional details and experiments on the TCGA and MSK-impact dataset} \label{sec:app_TCGA_msk}

\subsection{Details about the experimental setup} \label{sec:app_TCGA_details}

The TCGA and the MSK-impact datasets are two publicly available cancer genomics datasets, containing somatic variants from $N=10{,}275$ and $N=9{,}091$ samples respectively. In both datasets, for each patient-id, we have access to a list of recorded variants, together with (i) the gene at which the variant was observed, (ii) and the type of cancer the patient was diagnosed with. The TCGA dataset contains variants from the whole exome (variants are recorded across a total of $G=19{,}441$ genes) for 33 different cancer types. The MSK-impact is a hybridization capture-based NGS clinical assay that is capable of detecting mutations in all exons and selected introns and promoter mutations in $G_1=412$ cancer-associated genes \citep{chakraborty2019somatic} across a finer classification of 329 different cancer types. 

We thank an anonymous reviewer for pointing out that, as in \citet{chakraborty2019somatic}, this data might suffer from normal cell contamination and tumor heterogeneity. We believe future work might address this issue by introducing an additional parameter variable describing what fraction of each sample belongs to the tumor, but this extension is beyond the scope of the present work.

From the data, it is natural to obtain a binary encoding of variation as described in \Cref{sec:model} either for specific cancer types (i.e.\, across all genes, restrict our attention to patients who got diagnosed with the same cancer type) or to specific genes. That is to say, use the machinery of \Cref{sec:prediction} to either predict (A) how many new variants we are going to observe from new samples that have been classified with a specific cancer type or (B) how many new variants we are going to observe from new samples in a specific gene (for any cancer type).

When compared across tumors, viceversa, the TCGA and MSK-impact, even when restricting the attention to the same targeted genes, can show substantially different behavior (see \Cref{fig:diff_cancers}). Therefore, we don't try to predict the rate of growth of new variants observed for cancer types using, e.g.\ the TCGA dataset as a pilot study and the MSK-impact as a follow-up study.

\begin{figure}
      \centering \includegraphics[width=\textwidth,height=\textheight,keepaspectratio]{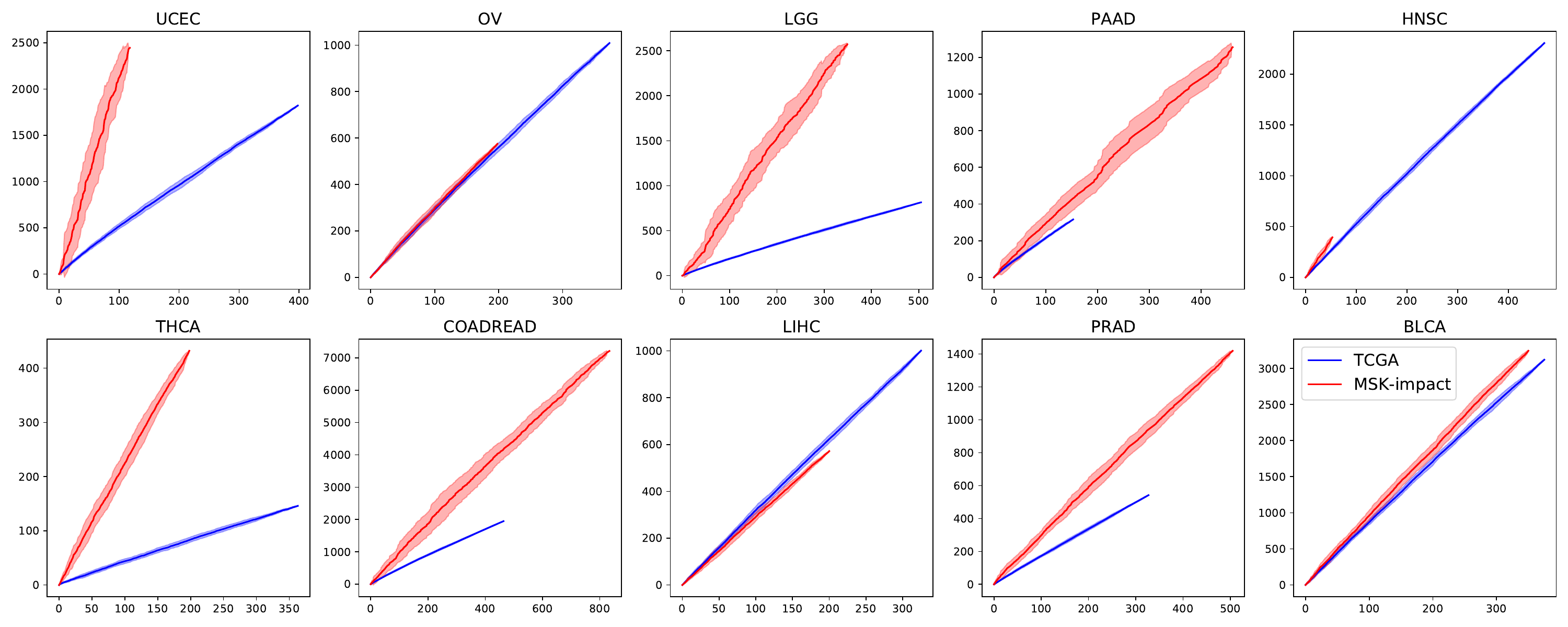}
      \caption{For ten different cancer subtypes, we plot the rate of growth of the number of distinct variants as a function of the sample size for TCGA and MSK-impact. We shuffle the observations according to 20 different permutations, the solid line represents the average number observed across all permutations, and the shaded regions represent one standard deviation above/below. For both datasets, we remove hypermutated outliers - we drop those samples that show more than 3 times the median number of variants observed in the tumor subtype dataset.}
\label{fig:diff_cancers}
\end{figure}

Across genes, the TCGA and MSK-impact are relatively similar in terms of number of variants observed per gene (see \Cref{fig:tcga_impact_comparison} and 
\Cref{fig:tcga_impact_comparison_2}). 

 \begin{figure}
      \centering \includegraphics[width=\textwidth,height=\textheight,keepaspectratio]{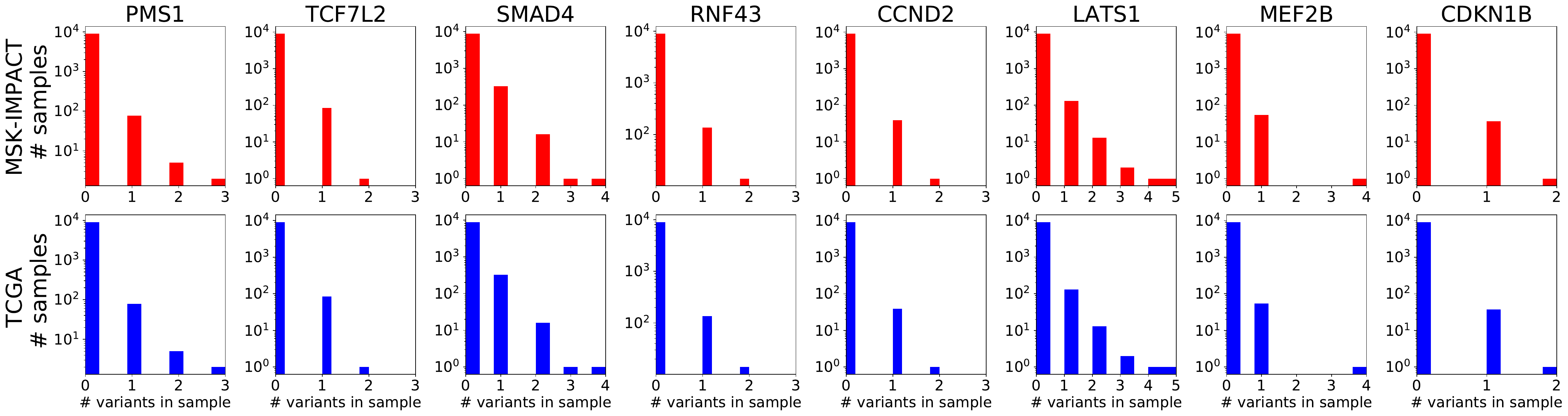}
      \caption{Comparing the number of variants observed for a given gene across the samples (top row: IMPACT, bottom row: TCGA; different columns are different genes).}
\label{fig:tcga_impact_comparison}
\end{figure}

 \begin{figure}
      \centering \includegraphics[width=\textwidth,height=\textheight,keepaspectratio]{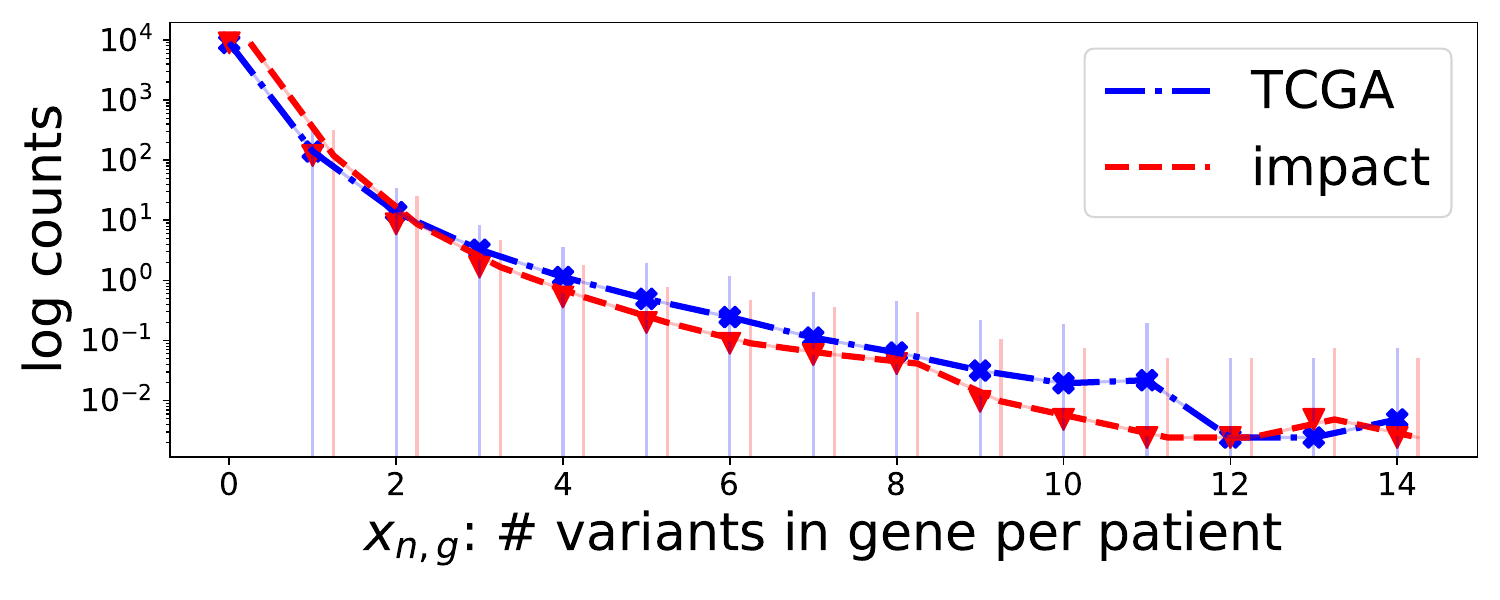}
      \caption{For every patient $n$ and every gene $g$, we let $x_{n,g}$ be the number of variants recorded for patient $n$ in gene $g$. This number ranges from $0$ to $14$ in both TCGA and MSK-impact. We plot the histogram of the $
      {x_{n,g}
      }$ for both the MSK-impact (red) and the TCGA (blue), $y$-axis in log-scale.}
\label{fig:tcga_impact_comparison_2}
\end{figure}

\par For the experiments under changing experimental conditions, i.e.\ in the setting in which samples are noisy, we  perform further thinning to generate the data. That is to say, given $K$ variants across samples, each with empirical frequency  $\hat{\theta}_k$, $k = 1,\ldots,K$ and for a given choice of $\threshold$, sampling error $p_{\text{err}}$ and sequencing quality $\lambda$, we obtain the associated probability $\phi$ that at least $\threshold$ successful reads are obtained at any position $k$ for any individual $n$, i.e.
\begin{align*}
    \phi(\lambda, \threshold, p_{\text{err}}):= \sum_{t\geq\threshold} \frac{1}{t!} e^{-\lambda(1-p_{\text{err}})} \{\lambda(1-p_{\text{err}})\}^t.
\end{align*}
Then, an individual observation $X_n = [x_{n,1},\ldots,x_{n,K}]$ is obtained by independently sampling Bernoulli random variables, 
\[
    x_{n,k} \mid \hat{\theta}_k, \phi(\lambda, \threshold, p_{\text{err}}) \sim \Bern(\hat{\theta}_j\phi(\lambda, \threshold, p_{\text{err}})).
\]

\subsection{Prediction across genes with the same experimental conditions} \label{sec:app_tcga_genes}

We replicate the setup of \citet{chakraborty2019somatic} and use the TCGA dataset as a pilot study and the MSK-impact as a follow-up study. We restrict our attention to the 412 targeted genes in the MSK-impact.
For each targeted gene, in a similar way as to what done for the experiments in \Cref{sec:exp}, we create ten different folds of the data, by sampling (without replacement) for each fold a random subset of $80\%$ of the data. We train on these folds of the TCGA dataset our BNP predictor, as well as the Good-Toulmin method used in \citet{chakraborty2019somatic} and the Jackknife estimators proposed by \citet{gravel2014predicting} to predict both (1) the expected number of new variants in a single new sample in MSK-impact and (2) the total number of new variants we expect to see in a total cohort of the same size as the MSK-impact ($M=9{,}091$ samples). The linear programming of \citet{zou2016quantifying} failed to provide reliable prediction, especially for those genes in which very few observations showed variation (i.e.\ those genes with few ``active'' patients). We therefore we excluded it in our analysis. As shown in \Cref{fig:tcga_impact_replicating_chakra_fig_4}, the three methods considered (Good-Toulmin, fourth order Jackknife and our BNP predictor) performed similarly. 

To quantify the predictive performance of the different methods we used the following setup: for gene $g$, let $Z^{\text{TCGA}, g}\in\{0,1\}^{N\times K}$ be the binary matrix of variation in gene $g$ in the TCGA and similarly $Z^{\text{MSK}, g}\in\{0,1\}^{M\times K}$ the corresponding matrix for the MSK-impact. Let $p_{1,g}$ be, across all patients in the MSK-impact, the average observed number of new variants displayed by one individual in gene $g$ which are not displayed by any patient in the TCGA dataset in gene $g$:
\[
    p_{1,g}:= \frac{1}{M}\sum_{m} \left[\sum_k 1(Z^{\text{MSK}, g}_{m,k}=1)1\left\{\sum_n \left(Z^{\text{TCGA}, g}_{n,k}=0\right)\right\}\right].
\] 
Similarly, let $p_{M,g}$ be the total number of new variants displayed in the MSK-impact that were not present in the TCGA dataset for gene $g$:
\[
    p_{M,g} = \sum_k \left\{1\left(\sum_m Z_{m,k}>0\right) 1\left(\sum_n Z_{n,k}=0\right)\right\}.
\]
For a given method $\mathcal{A}\in\{\text{BNP}, \text{GT}, \text{J4}\}$ (our Bayesian nonparametric method, the Good-Toulmin method used in \citet{chakraborty2019somatic} and the fourth order Jackknife proposed in \citet{gravel2014predicting}), let $\hat{p}_{1,g}^\mathcal{A}$ and $\hat{p}_{M,g}^{\mathcal{A}}$ be the prediction of $p_{1,g}$ and $p_{M,g}$ respectively. Given a threshold $t$ and method $\mathcal{A}$, for $m\in\{1,M\}$
\begin{align}
    \ell(t; g, m) = \min\left\{ \left|\frac{\hat{p}_{m,g} - p_{m,g}}{p_{m,g}} \right|, t\right\}, \label{eq:loss_ape}
\end{align}
and, summing over all genes $g$,
\begin{align}\label{eq:APE}
    \ell(t; m) = \sum_g \ell(t; g, m).
\end{align}
For $\alpha=1$ and $t = \infty$, \Cref{eq:APE} quantifies the absolute average percentage error. Setting $t<\infty$, instead, implies a Huber loss, in which we truncate the (percentage) loss in case it is larger than the threshold $t$. In practice, in our experiments, for every gene $g$ we have ten different folds of the data, which lead, for every predictor, to ten different prediction values. For every gene, and for every prediction method, we retain the median predicted value across the folds, and compute the error (\Cref{eq:APE}) relative to that fold. 
\begin{figure}
      \centering \includegraphics[width=\textwidth,height=\textheight,keepaspectratio]{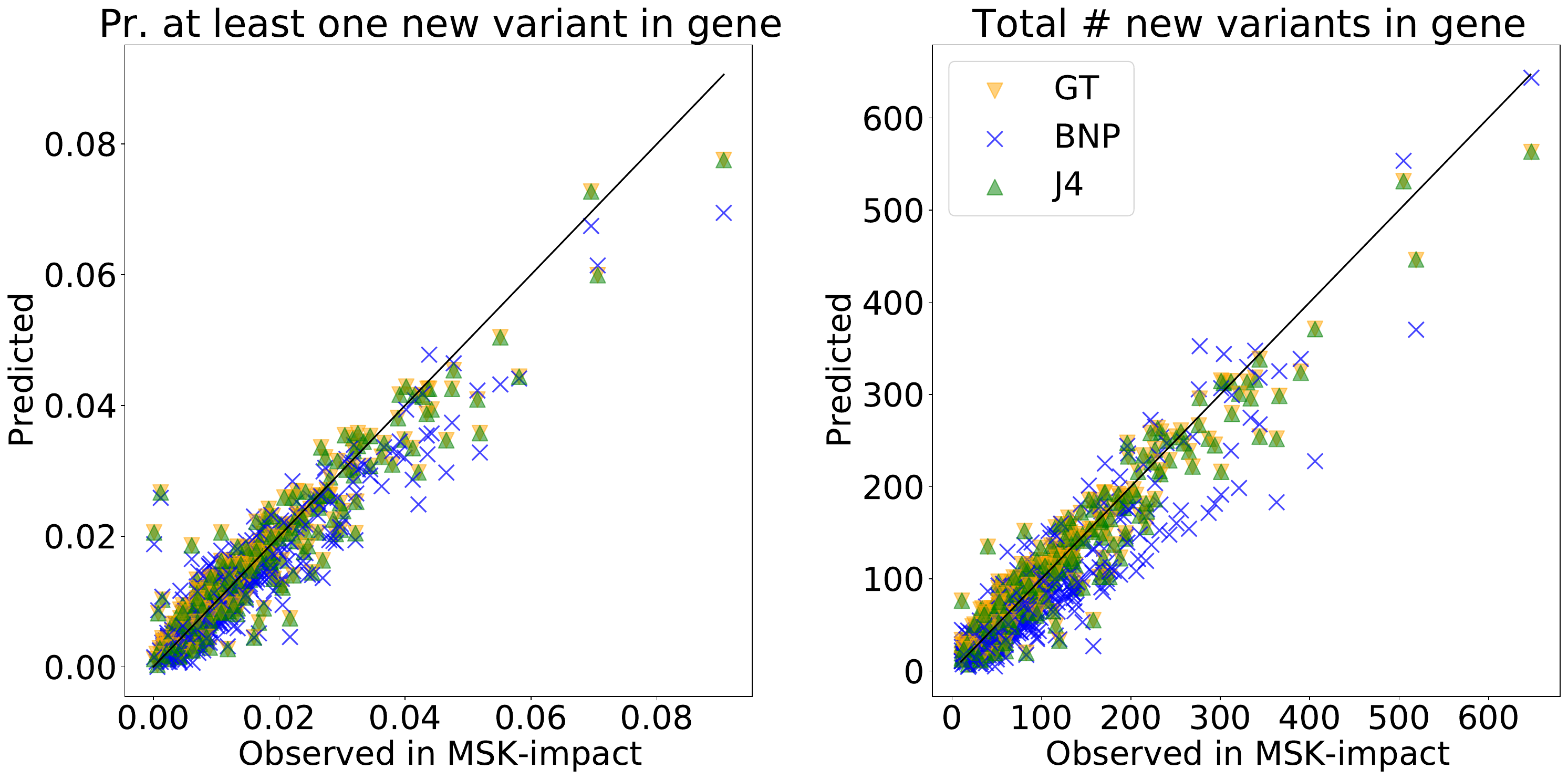}
      \caption{For every targeted gene present in the MSK-impact dataset, we use samples from the TCGA dataset to predict the expected number of new variants in that gene (1, left) from a new sample and (2, right) in a new cohort of the same size as the MSK-impact.}
\label{fig:tcga_impact_replicating_chakra_fig_4}
\end{figure}

\begin{figure}
      \centering \includegraphics[width=\textwidth,height=\textheight,keepaspectratio]{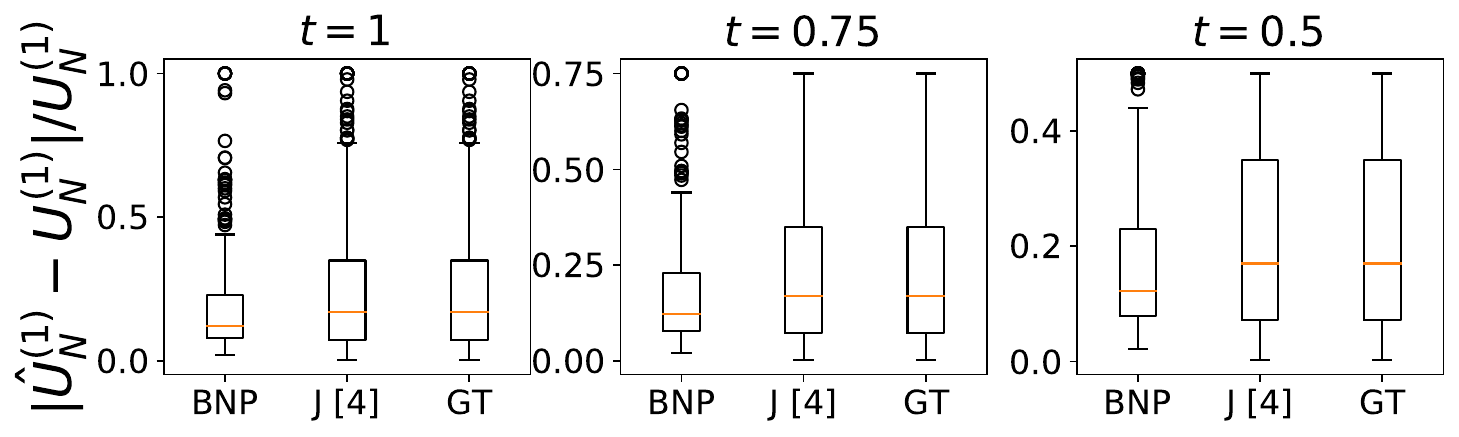}
\caption{For the one-step-ahead prediction problem, we use the TCGA dataset (on each targeted gene) to predict the expected number of new variants in one additional sample in the MSK-impact dataset using our BNP predictor, the (smoothed) Good-Toulmin predictor, as well as the fourth order jackknife. We report, for each fold in the data and for each gene, the loss introduced in \Cref{eq:APE} for $t \in \{1, 0.75, 0.5 \}$ through a boxplot.}
\end{figure}

\begin{figure}
      \centering \includegraphics[width=\textwidth,height=\textheight,keepaspectratio]{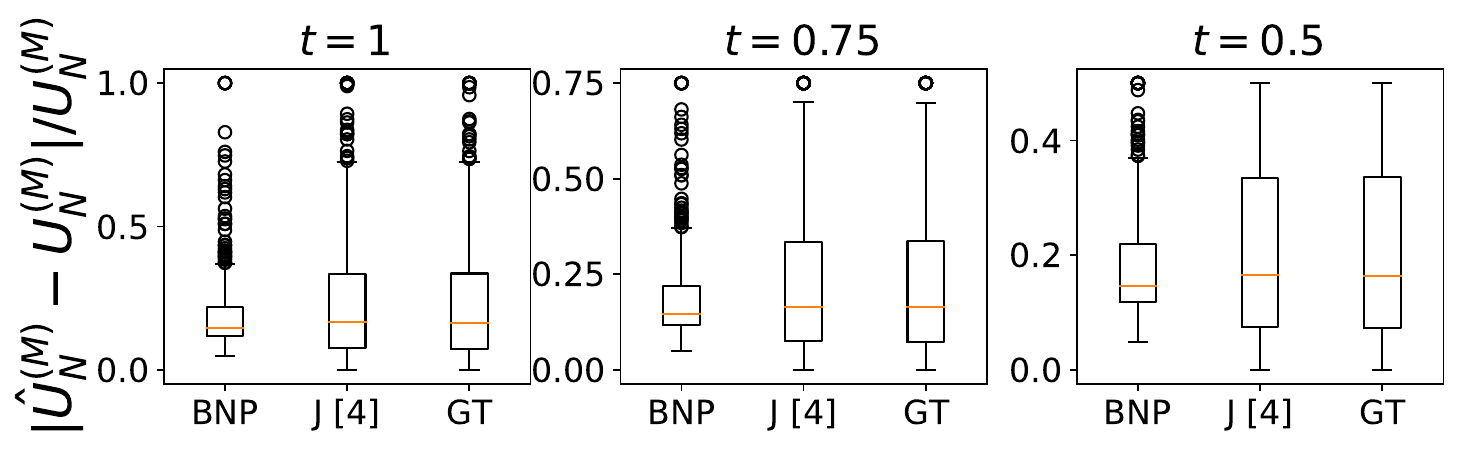}
\caption{For the many-steps-ahead prediction problem, we use the TCGA dataset (on each targeted gene) to predict the expected number of new variants in a new dataset of the size of the MSK-impact dataset using our BNP predictor, the (smoothed) Good-Toulmin predictor, as well as the fourth order jackknife.We report, for each fold in the data and for each gene, the loss introduced in \Cref{eq:APE} for $t \in \{1, 0.75, 0.5 \}$ through a boxplot.}
\end{figure}

\subsection{Predictions across genes with different experimental conditions}

We then move to prediction under changing experimental conditions. Given the original MSK-impact cohort (sampled at an average of 480x depth), we generate pseudo-observations by subsampling data at 100x. To do so, we assume that  each patient-locus pair for which variation is observed, $X_{n,k} = 1$, in the final dataset is generated as follows: first a Poisson random variable $ Y_{n,k} \sim \Pois (\lambda)$ is drawn, and then $X_{n,k}=1(Y_{n,k}\ge T)$ is kept. We pick $\lambda = 100$ and $T=90$ in our experiment. On the new dataset, again, our predictor outperforms alternative methods (see \Cref{fig:genes_changing}). The explanation follows the one given in \Cref{sec:exp}: our method can adapt to the changing conditions whereas competing methods cannot.

\begin{figure}
      \centering \includegraphics[width=\textwidth,height=\textheight,keepaspectratio]{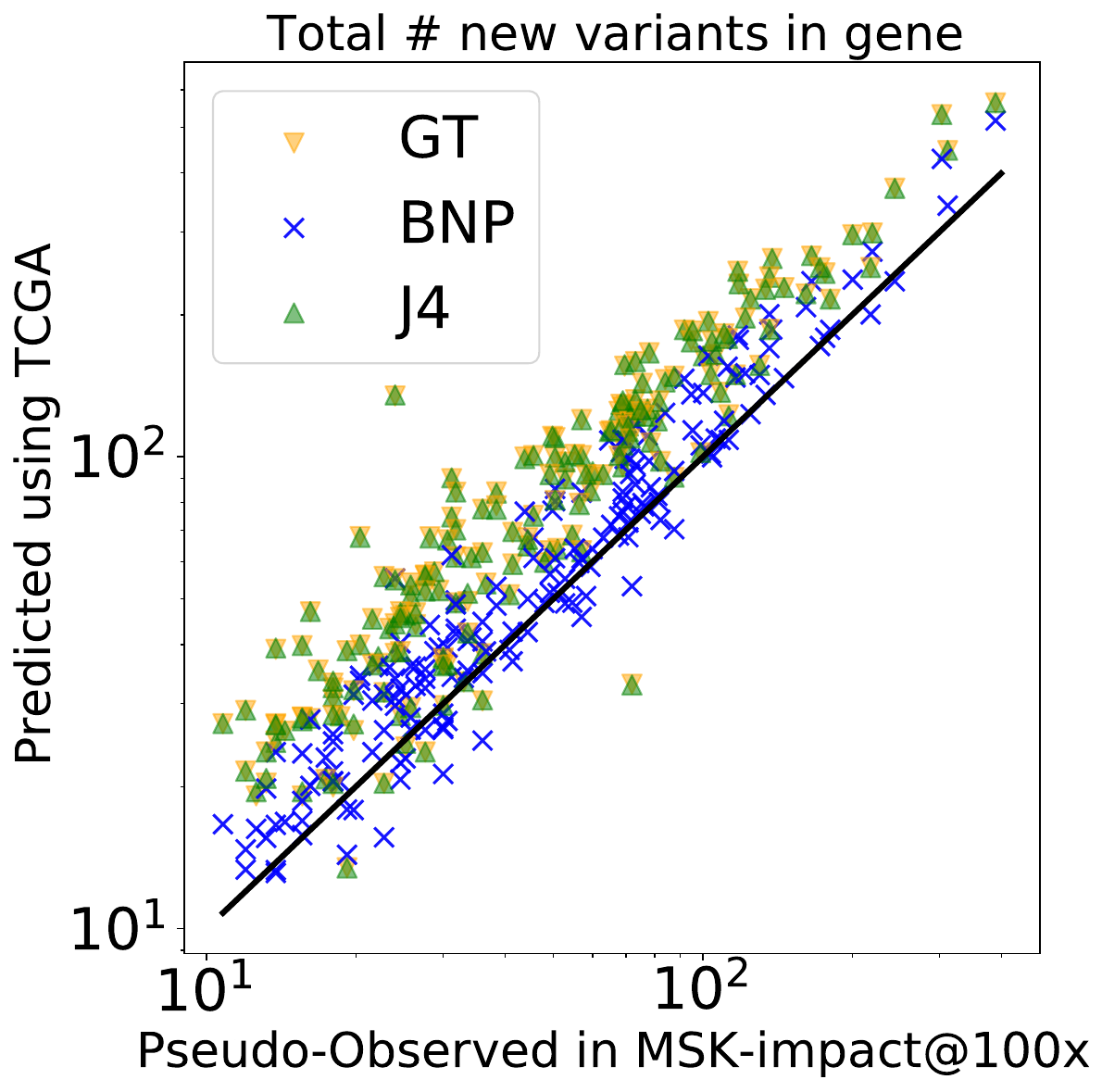}
      \caption{For every targeted gene present in the MSK-impact dataset, we use samples from the TCGA dataset to predict the expected number of new variants in that gene in a new cohort of the same size as the MSK-impact, but now assuming the MSK is sampled at a different sequencing depth than the TCGA.}
\label{fig:genes_changing}
\end{figure}

\subsection{Optimal design of experiments}
Last, we use prediction under changing conditions to inform the (optimal) design of a follow up study. We choose again the cost function $C(m,\lambda) = m\log \lambda$, to sample $m$ new observations at depth $\lambda$. We fix a budget which allows us to sample at full depth ($\lambda=480$) only $M'=M/2$ observations, half of the MSK-impact sample size.

We find that the same trade-off observed in \Cref{sec:exp_opt_d} is also present here. Across genes, we can find a configuration of the sequencing depth ($\lambda = 62$) that leads to a median gain of $6$ additional new variants discovered (per gene), that is an average increase of $10.56\%$ with respect to the number of variants we would have discovered if we had used the full-sequencing depth available alternative under the same budget constraint and cost function.

\begin{figure}
      \centering \includegraphics[width=\textwidth,height=\textheight,keepaspectratio]{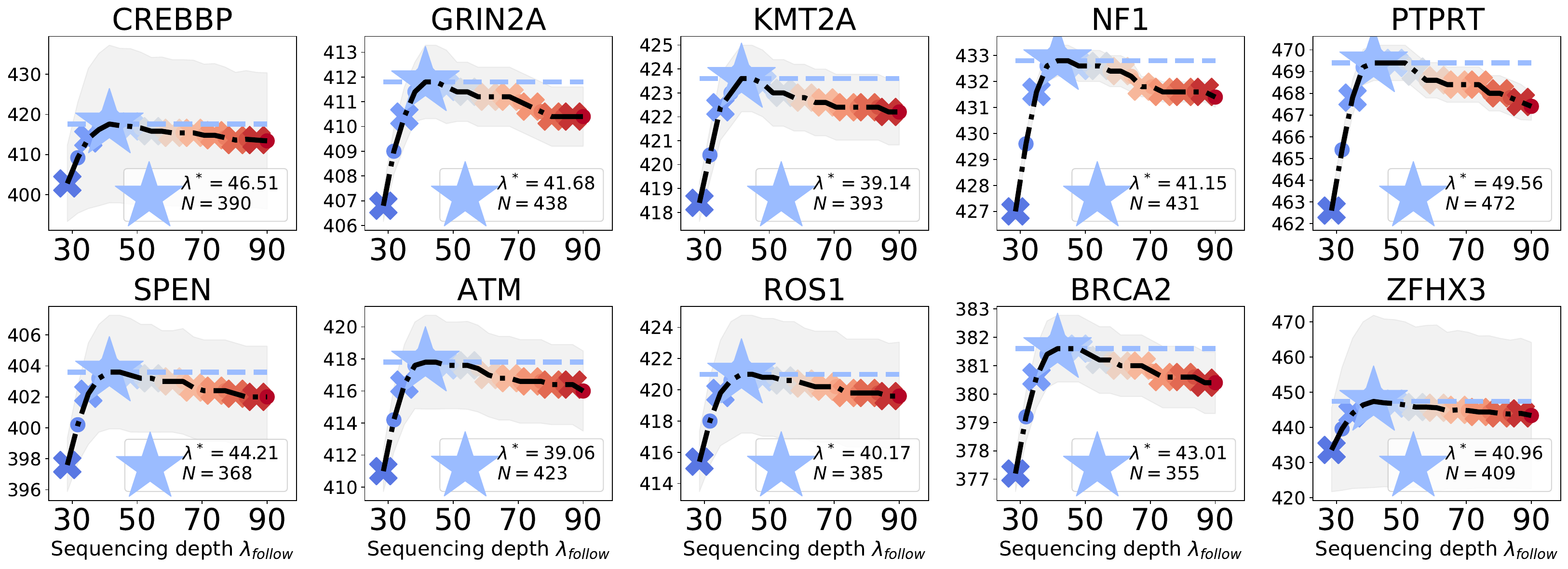}
      \caption{For every targeted gene present in the MSK-impact dataset, we use samples from the TCGA dataset to find the optimal configuration of the sequencing depth in the follow up study so maximize the number of new variants discovered. Here we display results for the 10 genes with the largest number of variants.}
\label{fig:genes_opt}
\end{figure}


\renewcommand{\theequation}{G.\arabic{equation}}
\setcounter{equation}{0} 

\section{Additional experimental results: GnomAD data} \label{sec:app-exp_gnomAD}

\subsection{Experimental setup}
\label{sec:exp_setup_gnomAD}

In order to run our experiments, we use data from the gnomAD (genome aggregation dataset) discovery project \citep{karczewski2019variation}, the largest and most comprehensive publicly available human genome dataset. This dataset contains 125{'}748 exomes sequences (i.e.\, protein-coding regions of the genome), from 8 main populations (African American, Latino, Ashkenazi Jewish, East Asian, Finnish, Non-Finnish European, South Asian, Other\footnote{The ``Other'' subgroup contains all ``\emph{individuals were classified as "other" if they did not unambiguously cluster with the major populations in a principal component analysis}''}). Sample size varies widely across sub populations, e.g.\ the ``Other'' subgroup counts only 3{'}070 observations, while ``Non-Finnish European'' contains 56{'}885 individuals. Moreover, some of these main populations are further split into additional sub populations, e.g.\ ``Non-Finnish European'' contains the ``Bulgarian'', ``Estonian'', ``Northern European'', ``Southern European'', ``Swedish'', ``Other European'' sub populations, while the ``East Asian'' sub population is further split into the ``Korean'', ``Japanese'' and ``Other East Asian'' sub populations (see \citet{karczewski2019variation} for additional details). We ran our analysis on all populations and sub populations.

Because for privacy reasons not all individual sequences are accessible, in order to run our analysis we generate synthetic data which closely resembles the true data as follows. For every subpopulation with $N$ individuals and every position $j=1,\ldots,K$ in the exome, we have access to the total number of individuals $N_j$ showing variation at position $j$. We compute the empirical frequency of variation at site $j$, $\hat{\theta}_j := N_j/N$ for all $j=1,\ldots,K$. Our data is then generated by sampling independent Bernoulli random vectors $X_1,\ldots,X_N$, with $X_n = [x_{n,1},\ldots,x_{n,K}]$. The entries in the vector are independent Bernoulli random variables, $x_{n,j} \sim \Bern(\hat{\theta}_j)$. 

\par For the prediction experiments under changing conditions, i.e.\ in the setting in which samples are noisy, we  perform further thinning to generate the data. That is to say, given the empirical frequencies $\{\hat{\theta}_j\}$, $j = 1,\ldots,K$ and for a given choice of $\threshold$, sampling error $p_{\text{err}}$ and sequencing quality $\lambda$, we obtain the associated probability $\phi$ that at least $\threshold$ successful reads are obtained at any position $j$ for any individual $n$, i.e.
\begin{align*}
    \phi(\lambda, \threshold, p_{\text{err}}):= \sum_{t\geq\threshold} \frac{1}{t!} e^{-\lambda(1-p_{\text{err}})} \{\lambda(1-p_{\text{err}})\}^t.
\end{align*}
Then, an individual observation $X_n = [x_{n,1},\ldots,x_{n,K}]$ is obtained by independently sampling Bernoulli random variables, 
\[
    x_{n,j} \mid \hat{\theta}_j, \phi(\lambda, \threshold, p_{\text{err}}) \sim \Bern(\hat{\theta}_j\phi(\lambda, \threshold, p_{\text{err}})).
\]

\subsection{Prediction with no sequencing errors} \label{sec:exp_noerror_gnomAD}

\begin{figure}[t!]
      \centering \includegraphics[width=\textwidth,height=\textheight,keepaspectratio]{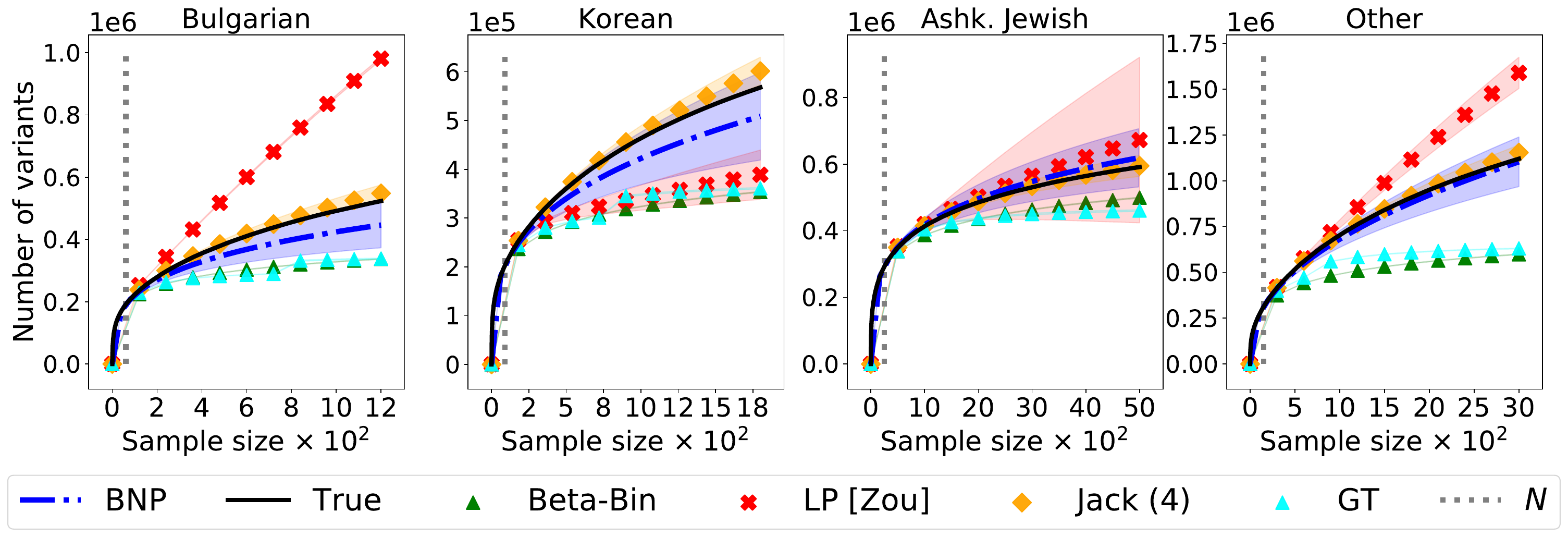}
\caption{Predicting the number of new variants in a follow-up study under constant experimental conditions. 
The solid black line displays the true number of distinct variants (vertical axis) as the sample size increases (horizontal axis). The dotted vertical line indicates the pilot study sample size.
Lines for each method are averaged across all folds; see \Cref{sec:exp_noerror_gnomAD} (blue: our method, Bayesian nonparametric (BNP), Eq. \eqref{eq:naive_new}; green: \citet{ionita2009estimating}; red: \citet{zou2016quantifying}; orange: \citet{gravel2014predicting} (4th order), cyan: \citet{chakraborty2019somatic}). Shaded regions show one standard deviation across data folds. }
\label{fig:prediction_all_gnomAD}
\end{figure}

Various existing methods predict the number of new variants in a follow-up study under the assumption that experimental conditions remain constant between the pilot and follow-up. These approaches use, respectively, parametric Bayesian methods \citep{ionita2009estimating}, linear programming \citep{gravel2014predicting, zou2016quantifying}, a harmonic jackknife \citep{gravel2014predicting}, and a smoothed version of the classic Good-Toulmin estimator \citep{orlitsky2016optimal,chakraborty2019somatic}. We encountered numerical issues with the linear programming method of \citet{gravel2014predicting} and so, like \citet{zou2016quantifying}, we do not include it in our comparison. Even in their original paper, \citet{gravel2014predicting} did not report superior performance of their linear programming approach over their other method, the harmonic jackknife \citep{gravel2014predicting}, which we do include in our comparison. To assess prediction error in each case, we use an approach akin to cross validation. Namely, we treat each subpopulation as a dataset. We divide the subpopulation into 33 folds
of equal size. For a smaller number of folds, each fold represents a larger pilot study. All methods improve when the pilot study is increased substantially in size, i.e.\ when there is more information in the pilot. We find that the choice of 33 folds creates a challenging scenario with a small amount of pilot information. Nonetheless, both our method and the harmonic jackknife still perform well in these conditions. We consider each fold in turn as data from the pilot study and treat the remaining data (i.e., the data not in this fold) as the follow-up. We follow \citet{zou2016quantifying} to make a visual summary of our results; namely, we plot the mean number of variants across all folds as a function of dataset size in the pilot, and we plot the mean number of total predicted variants (across both pilot and follow-up) as a function of dataset size in the same plot. A vertical dashed line marks the pilot size. Shaded regions indicate one empirical standard deviation, measured across the folds. We include the exact values in the plot for comparison. \Cref{fig:prediction_all_gnomAD} demonstrates that our method matches the exact value more closely than the parametric Bayesian approach \citep{ionita2009estimating}, the linear programming approach \citep{zou2016quantifying} and the nonparametric smoothed Good-Toulmin method \citep{orlitsky2016optimal,chakraborty2019somatic}. And our method has roughly the same performance as the jackknife approach \citep{gravel2014predicting} when the pilot and follow-up have the same experimental conditions. We next explain the relative performance of all methods in more detail.

In \Cref{sec:app-exp_additional} we run both the Bayesian parametric approach and our method on data simulated under the parametric Bayesian model used by \citet{ionita2009estimating}. We also run both methods on data simulated under the 3-parameter beta process model we propose above; see \Cref{sec:app-exp_additional_bnp}. The approach of \citet{ionita2009estimating} provides excellent predictions when the data is generated under their assumed model, but deteriorates in performance under the simulated 3-parameter beta process data, as for real data. Therefore, we believe the parametric Bayesian method suffers on real data due to the assumed \citet{ionita2009estimating} model being ill adapted for real-life power laws. Similarly, we show in \Cref{sec:app-exp_additional_gt} that the smoothed Good-Toulmin predictor \citep{orlitsky2016optimal, chakraborty2019somatic} performs well for power laws with low exponent values, as expected. However, we also see that this estimator performs poorly for power laws with high exponent values. We verify in \Cref{sec:app-exp_additional_gt} that the gnomAD data exhibits high exponent values. This behavior explains the underperformance of the smoothed Good-Toulmin predictor in our real-data experiments.

\citet{zou2016quantifying} use a linear program to estimate \emph{rare} variant frequencies; they approximate frequencies of common variants with the empirical frequency. ``Rare'' is defined to be any frequency less than $\kappa/100$, for some user-defined threshold $\kappa \in (0,100)$, interpreted as a percent. In practice, we found that the output of the algorithm is very sensitive to the choice of $\kappa$ (see \Cref{sec:app-exp_additional_zou}). The authors suggest $\kappa = 1$ as a default setting, but we observed numerical instability and poor predictive performance for this choice. This observation holds especially when the pilot size $N$ is small, which we believe to be a particular case of interest in designing experiments for further data collection (i.e., for the follow-up study). For instance, we expect the small-$N$ case to arise frequently in the study of non-model organisms \citep{russell2017non}. In \Cref{fig:prediction_all_gnomAD}, we chose $\kappa = 20$, which led to convergence of the optimization algorithm in all cases. We explore other values of $\kappa$ in \Cref{sec:app-exp_additional_zou}. 

\subsection{Prediction under different experimental conditions}

\begin{figure}[t!]
      \centering \includegraphics[width=\textwidth,height=\textheight,keepaspectratio]{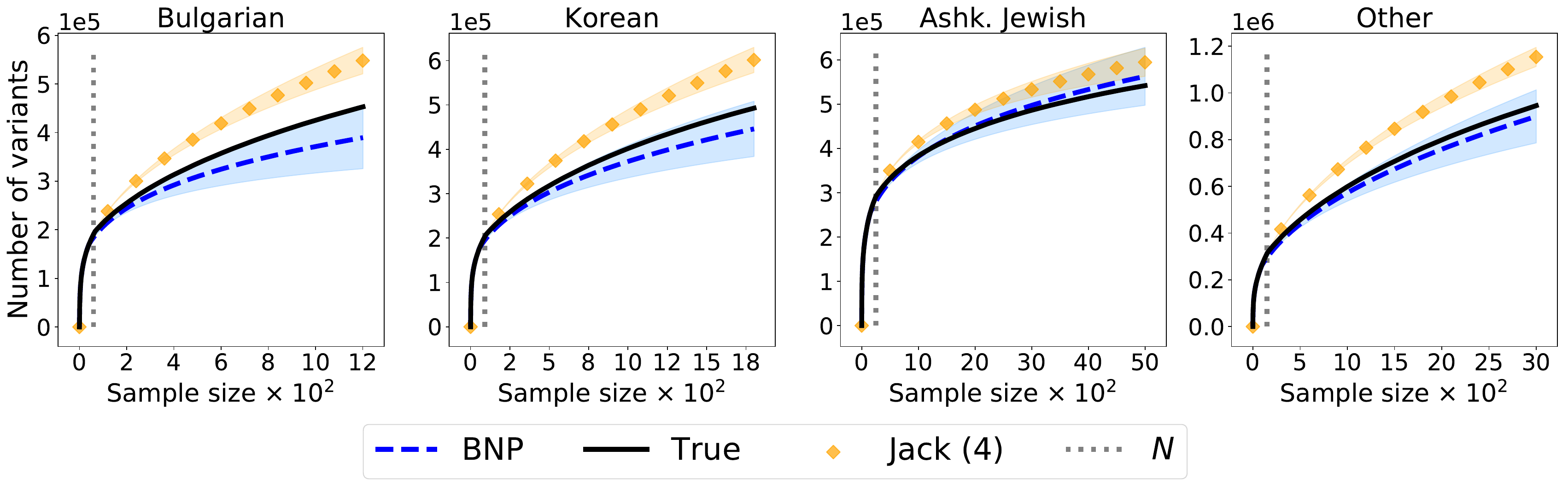}
\caption{Predicting the number of new variants under different experimental conditions between the pilot and follow-up. Same four subpopulations (gnomAD). Pilot sequencing quality is $\seqinit = 45$. Follow-up sequencing quality is $\seqfollowup=32$. Horizontal axis is the number of samples. Vertical axis is the number of total observed variants across both pilot and follow-up. The threshold is $T=30$.}\label{fig:pred_exp_d_gnomAD}
\end{figure}

We now turn to the case where there may be sequencing errors in the pilot, in the follow-up, or both. And the sequencing quality may differ between the pilot and the follow-up. No existing method works in this case. Since, like our method, the method of \citet{ionita2009estimating} is Bayesian, we believe it could be straightforwardly adapted using similar ideas to the ones we present here. But we have already seen that our Bayesian nonparametric method provides much more accurate predictions in the case of no sequencing errors, so we believe it is more fruitful to develop the Bayesian nonparametric approach. Similarly, we believe the linear programming approach \citep{zou2016quantifying} might be adapted to the special case where the follow-up is more error prone than the pilot. But even this development would still leave open the case where the follow-up might be made more accurate by increasing sequencing quality, and we have already observed that the Bayesian nonparametric approach provides better, more automatic predictions in the case of error-free observation. Finally, while the jackknife approach performs very well in the case with no sequencing errors, we do not think it will be as straightforward to adapt to the case where sequencing quality may change between the pilot and follow-up. 

In \Cref{fig:pred_exp_d_gnomAD} we see that there is indeed a noticeable difference in the number of observed variants when the experimental conditions change between the pilot and follow-up. In particular, we consider a pilot sequencing quality $\seqinit=45$ and a follow-up sequencing quality $\seqfollowup=32$. We use a fixed threshold $T=30$, a standard coverage value in human genomic experiments \citep{karczewski2019variation}. To represent this change between studies, we use the gnomAD data as in \Cref{sec:exp_noerror_gnomAD} but apply additional thinning to simulate imperfect observation due to sequencing depth; see \Cref{sec:exp_setup_gnomAD} for additional details. 
Since the jackknife is not able to use information about the changing sequencing depth, we expect our Bayesian nonparametric method should deliver superior predictive performance when sequencing quality changes. This behavior is exactly what we see in \Cref{fig:pred_exp_d_gnomAD}.

\subsection{Designing experiments to maximize the number of observed variants}

\begin{figure}[t!]
      \centering \includegraphics[width=\textwidth,height=\textheight,keepaspectratio]{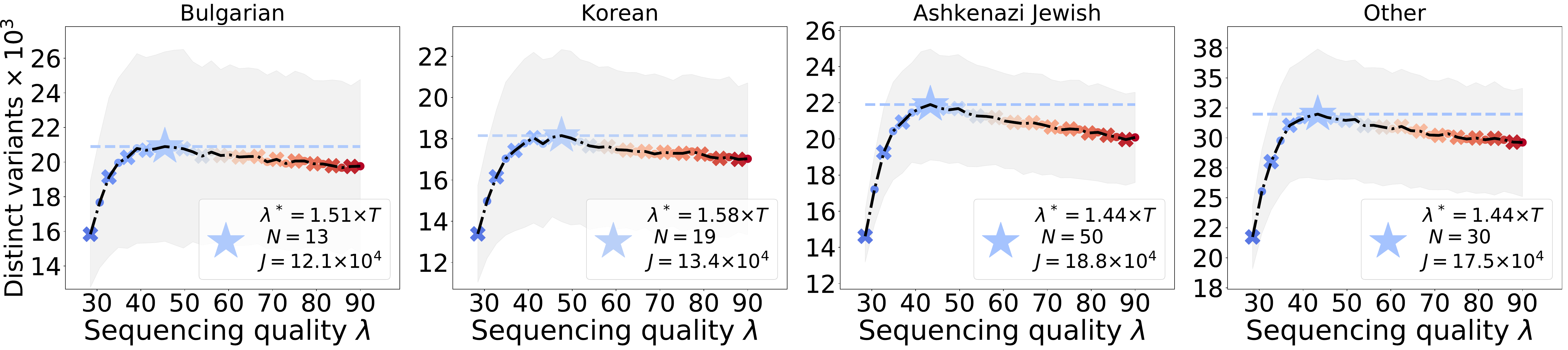}
\caption{Designing an experiment to maximize the number of new variants in a follow-up study. Same four subpopulations (gnomAD). Horizontal axis is the follow-up sequencing quality $\seqfollowup$. Vertical axis is the predicted number of observed variants in the follow-up by maximizing $M$ under the budget $\budget$ and quality $\seqfollowup$.}
\label{fig:opt_design_gnomAD}
\end{figure}

Finally, we demonstrate that our Bayesian nonparametric predictor can be used for experimental design in practice. Our procedure consists of three steps. (1) Given the pilot data and sequencing quality $\seqinit$, we minimize \Cref{eq:cost} to estimate the parameters $\conc, \discount, \mass$. (2) Next, we consider a range of values of the follow-up sequencing quality $\seqfollowup$. For each $\seqfollowup$, we choose the maximum follow-up size $M$ that stays within our budget $\budget$. And we use the learned values of the parameters $\conc, \discount, \mass$ to predict the number of new variants in each case. (3) We choose the settings of $\seqfollowup$ and $M$ that maximize the number of new variants.

We illustrate this procedure in \Cref{fig:opt_design_gnomAD}.

In our experiments, we set the cost function $\costfcn(M,\seqfollowup) = M \log (\seqfollowup)$ as in \citet{ionita2010optimal}, budget $D = 3000$, threshold $\threshold = 30$, error $p_{\text{err}} = 0.01$, and $\seqinit = 40$. We run the procedure over all folds, plot the empirical mean line, and plot the shaded region to illustrate one standard deviation. We see a trade-off in quality and quantity in \Cref{fig:opt_design_gnomAD}. Namely, maximizing quantity $M$ leads to very small values of $\seqfollowup$ to maintain the budget $\budget$. With sufficiently low quality, though, fewer variants are discovered. Conversely, when $\seqfollowup$ is set very high, we require a very small $M$ to maintain the budget $\budget$, and not many variants are discovered. Intermediate values of $\seqfollowup$ and $M$ serve to maximize the number of variants discovered under a fixed budget.

In \Cref{fig:ionita_gnomaAD_2}, \Cref{fig:unseen_gnomAD_2} and \Cref{fig:jack_gnomAD_2} we report results of the prediction of the number of new variants on some sub populations of the gnomAD dataset. We consider the Bulgarian, South Korean, Other East Asian, African and East Asian subpopulations. The $x$-axis displays the total number of samples collected. On the $y$-axis, we plot the number of distinct genomic variants. The solid black line displays the true number of distinct variants, the vertical grey line is placed in correspondence of the training sample size $N$ (left: $N\in\{42, 61, 228, 257, 291\}$). 

 \begin{figure}
      \centering \includegraphics[width=\textwidth,height=\textheight,keepaspectratio]{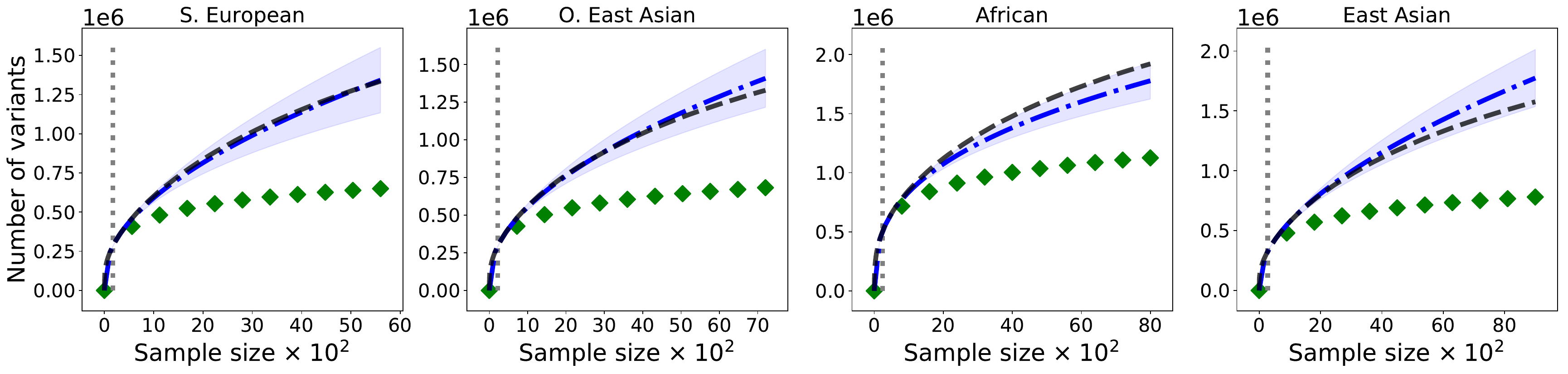}
      \caption{The blue dotted line is the posterior predictive mean of the number of distinct variants $U_N^{(m)}$ observed according to the Bayesian nonparametric predictor, averaged across 33 samples of size $N$. The green diamonds report the posterior predictive mean of  the Bayesian parametric estimator of \citet{ionita2009estimating}, averaged across the same subsets of the original data. The shaded blue and green regions report the prediction error by covering one standard empirical deviation for the two predictors.} 
\label{fig:ionita_gnomaAD_2}
\end{figure}

 \begin{figure}
      \centering \includegraphics[width=\textwidth,height=\textheight,keepaspectratio]{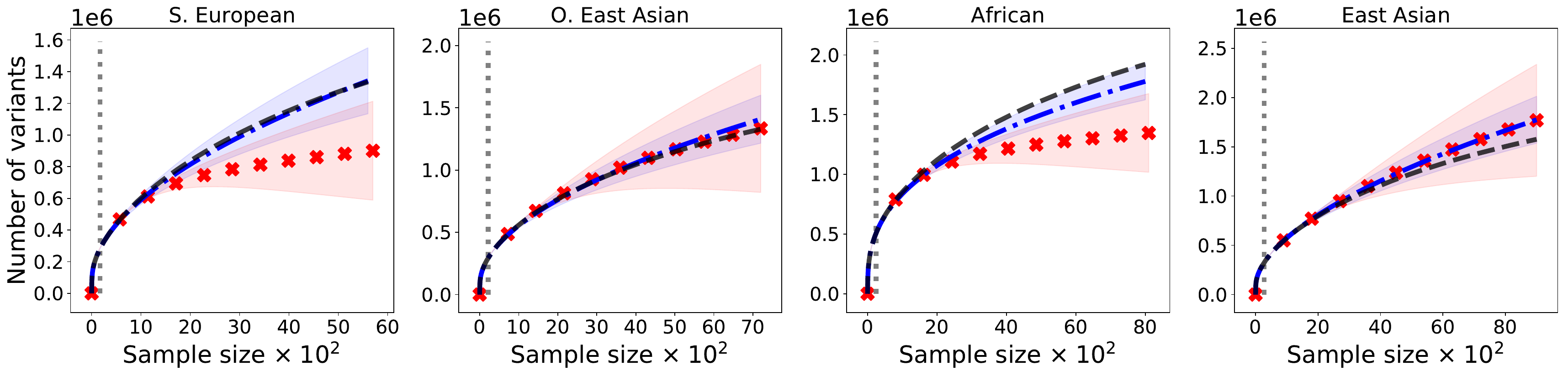}
      \caption{Results of the estimation of the number of new variants on some sub populations of the gnomAD dataset. The $x$-axis displays the total number of samples collected. On the $y$-axis, we plot the number of distinct genomic variants observed. The solid black line keeps track of the true number of distinct variants, the vertical grey line is placed in correspondence of the training sample size $N$. The blue dotted line is the posterior predictive mean of the number of distinct variants $U_N^{(m)}$ observed according to the Bayesian nonparametric predictor, averaged across 33 samples of size $N$. The dotted red line is the empirical mean of  the UnseenEST estimator of \citet{zou2016quantifying} across the same samples. The shaded blue and red regions report the prediction error by covering one standard empirical deviation for the two predictors. Here, we fix $\kappa = 1\%$, the value considered in \citet{zou2016quantifying}.} 
\label{fig:unseen_gnomAD_2}
\end{figure}

 \begin{figure}
      \centering \includegraphics[width=\textwidth,height=\textheight,keepaspectratio]{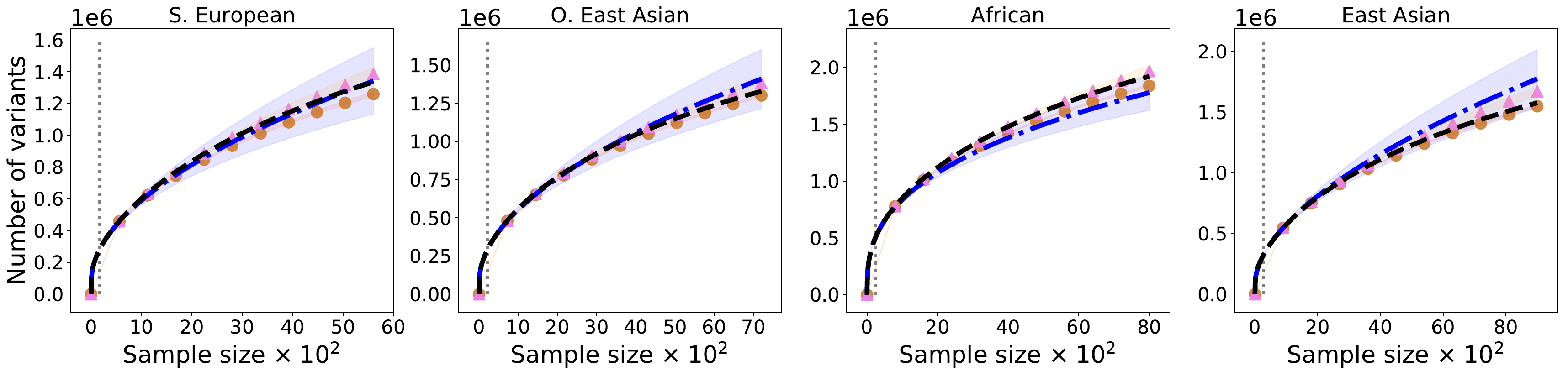}
      \caption{Again for the same sub populations considered  in \Cref{fig:ionita_gnomaAD_2} and \Cref{fig:unseen_gnomAD_2}, we compare  the Bayesian nonparametric estimator to the Jackknife estimator proposed in \citet{gravel2014predicting}, for the third and fourth orders. Lower order consistently underestimate the number of distinct variants} 
\label{fig:jack_gnomAD_2}
\end{figure}

 \begin{figure}
      \centering \includegraphics[width=\textwidth,height=\textheight,keepaspectratio]{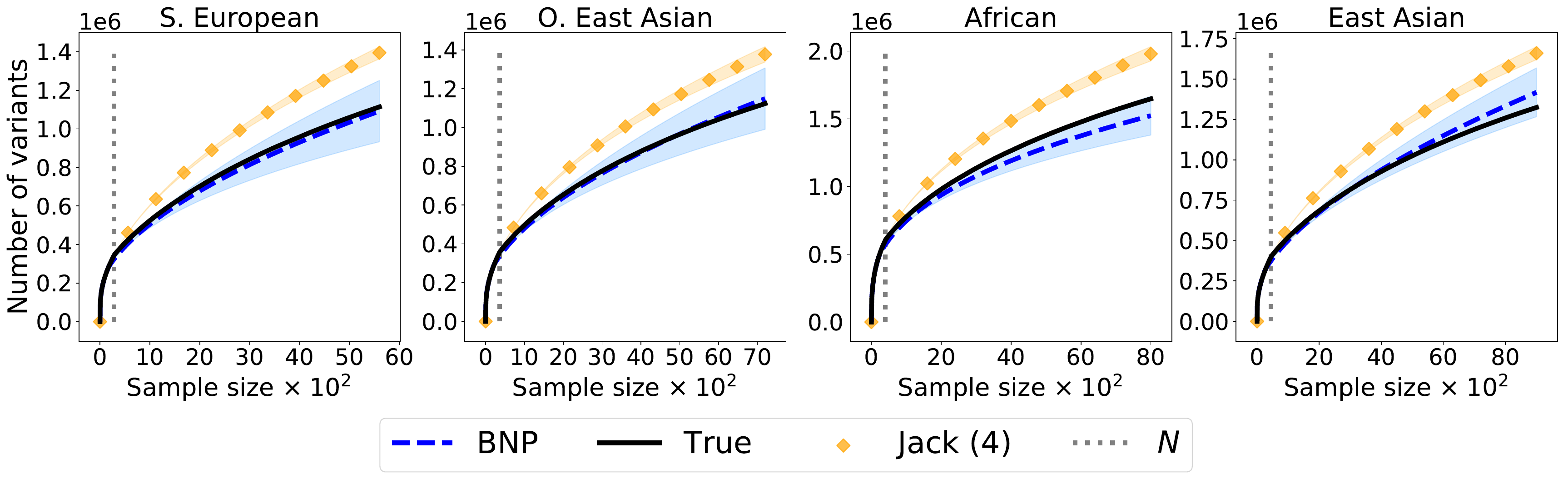}
      \caption{Prediction of the number of yet unseen variants for five subpopulations (gnomAD). For each subpopulation, we assume to have access to a small number of high quality genetic samples $N$ in which variation is observed at $J$ distinct loci, collected at initial sequencing depth $\seqinit = 45$ and threshold $\threshold = 30$. We imagine that the follow-up is performed at a different sequencing depth, $\seqfollowup=32$.} 
\label{fig:pred_exp_d_2}
\end{figure}


\renewcommand{\theequation}{H.\arabic{equation}}
\setcounter{equation}{0} 

\section{Additional experimental results: results on synthetic data} \label{sec:app-exp_additional}

\subsection{Synthetic data from the Indian buffet process} \label{sec:app-exp_additional_bnp}
In this section, we provide experimental results for data drawn from the three parameters Indian buffet process. When the data is drawn from the true model, we expect the Bayesian nonparametric estimators of \Cref{sec:prediction} to work particularly well. We test against a large collection of parameters $\alpha>0$, $\sigma \in [0,1)$ and $c>-\sigma$. We report here results for different configurations. In all cases, the optimization procedure outlined in \Cref{sec:exp} recovers the rate of growth of the distinct variants. Interestingly, in some instances, the optimization recovers parameters that differ from the true parameters that generate the process, but still have good predictive performance (see \Cref{fig:app_bnp_0}).

 \begin{figure}
      \centering \includegraphics[width=\textwidth,height=\textheight,keepaspectratio]{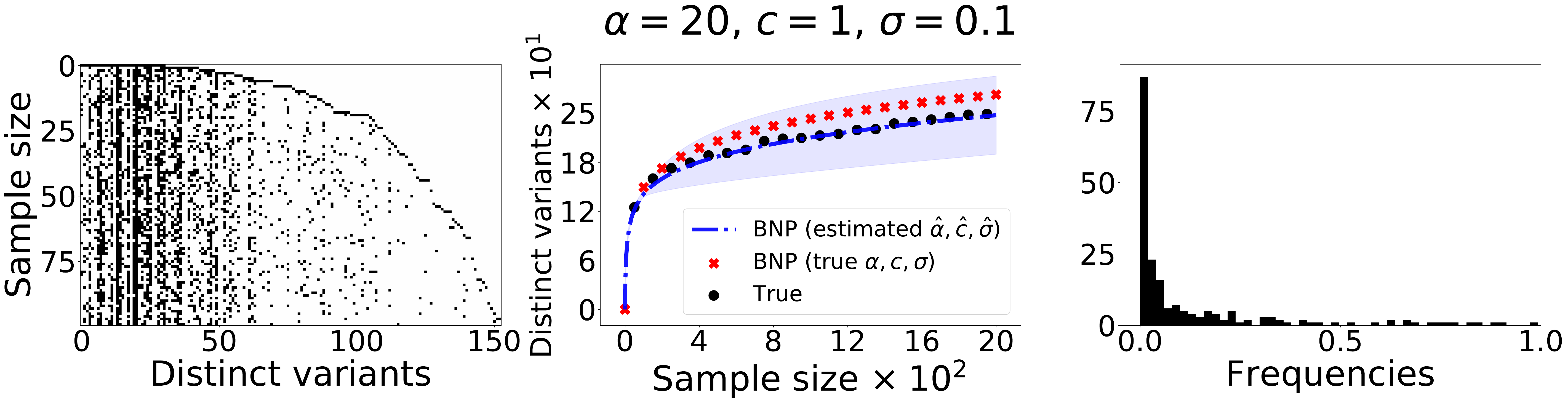}
      \caption{A draw from a three-parameter Indian buffet process. Here, $\alpha = 20$, $c=1$, $\sigma = 0.1$. In the left panel, we see the binary matrix $\bm{X}$ containing the first $N=100$ samples ($x$-axis) from the process, in its left-ordered-form (lof), i.e.\ variants ($y$-axis) are sorted by the order of appearance, so that as more points are added to the dataset, more columns contain nonzero entries. In the central panel, we plot the number of distinct variants ($y$-axis) as a function of the sample size ($x$-axis), extrapolating up to $M=1900$ additional samples. Last, on the right panel, we plot the empirical distribution of frequencies among the first $N$ samples.}
\label{fig:app_bnp_0}
\end{figure}

 \begin{figure}
      \centering \includegraphics[width=\textwidth,height=\textheight,keepaspectratio]{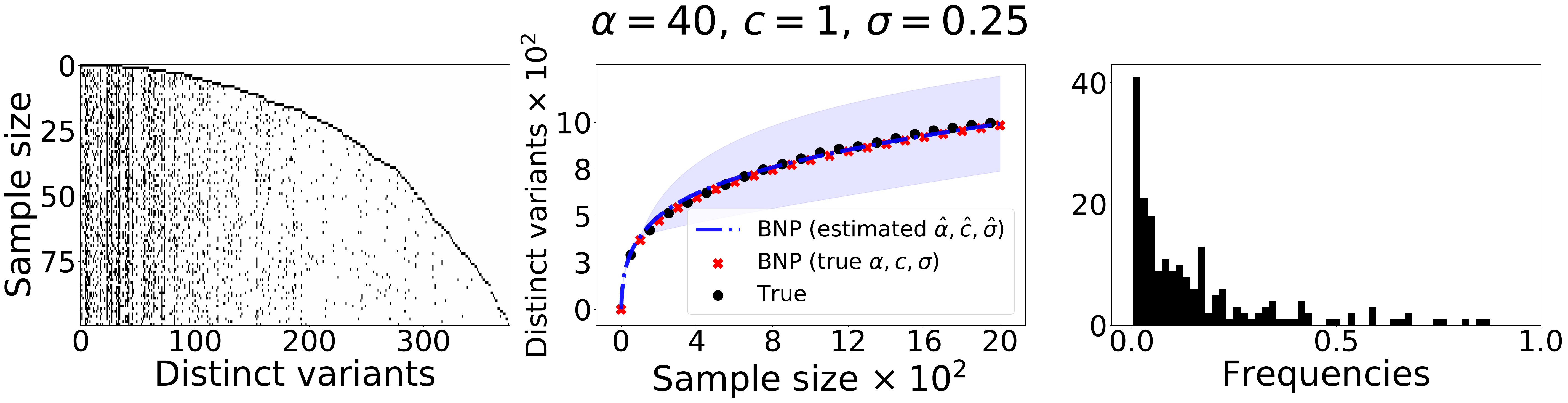}
      \caption{In this figure, we reproduce the visualizations explained in \Cref{fig:app_bnp_0} for a draw from a three-parameter Indian buffet process with parameters $\alpha = 40$, $c=1$ and $\sigma = 0.25$}
\label{fig:app_bnp_1}
\end{figure}

 \begin{figure}
      \centering \includegraphics[width=\textwidth,height=\textheight,keepaspectratio]{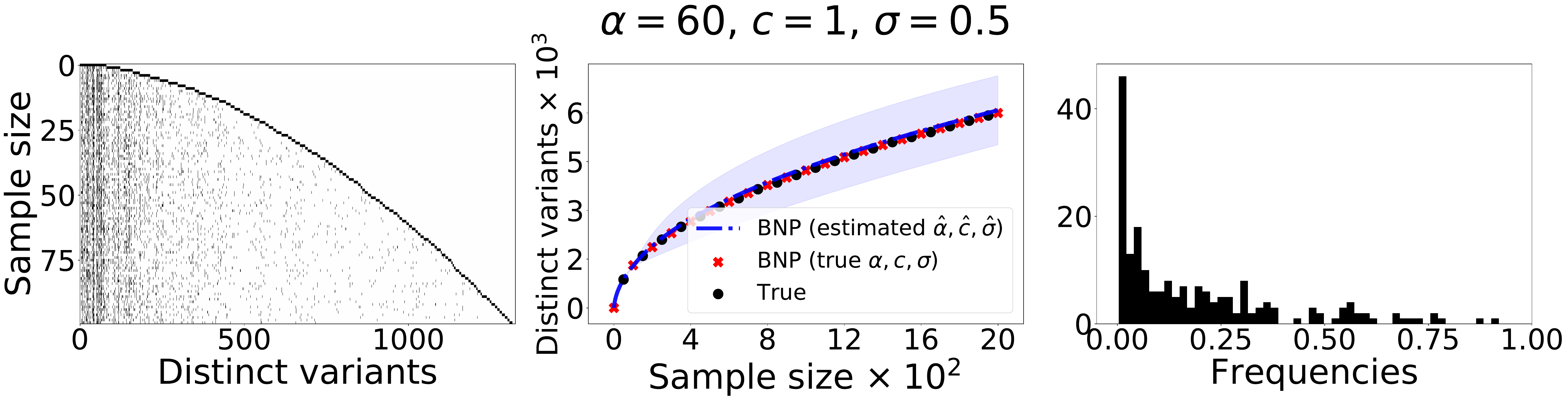}
      \caption{In this figure, we reproduce the visualizations explained in \Cref{fig:app_bnp_0} for a draw from a three-parameter Indian buffet process with parameters $\alpha = 60$, $c=1$ and $\sigma = 0.5$}
\label{fig:app_bnp_2}
\end{figure}

We also tested the performance of the predictor $\predtimes{N}{M}{r}$ for the number of new variants that are going to appear a given number $r$ of times as the initial sample of size $N$ is enlarged with $M$ additional observations. We found the performance of the predictor, in this case, to be very sensitive to the value of $\sigma$. In particular, while we expect the estimator to be exact as the extrapolation size $M$ diverges, we observe that when $\sigma$ is close to $0$, the performance degrades for small values of $r$. 
 \begin{figure}
      \centering \includegraphics[width=\textwidth,height=\textheight,keepaspectratio]{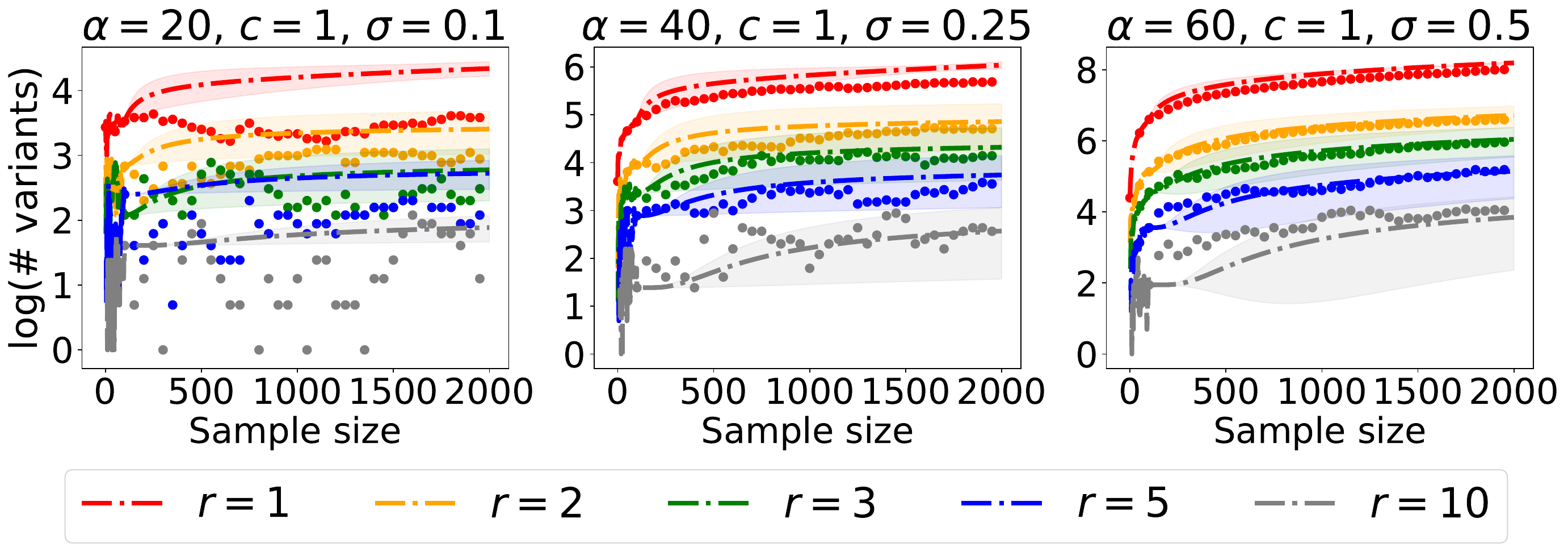}
      \caption{In this figure, for the three configurations of parameters $\alpha, c, \sigma$ considered in \Cref{fig:app_bnp_0}, \Cref{fig:app_bnp_1} and \Cref{fig:app_bnp_2}, we plot the performance of the estimators $\predtimes{N}{M}{r}$ for $M=1,\ldots,1900$ and $r= 1,2,3,5,10$. Dotted line show the performance of the estimators, while points show the true values of the process.}
\label{fig:app_bnp_3}
\end{figure}

\subsection{Synthetic data from the beta-Bernoulli model} \label{sec:app-exp_additional_ionita}

\textbf{\citet{ionita2009estimating} under the true model}: We first consider the case in which the variants frequencies $\theta_1,\ldots, \theta_{K}$ are independently and identically distributed\ draws from a beta distribution, i.e. for some parameters $\alpha>0$ and $\beta>0$, independently and identically distributed across $j=1,\ldots,K$ it holds
 \begin{align}\label{eq:beta}
     \theta_j \sim f(\theta) \propto \theta^{\alpha - 1} (1-\theta)^{\beta-1}\bm{1}_{(0,1)}(\theta).
 \end{align}
 Conditionally on the variants $\theta_1,\ldots,\theta_K$, each observation $X_n$ is a binary vector of independent Bernoulli random variables, $x_{n,j} \mid \theta_j \sim \Bern(\theta_j)$.
 This is exactly the model considered by \citet{ionita2009estimating}. Therefore we are not surprised to verify in \Cref{fig:ionita_0} that the predictor derived in \Cref{sec:app-ionita} outperforms the Bayesian nonparametric counterpart when the variants comes from the model of \Cref{eq:beta}.
 \begin{figure}
      \centering \includegraphics[width=\textwidth,height=\textheight,keepaspectratio]{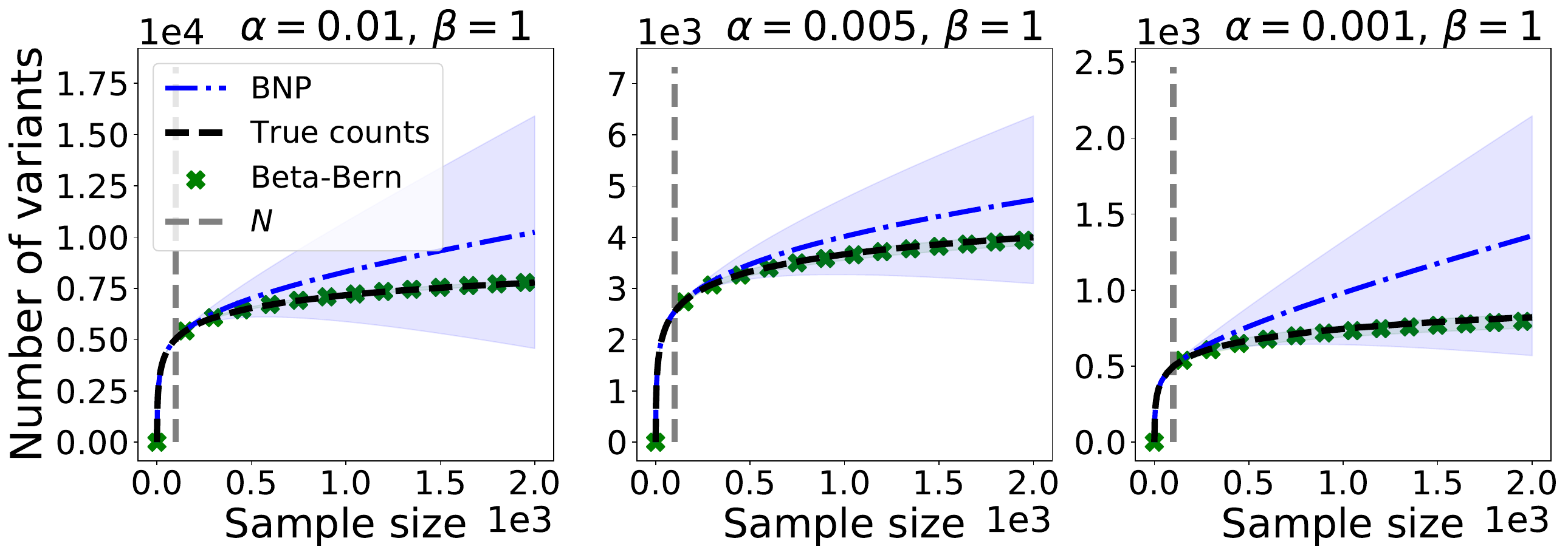}
      \caption{Performance of the beta-Bernoulli predictor (green crosses) proposed by \citet{ionita2009estimating} and of the nonparametric Bayesian predictor (dotted blue line) on three different datasets (each panel represents a different dataset). Each dataset is generated as follows: we first draw a random vector $\bm{\theta}$ of dimension $K = 10^4$. The $K$ coordinates are independently and identically distributed draws from a beta distribution. Conditionally on $\bm{\theta}$, we draw a random matrix $\bm{X}$ with $N=2000$ rows and $K$ columns. The $(n,j)$-th entry  $x_{n,j}$ is Bernoulli distributed with mean $\theta_j$, so that the columns of $\bm{X}$ are independently and identically distributed. We retain the first $N=200$ rows as training set and obtain the two estimators. We project up to $N+M=2000$ observations. We repeat the procedure over ten distinct folds of the data of the same size $N$ to produce estimates of the prediction error. This estimate of the error is displayed by plotting one empirical standard deviation across the ten predicted values across the different folds, for each extrapolation value $\ell = 201,\dots,2000$. From left to right, we vary the first shape parameter of the beta distribution $\alpha ~\in~ \{10^{-1}, 10^{-2}, 10^{-2}\times 2^{-1} \}$, driving the mean of the distribution to zero, while keeping the second parameter $\beta = 1$ fixed.}
\label{fig:ionita_0}
\end{figure}

\textbf{\citet{ionita2009estimating} under misspecification: the case of power laws}: Here we consider the case in which the variants frequencies $\theta_1,\ldots, \theta_{K}$ are independently and identically distributed\ draws from a power law distribution, i.e. for some tail exponent $\xi \geq 0$
 \begin{align}\label{eq:power_law}
     \theta_j \sim f(\theta) \propto \theta^{-\xi} \bm{1}_{(0,1)}(\theta).
 \end{align}
The parameter $\xi$ controls the left tail of the distribution: for $\xi = 0$, the distribution is uniform over the support $[0,1]$. The larger the value of $\xi$, the more mass we put over rare frequencies. Power laws arise in a vast number of natural phenomena, including ecology, biology, physical and social sciences \citep{clauset2009power}. Therefore, having an estimator that is effective when frequencies exhibit a power law behavior is desirable for virtually any applied scenario. In our experiments, the Bayesian parametric approach works well for moderate exponents, i.e.\ when the power law behavior is relatively mild. However, as soon as the exponent $\xi$ becomes large, the parametric model fails to deliver consistent results (see \Cref{fig:ionita_zipf}). Conversely, the Bayesian nonparametric estimator performs reasonably well.

 \begin{figure}
      \centering \includegraphics[width=\textwidth,height=\textheight,keepaspectratio]{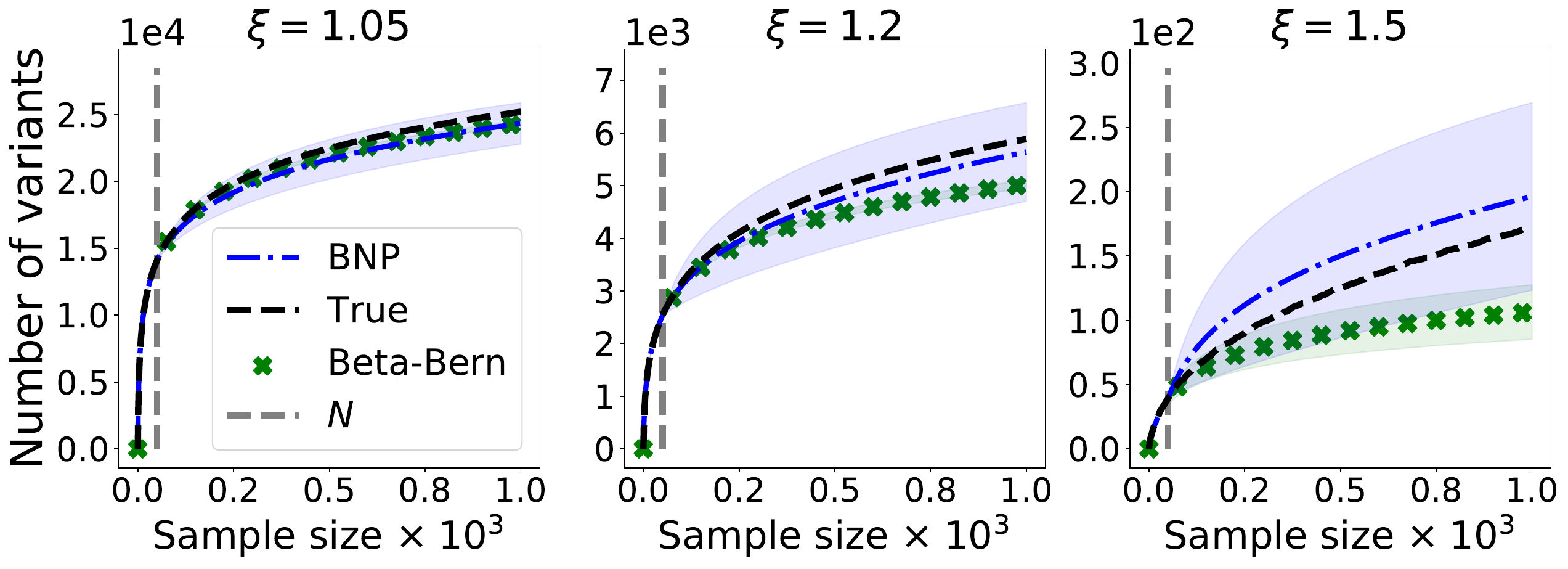}
      \caption{Performance of the beta-Bernoulli predictor (green solid line) proposed by \citet{ionita2009estimating} versus the nonparametric Bayesian predictor (dotted blue line) on three different datasets (each panel represents a different dataset). Each dataset is generated as follows: we first draw a random vector $\bm{\theta}$ of dimensions $K = 10^4$. The $K$ coordinates are independently and identically distributed draws from a power law distribution as described in \Cref{eq:power_law}. Conditionally on $\bm{\theta}$, we draw a random matrix $\bm{X}$ with $N=1000$ rows and $K$ columns. The $(n,j)$-th entry  $x_{n,j}$ is Bernoulli distributed with mean $\theta_j$, so that the columns of $\bm{X}$ are independently and identically distributed. We retain the first $N=50$ rows as training set and obtain the two predictors. We project up to $N+M=1000$ observations. We repeat the procedure over ten folds of the same data  of same size $N=50$. We repeat the procedure over ten folds of the same data to produce estimates of the prediction error. This estimate of the error is displayed by plotting one empirical standard deviation across the ten predicted values across the different folds, for each extrapolation value $\ell = 51,\dots,1000$. From left to right, we vary the exponent of the power law distribution (left, $\xi = 1.05$, center, $\xi = 1.2$, right $\xi = 1.5$).}
\label{fig:ionita_zipf}
\end{figure}

\subsection{The choice of the hyperparameter $\kappa$ for the frequentist nonparametric estimator proposed by \citet{zou2016quantifying}} \label{sec:app-exp_additional_zou}

Choosing the parameter $\kappa$ is particularly challenging when the sample size $N$ is small relative to the total number of frequencies - as in the genomics application we consider. As a general principle, in order to avoid numerical instability, the input size has to be sufficiently large. For example, given a sample of $N=100$ observations, if one sets $\kappa = 1$, the algorithm will take as an input only the number of variants which have been observed once. This will typically lead to numerical instability, which  will not arise for larger values of $\kappa$ (e.g.\ $\kappa \geq 10$). A general rule of thumb one could follow is to decrease $\kappa$ as a function of the training sample size $N$: the larger $N$, the smaller $\kappa$. While this intuition seems to work on some instances, we found cases in which unpredictable behaviors can affect the quality of the predictions (see \Cref{fig:zou_common} and \Cref{fig:zou_rare}).

 \begin{figure}
      \centering \includegraphics[width=\textwidth,height=\textheight,keepaspectratio]{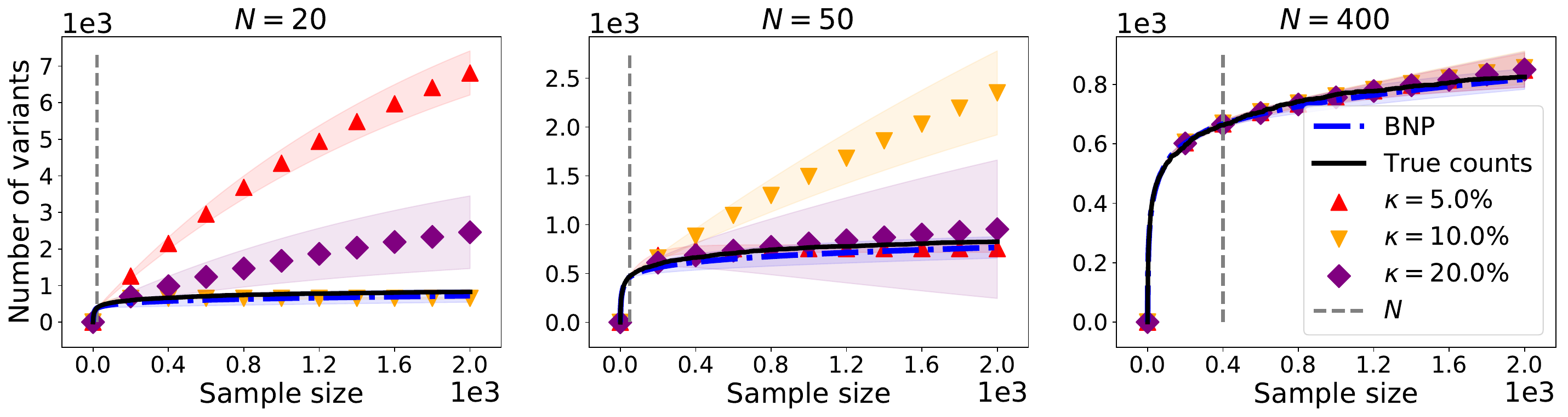}
  \caption{Comparison of the Bayesian nonparametric estimator (blue dotted line) to the frequentist nonparametric estimator proposed by \citet{zou2016quantifying}. We generate synthetic datasets as follows: we first draw a random vector $\bm{\theta}$ of $K=10^4$ independently and identically distributed beta random variables with parameters $\alpha = 0.001$ and $\beta = 1$. Conditionally on $\bm{\theta}$, we draw a random matrix $\bm{X}$ with $N=2000$ rows and $K$ columns. In each subplot, we retain a different fraction of rows of $\bm{X}$ to be used as training set (from left to right, $N \in \{20,50,400\}$). For each value of $N$, we compute the Bayesian nonparametric estimator, as well as the frequentist nonparametric estimator, varying the threshold parameter $\kappa \in \{5\%, 10\%, 20\%\}$ (red $(+)$, orange $(\star)$, purple $(\Diamond)$) respectively. We highlight how the performance of the frequentist nonparametric estimator, especially when $N$ is small, highly depends on the choice of $\kappa$, in an counterintuitve and somewhat unpredictable way. For example, when $N  =20$, choosing $\kappa = 10\%$ provides much better results than $\kappa = 5\%$ or $\kappa = 20\%$. However, for $N=50$, both $\kappa = 5\%$ and $\kappa = 20\%$ perform much better than $\kappa = 10\%$. As $N$ increases, the performance of the nonparametric frequentist estimator stabilizes, and becomes less sensitive to the choice of the parameter $\kappa$.} 
\label{fig:zou_common}
\end{figure}

 \begin{figure}
      \centering \includegraphics[width=\textwidth,height=\textheight,keepaspectratio]{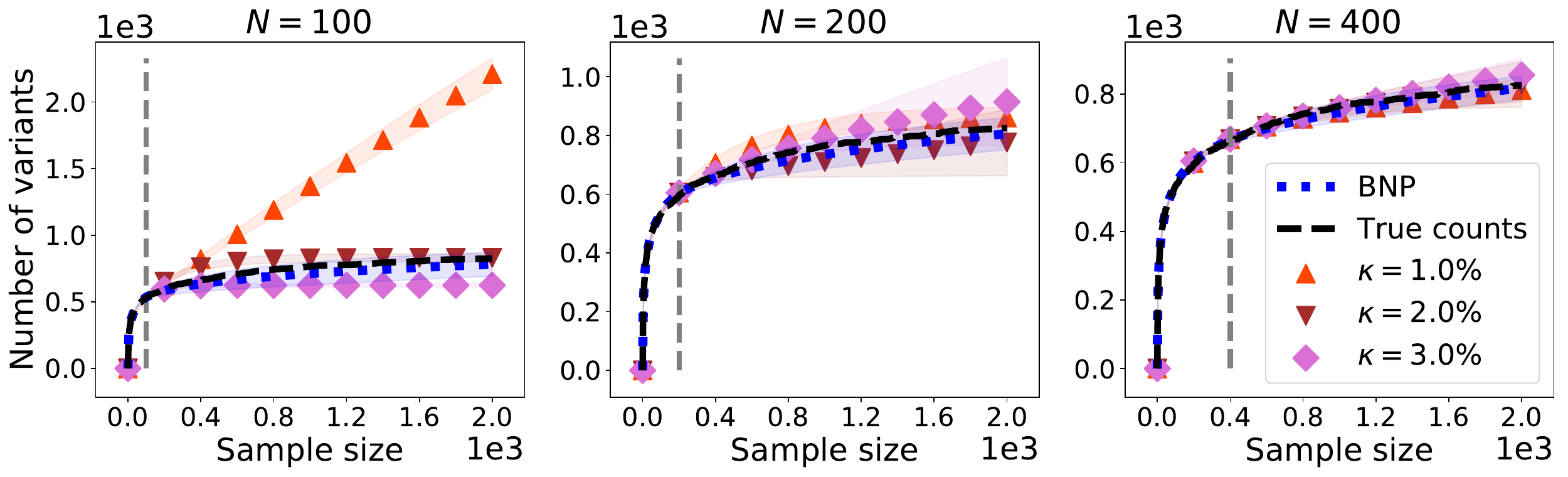}
    \caption{Comparison of the Bayesian nonparametric estimator (blue dotted line) to the frequentist nonparametric estimator of \citet{zou2016quantifying}. We use the same data showed in \Cref{fig:zou_common} but using much smaller values of $\kappa \in \{1\%, 2\%, 3\%\}$. Trying to run the linear program for these values of $\kappa$ and $N<100$ causes issues in the optimization routine, and therefore we only test it for $N$ sufficiently large. We notice that for both $N=100$ and $N=200$, the suggested value of $\kappa = 1\%$ provides worse results than choosing a larger value of $\kappa$, whereas for $N=400$, the performance of the estimator becomes less sensitive to the choice of $\kappa$.}
\label{fig:zou_rare}
\end{figure}

\subsection{Bias variance trade-off for the Jackknife estimator and optimal choice of the order $p$} \label{sec:app-exp_additional_jack}

As discussed  in \citet{burnham1978estimation, gravel2011demographic, gravel2014predicting}, and briefly in \Cref{sec:choice_jack}, the prediction quality of jackknife estimators  crucially depends on the \emph{order} chosen. Lower orders can suffer from large bias, but have small variance, while higher orders incur in less bias at the  cost of higher variance. On different datasets, the accuracy of different orders can vary dramatically.
In this section we provide  some experimental results (see \Cref{fig:jack_beta}) to illustrate this trade-off. 

In this section we provide  some experimental results (see \Cref{fig:jack_beta}) to illustrate this trade-off. 
 \begin{figure}
      \centering \includegraphics[width=\textwidth,height=\textheight,keepaspectratio]{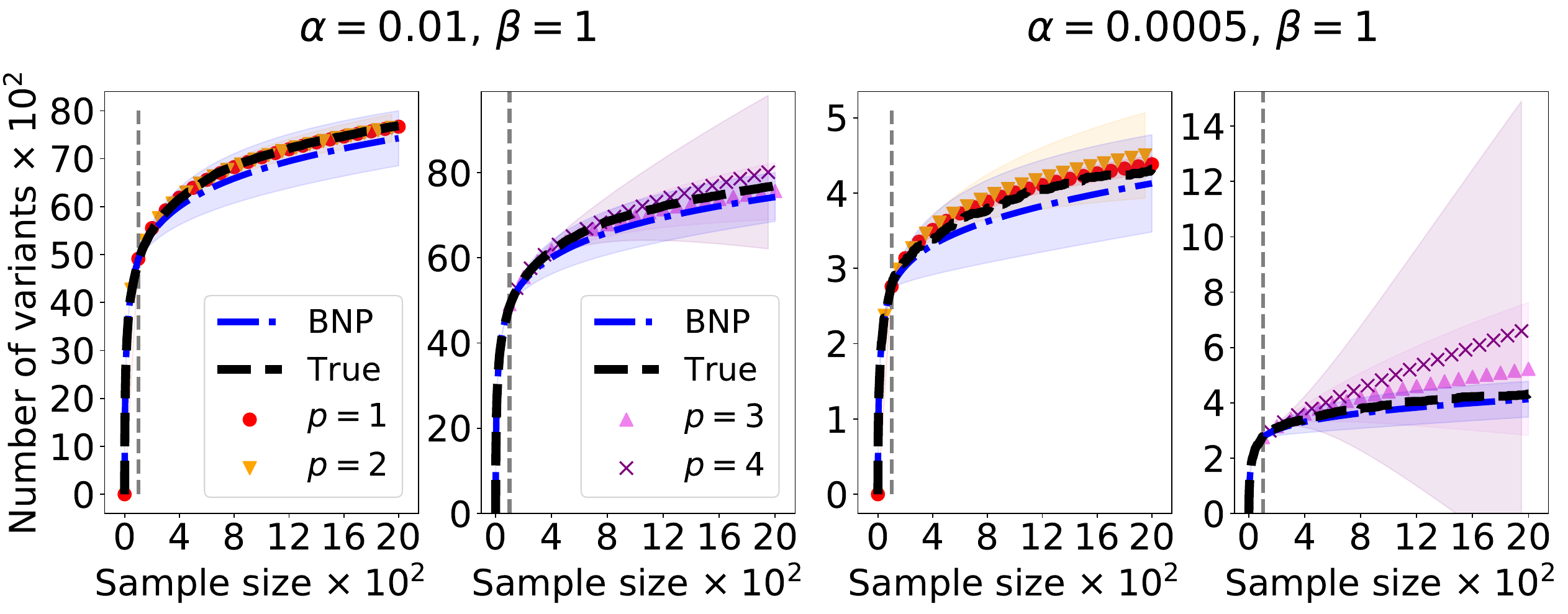}
    \caption{Comparison of the Bayesian nonparametric estimator (blue dotted line) to the jackknife estimator of \citet{gravel2014predicting} for different choices of the order $p$. We generate two datasets as follows: for $\alpha \in \{0.01, 0.0005\}$ and $\beta =1$, we generate two sets of $K=10^4$ independently and identically distributed beta distributed draws $\bm{\theta}$ with parameters $\alpha, \beta$. We then draw a random matrix $\bm{X}$ with $N = 2000$ rows, in which each entry $x_{n,j}$ is Bernoulli distributed with mean $\theta_j$. We retain $N=100$ rows for training. The two left panels show results for the dataset obtained when $\alpha = 0.01, \beta = 1$ across different choices of the jackknife order $p$. The two right panels show the same results for the dataset obtained when $\alpha = 0.0005$. Lower order jackknife estimators perform extremely well, and have little variance, while higher order jackknife estimators have worse performance, and higher variance. Such behavior worsens as $\alpha$ gets smaller, i.e.\ when the mean of the beta draws approach $0$.}
\label{fig:jack_beta}
\end{figure}

\subsection{Analysis of the Good-Toulmin estimator and the case of power laws} \label{sec:app-exp_additional_gt}

Last, we performed  synthetic experiments to understand the behavior of the smoothed Good-Toulmin estimator proposed by \citet{orlitsky2016optimal} and recently used by \citet{chakraborty2019somatic}. In our experiments, we considered differed regimes for the data generating process and found that  the estimator performs very well  when the distribution over variants' frequencies  does not put too much mass on very small variants. This holds true even for moderate and small sample size $N$ (see \Cref{fig:GT_1}) and well beyond the $M=N\log N$ extrapolation limit. However, when the vast majority of variants' are very rare, the estimator struggles to produce reliable results (see \Cref{fig:GT_2}). In all our experiments, we consider the two different smoothing choices suggested in \citet{chakraborty2019somatic} (GT 1 corresponds to \Cref{eq:GT_1} and GT 2 corresponds to \Cref{eq:GT_2} in \Cref{sec:app-gt}).

 \begin{figure}
      \centering \includegraphics[width=\textwidth,height=\textheight,keepaspectratio]{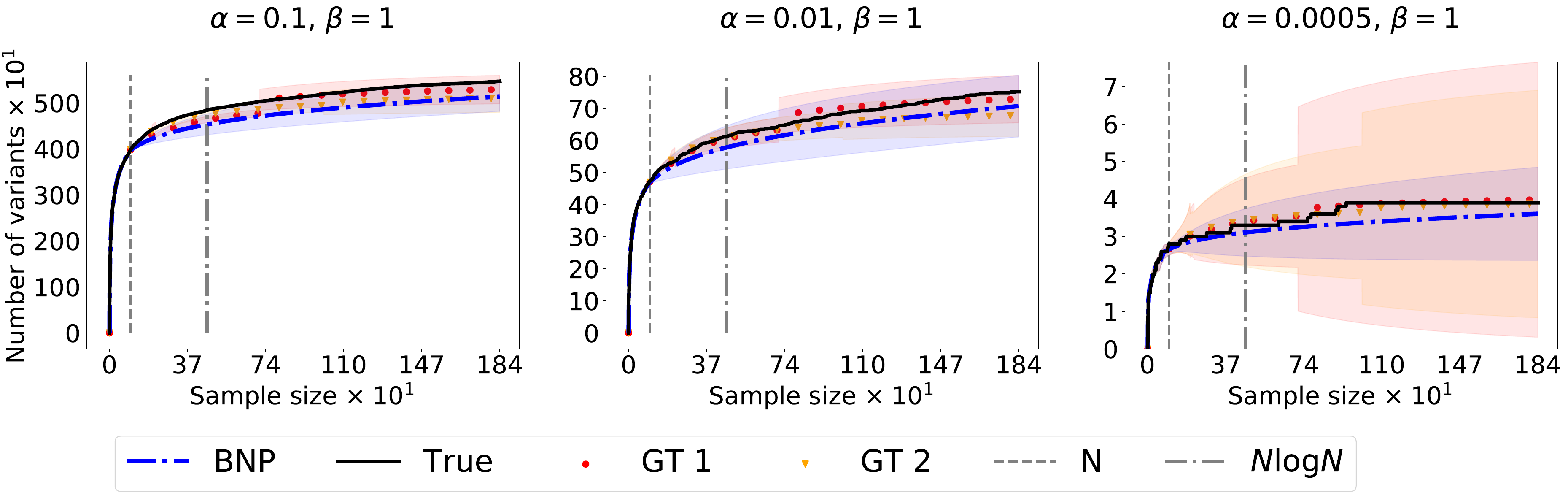}
      \caption{Prediction of the number of yet unseen variants for three synthetic datasets. Each dataset is generated as follows: we first draw a random vector $\bm{\theta}$ of dimension $K = 10^4$. The $K$ coordinates are independently and identically distributed draws from a beta distribution. Conditionally on $\bm{\theta}$, we draw a random matrix $\bm{X}$ with $N=2000$ rows and $K$ columns. The $(n,j)$-th entry  $x_{n,j}$ is Bernoulli distributed with mean $\theta_j$, so that the columns of $\bm{X}$ are independently and identically distributed. We retain the first $N=400$ rows as training set and obtain the two estimators. We project up to $N+M=2000$ observations. We repeat the procedure over ten distinct folds of the data of the same size $N$ to produce estimates of the prediction error. This estimate of the error is displayed by plotting one empirical standard deviation across the ten predicted values across the different folds, for each extrapolation value $\ell = 401,\dots,2000$. From left to right, we vary the first shape parameter of the beta distribution $\alpha ~\in~ \{10^{-1}, 10^{-2}, 10^{-2}\times 2^{-1} \}$, driving the mean of the distribution to zero, while keeping the second parameter $\beta = 1$ fixed.} 
\label{fig:GT_1}
\end{figure}

 \begin{figure}
      \centering \includegraphics[width=\textwidth,height=\textheight,keepaspectratio]{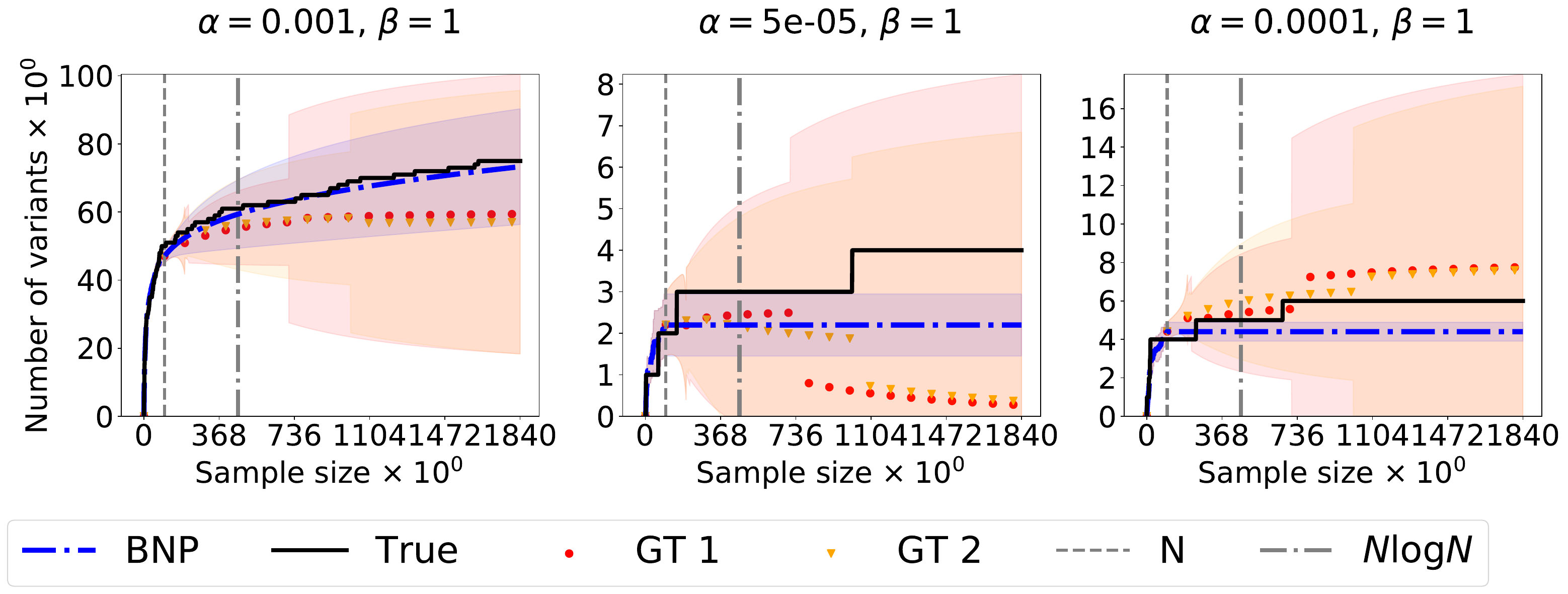}
      \caption{Prediction of the number of yet unseen variants for three synthetic datasets. Each dataset is generated as follows: we first draw a random vector $\bm{\theta}$ of dimension $K = 10^4$. The $K$ coordinates are independently and identically distributed draws from a beta distribution. Conditionally on $\bm{\theta}$, we draw a random matrix $\bm{X}$ with $N=2000$ rows and $K$ columns. The $(n,j)$-th entry  $x_{n,j}$ is Bernoulli distributed with mean $\theta_j$, so that the columns of $\bm{X}$ are independently and identically distributed. We retain the first $N=400$ rows as training set and obtain the two estimators. We project up to $N+M=2000$ observations. We repeat the procedure over ten distinct folds of the data of the same size $N$ to produce estimates of the prediction error. This estimate of the error is displayed by plotting one empirical standard deviation across the ten predicted values across the different folds, for each extrapolation value $\ell = 401,\dots,2000$. From left to right, we vary the first shape parameter of the beta distribution $\alpha ~\in~ \{10^{-3}, 5\times 10^{-5}, 10^{-4}\}$, driving the mean of the distribution to zero, while keeping the second parameter $\beta = 1$ fixed.} 
\label{fig:GT_2}
\end{figure}

We also considered the case in which the variants follow a power law distribution. In this case, we find that when the exponent of the power law is not too large (in absolute  value), then the estimator performs well. However, when the absolute value of the exponent satisfies $|\xi| > 1$, the estimator systematically underestimates the number of new variants to be seen, in the same way observed for real data (see \Cref{fig:GT_3}). Importantly, we verify that the empirical distribution of variants' frequencies across the datasets considered is indeed explained by power laws with exponent $|\xi|>1$.

 \begin{figure}
      \centering \includegraphics[width=\textwidth,height=\textheight,keepaspectratio]{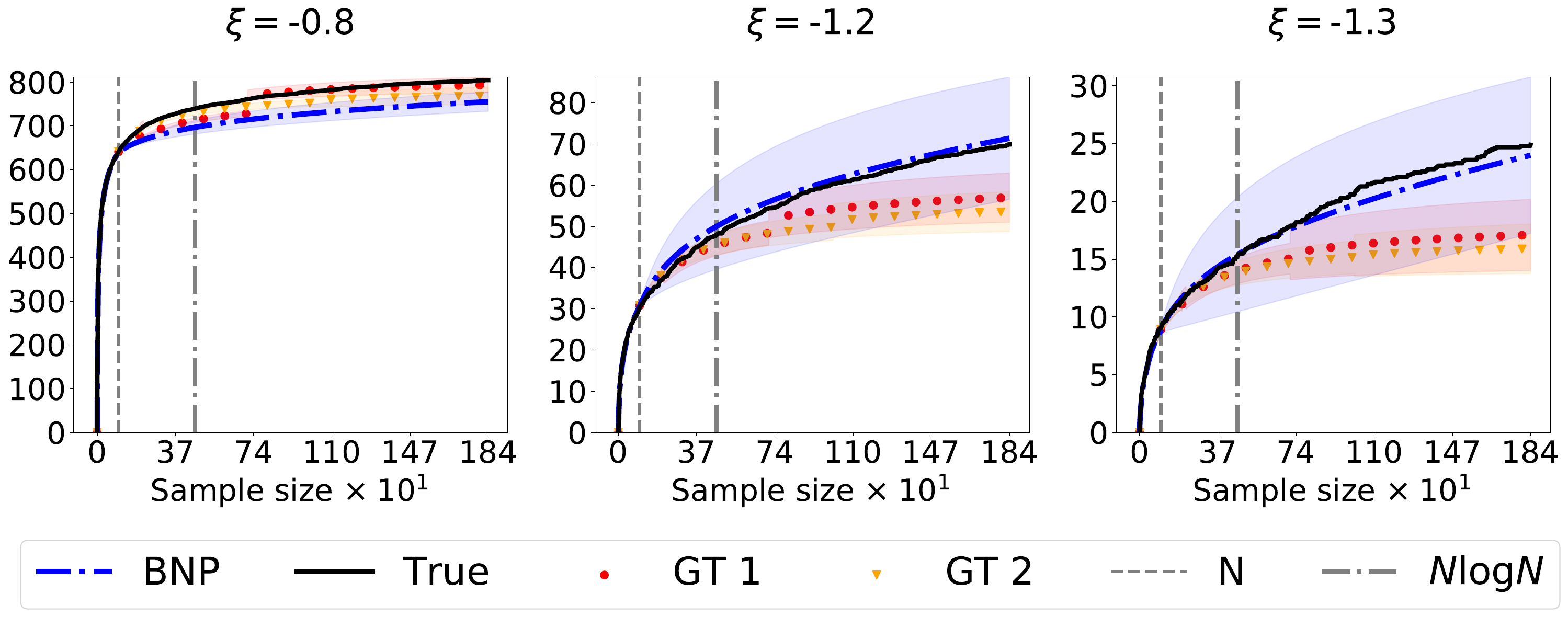}
      \caption{Performance of the smoothed Good-Toulmin predictor proposed by \citet{orlitsky2016optimal} versus the nonparametric Bayesian predictor (dotted blue line) on three different datasets (each panel represents a different dataset). Each dataset is generated as follows: we first draw a random vector $\bm{\theta}$ of dimensions $K = 10^4$. The $K$ coordinates are independently and identically distributed drawn from a power law distribution as described in \Cref{eq:power_law}. Conditionally on $\bm{\theta}$, we draw a random matrix $\bm{X}$ with $N=2000$ rows and $K$ columns. The $(n,j)$-th entry  $x_{n,j}$ is Bernoulli distributed with mean $\theta_j$, so that the columns of $\bm{X}$ are independently and identically distributed. We retain the first $N=400$ rows as training set and obtain the two predictors. We project up to $N+M=2000$ observations. We repeat the procedure over ten folds of the same data  of same size $N=400$. We repeat the procedure over ten folds of the data to produce estimates of the prediction error. This estimate of the error is displayed by plotting one empirical standard deviation across the ten predicted values across the different folds, for each extrapolation value $\ell = 401,\dots,2000$. From left to right, we vary the exponent of the power law distribution (left, $\xi = 1.05$, center, $\xi = 1.2$, right $\xi = 1.3$).} 
\label{fig:GT_3}
\end{figure}

 \begin{figure}
      \centering \includegraphics[width=\textwidth,height=\textheight,keepaspectratio]{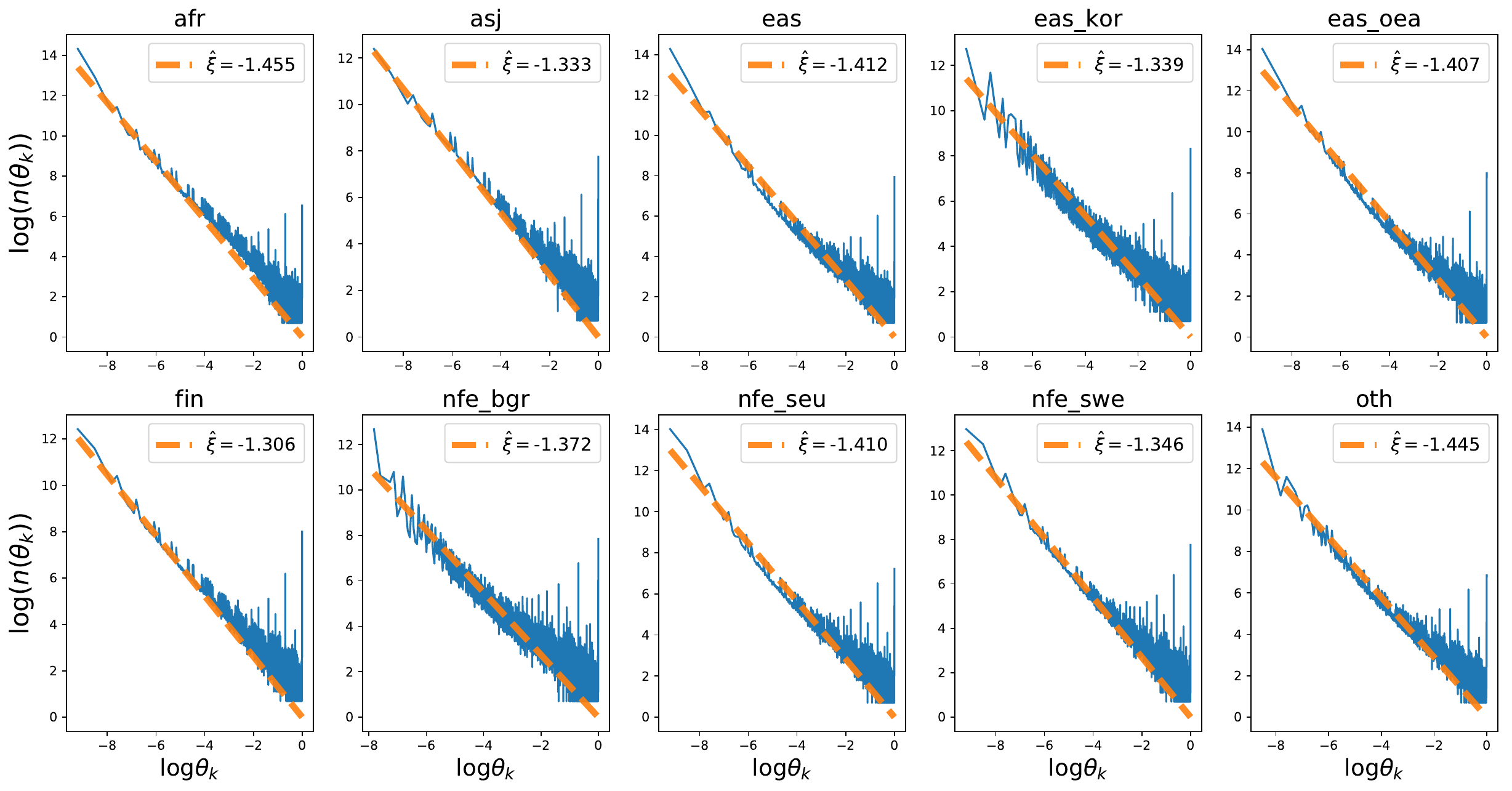}
      \caption{Fitting a power-law distribution to the empirical variants' frequencies of the gnomAD dataset. Following the method outlined in \citet{goldstein2004problems}, we fit a linear regression to the log-log plot of the binned empirical variants' frequencies to determine the exponent of the power law distribution. We only consider the 20 rarest frequencies, as  suggested in \citet{goldstein2004problems}. In all the datasets considered, we find that the estimate of  $\xi$ is larger than $1$, the  regime in which the smoothed Good-Toulmin provides systematic underestimation even on synthetic data.} 
\label{fig:GT_4}
\end{figure}


\end{document}